\title{Kernelization for Counting Problems on Graphs: Preserving the Number of Minimum Solutions}
\titlerunning{Kernelization for Counting Problems on Graphs}
\author{Bart M.\,P. Jansen}{Eindhoven University of Technology, The Netherlands}{b.m.p.jansen@tue.nl}{https://orcid.org/0000-0001-8204-1268}{\flag[3cm]{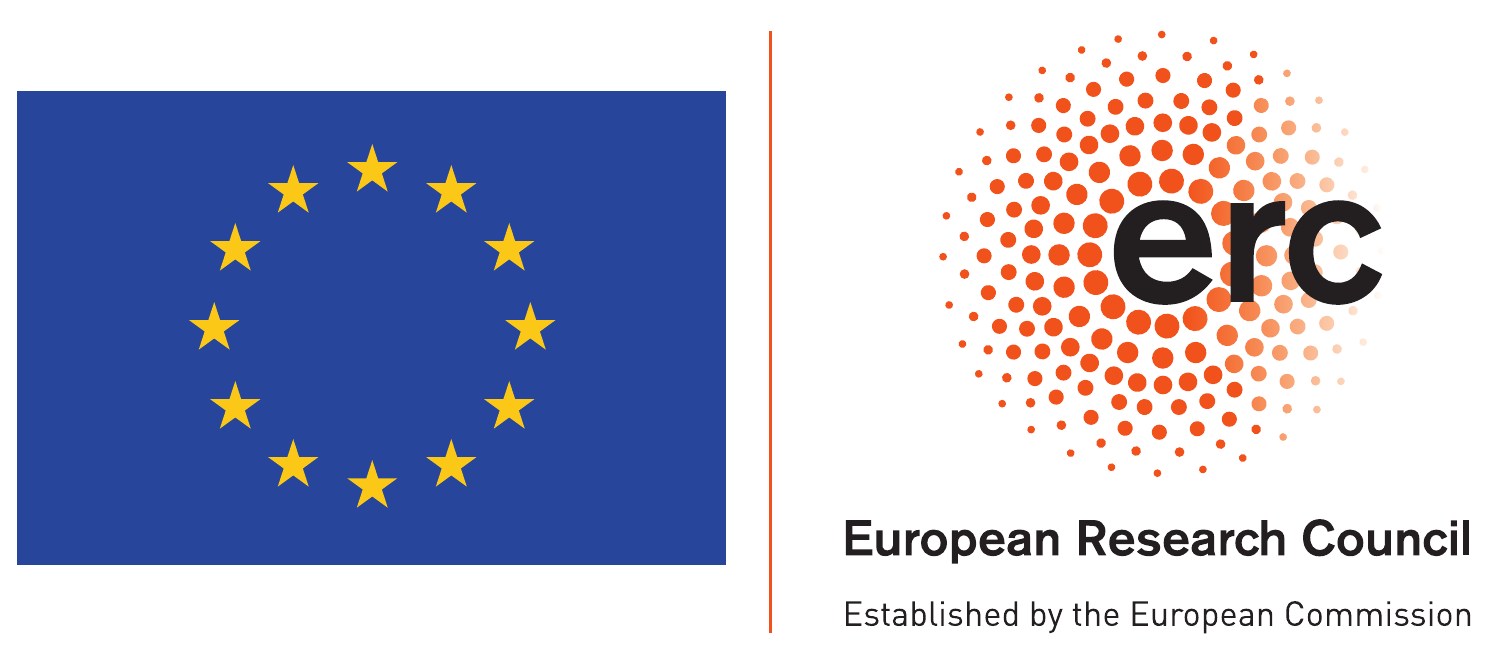}Funded by the European Research Council (ERC) under the European Union's Horizon 2020 research and innovation programme (grant agreement No 803421, ReduceSearch).}
\author{Bart van der Steenhoven}{Eindhoven University of Technology, The Netherlands}{b.j.v.d.steenhoven@student.tue.nl}{}{Supported by Project No. ICT22-029 of the Vienna Science Foundation (WWTF).}
\authorrunning{B.M.P.~Jansen and B.~v.d.~Steenhoven}
\keywords{kernelization, counting problems, feedback vertex set, dominating set, protrusion decomposition}
\newcommand{\bO}{\mathcal{O}}
\newcommand{\poly}{\mathrm{poly}}
\newcommand{\cmfvs}{\mathrm{\#minFVS}}
\newcommand{\cmds}{\mathrm{\#minDS}}
\newcommand{\tw}{\mathrm{tw}}
\newcommand{\bin}{\mathrm{bin}}
\begin{document}

\maketitle

\begin{abstract}
A kernelization for a parameterized decision problem~$\mathcal{Q}$ is a polynomial-time preprocessing algorithm that reduces any parameterized instance~$(x,k)$ into an instance~$(x',k')$ whose size is bounded by a function of~$k$ alone and which has the same \textsc{yes}/\textsc{no} answer for~$\mathcal{Q}$. Such preprocessing algorithms cannot exist in the context of counting problems, when the answer to be preserved is the number of solutions, since this number can be arbitrarily large compared to~$k$. However, we show that for counting minimum feedback vertex sets of size at most~$k$, and for counting minimum dominating sets of size at most~$k$ in a planar graph, there is a polynomial-time algorithm that either outputs the answer or reduces to an instance~$(G',k')$ of size polynomial in~$k$ with the same number of minimum solutions. This shows that a meaningful theory of kernelization for counting problems is possible and opens the door for future developments. Our algorithms exploit that if the number of solutions exceeds~$2^{\poly(k)}$, the size of the input is exponential in terms of~$k$ so that the running time of a parameterized counting algorithm can be bounded by~$\poly(n)$. Otherwise, we can use gadgets that slightly increase~$k$ to represent choices among~$2^{\bO(k)}$ options by only~$\poly(k)$ vertices.
\end{abstract}

\section{Introduction}
\subparagraph{Background and motivation.} Counting problems, whose answer is an integer giving the number of objects of a certain kind rather than merely \textsc{yes} or \textsc{no}, have important applications in fields of research such as artificial intelligence~\cite{dempster-rule-orponen, approximate-reasoning-roth}, statistical physics~\cite{ising-model-jerrum, glauber-dynamics-luby, theo-physics-harary} and network science~\cite{network-motifs-ron}. They have been studied extensively in classical complexity, underpinning fundamental results such as Toda's theorem~\cite{Toda91} and the \#P-completeness of the permanent~\cite{Valiant79}. A substantial research effort has targeted the parameterized complexity of counting problems, leading to parametric complexity-notions like \#W[1]-hardness~\cite{Curticapean16,FlumG04,McCarthin06} and FPT algorithms to solve several counting problems. For example, FPT algorithms were developed to count the number of size-$k$ vertex covers~\cite{Fernau05}, or the number of occurrences of a size-$k$ pattern graph~$H$ in a host graph~$G$~\cite{CurticapeanDM17}. 

This paper is concerned with an aspect of parameterized algorithms which has been largely neglected for counting problems: that of efficient preprocessing with performance guarantees, i.e., kernelization. A kernelization for a parameterized decision problem~$\mathcal{Q}$ is a polynomial-time preprocessing algorithm that reduces any parameterized instance~$(x,k)$ into an instance~$(x',k')$ whose size is bounded by a function of~$k$ alone and which has the same \textsc{yes}/\textsc{no} answer for~$\mathcal{Q}$. Over the last decade, kernelization has developed into an important subfield of parameterized algorithms, as documented in a textbook dedicated to the topic~\cite{FominLSZ19}. Given the success of kernelization for decision problems, one may wonder: can a theory of provably-efficient preprocessing for counting problems be developed?

Consider a prototypical problem such as \textsc{Feedback Vertex Set} in undirected graphs, in which the goal is to find a small vertex set whose removal breaks all cycles. What could be an appropriate notion of counting kernelization for such a problem? The concept of efficient preprocessing towards a provably small instance with the same answer could be instantiated as follows: given a pair~$(G,k)$, the preprocessing algorithm should output a pair~$(G',k')$ whose size is bounded by a function of~$k$ such that the number of size-$k$ feedback vertex sets in~$G$ is equal to the number of size-$k'$ feedback vertex sets in~$G'$. However, this task is clearly impossible. Given a graph consisting of a length-$n$ cycle with parameter~$k=1$, the number of solutions is~$n$ which can be arbitrarily large compared to~$k$, while for any reduced instance~$(G',k')$ of size bounded in~$k$, the number of solutions can be at most~$2^{|V(G')|} \leq f(k)$. Without allowing the size of the reduced instance to depend on~$n$, it seems that preprocessing while preserving the answer to the counting problem is impossible. 

Over the years, there have been two approaches to deal with this obstacle\footnote{A third~\cite{LokshtanovMSZ23} approach was announced shortly before this paper went to print; it allows a polynomial-time lifting step to compute the number of solutions to the original instance from the number of solutions to the reduced instance.}. Thurley~\cite{Thurley07}, inspired by concepts in earlier work~\cite{NishimuraRT05}, proposed a notion of counting kernelization which effectively reduces a counting problem to an enumeration problem. He considered problems such as \textsc{Vertex Cover} and \textsc{$d$-Hitting Set}. In his framework, the preprocessing algorithm has to output an instance of size bounded by a function of~$k$, in such a way that for any solution to the reduced instance, we can efficiently determine to how many solutions of the original instance it corresponds. Hence by enumerating \emph{all} solutions on the reduced instance, we can obtain the number of solutions to the original instance. A significant drawback of this approach therefore lies in the fact that to solve the counting problem on the original instance, we have to enumerate all solutions on the reduced instance. Since counting can potentially be done much faster than enumeration, it is not clear that this preprocessing step is always beneficial.

A second notion for counting kernelization was proposed by Kim, Selma, and Thilikos~\cite{KimST18,Thilikos21}. Their framework (which also applies to \textsc{Feedback Vertex Set}) considers two types of algorithms: a \emph{condenser} that maps an input instance~$(G,k)$ to an instance~$(G',k')$ of an auxiliary \emph{annotated} problem involving weights on the vertices of~$G'$, and an \emph{extractor} that recovers (typically not in polynomial time) the number of solutions to~$(G,k)$ from the weighted instance~$(G',k')$. The number of vertices of~$G'$ is required to be bounded in~$k$, but the weights are allowed to be arbitrarily large, thereby sidestepping the issue described above. This means that in terms of the total encoding size, the weighted graph~$(G',k')$ is not guaranteed to be smaller than~$(G,k)$ and in general the total number of bits needed to encode the weighted graph cannot be bounded by a function of~$k$ alone. The condenser-extractor framework has the same drawback as the framework by Thurley: a standard counting problem is reduced to a more complicated type of problem, in this case one involving weights and annotations.

The goal of this paper is to show that there is an alternative way to overcome the obstacle for counting kernelization, which leads to a notion of preprocessing in which the problem to be solved on the reduced instance is of exactly the same nature as the original. Our solution is inspired by the typical behavior of kernelization algorithms for decision problems: we formalize the option of already finding the answer during the preprocessing phase. Note that many algorithms, such as the famous Buss~\cite{BussG93} kernelization for \textsc{Vertex Cover}, work by applying reduction rules to arrive at the reduced instance, or discover the \textsc{yes}/\textsc{no} answer to the decision problem during preprocessing. Our kernelization algorithms for counting problems will have the same behavior: they will either reduce to a $\poly(k)$-sized instance of the same problem that has exactly the same answer to the counting problem, or they outright answer the counting problem during their polynomial-time computation. To our initial surprise, such preprocessing algorithms exist for several classic problems.

% Q: Do all counting problems which are FPT have an (exponential-size) kernel of this form?
% Suppose you have a counting problem that is FPT in time f(k)n^c. If you take an input instance, you run the FPT algorithm on it for n^{c+1} steps. If it did not finish, then you know:
% f(k) n^c > n^{c+1} so f(k) > n, so the input size is bounded and you output a bounded-in-k size instance of the same problem with the same answer.
% Otherwise, you run the FPT algorithm to get the answer and you output that.

\subparagraph{Our results.} To begin the exploration of this new type of counting kernelization, we revisit two prominent graph problems: \textsc{Feedback Vertex Set} in general undirected graphs and \textsc{Dominating Set} in planar graphs. The decision versions of these problems (does graph~$G$ have a solution of size at most~$k$?) have kernels with $\bO(k^2)$~\cite{Iwata17,Thomasse10} and~$\bO(k)$ vertices~\cite{AlberFN04,BodlaenderFLPST16,FominLST20}, respectively. We consider the problem of counting the number of \emph{minimum-size} solutions, parameterized by the size~$k$ of a minimum solution. (We discuss counting inclusion-minimal solutions in the conclusion.)  For a graph~$G$ and integer~$k$, we denote by $\cmfvs(G, k)$ the number of minimum feedback vertex sets in~$G$ of size at most~$k$ in~$G$. Hence~$\cmfvs(G, k)$ is equal to~$0$ if the feedback vertex number of~$G$ exceeds~$k$, and otherwise is equal to the number of minimum solutions. The analogous concept for minimum dominating sets is denoted~$\cmds(G, k)$. Our result for \textsc{Feedback Vertex Set} reads as follows.

\begin{restatable}{theorem}{fvsKernelTheorem}\label{thm:cfvs}
There is a polynomial-time algorithm that, given a graph $G$ and integer $k$, either
\begin{itemize}
    \item outputs $\cmfvs(G, k)$, or
    \item outputs a graph $G'$ and integer $k'$ such that $\cmfvs(G, k) = \cmfvs(G', k')$ and $|V(G')| = \bO(k^5)$ and $k' = \bO(k^5)$.
\end{itemize}
\end{restatable}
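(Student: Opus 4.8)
The plan is to split on how $k$ compares with $\log n$. Counting minimum feedback vertex sets admits an FPT algorithm running in time $c^k\cdot\poly(n)$ for a fixed constant $c$ (obtained by instrumenting a branching or iterative-compression feedback-vertex-set algorithm to propagate counts). If $k\le\log_c n$ this runs in polynomial time, so the algorithm simply outputs $\cmfvs(G,k)$ and halts; this disposes of the case where the answer is astronomically large but there is enough room in $n$ to afford computing it. Henceforth we may assume $n<c^k$, i.e.\ $n=2^{\bO(k)}$.

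In this regime the goal is a genuinely count-preserving kernelization. Note first that the answer is small in bitsize: if $\cmfvs(G,k)\neq 0$ then $G$ has a feedback vertex set of size at most $k$, so $\cmfvs(G,k)\le n^{\bO(k)}=2^{\bO(k^2)}$, a number of $\bO(k^2)$ bits. I would then apply polynomial-time, \emph{count-safe} reduction rules: isolated and degree-$1$ vertices belong to no minimum solution and are deleted; long induced paths and ``necklaces'' of degree-$2$ vertices are shortened while recording a bounded bookkeeping correction (this is the delicate part, since a degree-$2$ vertex on a short cycle can itself be a solution vertex, so naive suppression is not count-safe). Augmenting this with a flower/$q$-expansion argument in the spirit of Thomass\'e's kernel, re-derived so that every rule preserves $\cmfvs$, produces a protrusion decomposition of the reduced graph: a core $Y$ with $|Y|=\poly(k)$ such that $G-Y$ is a disjoint union of $\poly(k)$ protrusions $P_1,\dots,P_t$, each of constant treewidth with at most $\bO(1)$ neighbours in $Y$.

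The core of the argument is to replace each (possibly enormous) protrusion $P_i$ by a small gadget, without ever computing the global count. Since $P_i$ has bounded treewidth and bounded boundary $B_i=N(P_i)\cap Y$, a partial solution inside $P_i$ is classified by one of $\bO(1)$ finite-state behaviours $\tau$ (which boundary vertices it deletes and how the leftover forest reconnects $B_i$) together with its size. Using treewidth dynamic programming in $\poly(n)$ time, I would compute, for each behaviour $\tau$ and each offset $j\in\{0,\dots,k\}$, the number $N_{i,\tau,j}\le 2^{\bO(k^2)}$ of partial solutions of $P_i$ with behaviour $\tau$ contributing exactly $j$ vertices. I then construct a replacement gadget $P_i'$ on $\poly(k)$ vertices, attached to $B_i$, whose partial solutions realise precisely these multiplicities, up to a uniform additive shift of solution sizes that is absorbed by increasing $k'$. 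The basic building block is a \emph{count-realization gadget}: for any target $m$ of $\bO(k^2)$ bits, build with $\bO(\log m)=\bO(k^2)$ vertices and feedback vertex number at most $\bO(k^2)$ a graph that has exactly $m$ minimum feedback vertex sets, all avoiding a designated attachment vertex --- e.g.\ by taking disjoint copies of factor-$2$ ``choice'' gadgets such as $K_{2,3}$ (which has exactly two minimum solutions, both of size one) to produce the powers of two in the binary expansion of $m$, then wiring these summands together with an ``addition gadget''. Hooking up one such gadget per relevant pair $(\tau,j)$ through small connectors that enforce the intended boundary behaviour gives $P_i'$. Carrying this out for all $t=\poly(k)$ protrusions yields $|V(G')|=\poly(k)+t\cdot\poly(k)=\bO(k^5)$ and $k'=k+t\cdot\poly(k)=\bO(k^5)$, with $\cmfvs(G',k')=\cmfvs(G,k)$ by construction.

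I expect the gadget step to be the main obstacle. Realizing a prescribed count $m$ with only $\bO(\log m)$ vertices while keeping the feedback vertex number logarithmic is already nontrivial --- disjoint-triangle gadgets only reach numbers of the form $3^a$, so a genuine ``arithmetic'' construction with an addition mechanism is required --- and on top of that the gadget must mimic the protrusion's interaction with $Y$ exactly: it must reproduce every behaviour $\tau$ with the correct multiplicity $N_{i,\tau,j}$ at the correct size offset, create no spurious minimum solutions, and in particular never open up a cheaper global solution that routes through a boundary vertex. A secondary difficulty is certifying count-safety of the whole preprocessing pipeline, especially the degree-$2$ and flower reductions, so that the $\poly(k)$-piece protrusion decomposition underpinning the gadget surgery is actually available.
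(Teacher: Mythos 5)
Your high-level strategy matches the paper's: dispose of the case $n>2^{\Theta(k)}$ by running an FPT counting algorithm in $\poly(n)$ time, and otherwise replace each large simply-structured part by a $\poly(k)$-vertex gadget at the cost of increasing $k'$. However, there are two genuine gaps in the middle and at the heart of your argument.

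First, the \emph{count-realization gadget} you need --- a graph on $\bO(\log m)$ vertices with exactly $m$ minimum feedback vertex sets for an arbitrary $m$ of $\bO(k^2)$ bits, together with an ``addition gadget'' wiring the binary summands, and further connectors enforcing each boundary behaviour $\tau$ at each size offset $j$ with multiplicity $N_{i,\tau,j}$ --- is exactly the step you do not construct, and it is the crux of the whole theorem. The paper never solves this general problem. It exploits the fact that after a count-safe variant of the $\bO(k^3)$ decision kernel, the only large parts are \emph{chains} of degree-$2$ vertices ($\bO(k^3)$ of them, with $|V_{\neq 2}(G')|=\bO(k^3)$), and a minimum FVS contains \emph{at most one vertex per chain}. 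Hence the only count to realize per chain is its length $\ell\le 2^k$ (an $\bO(k)$-bit number, not $\bO(k^2)$), and ``addition'' comes for free: the chain is split in series into subpaths of lengths given by the binary expansion of $\ell$, each subpath of length $2^p$ is replaced by a hub vertex $w$ plus $p$ double-edge pairs $(a_i,b_i)$ encoding in binary which of the $2^p$ vertices was chosen, and since a minimum solution picks a vertex from at most one subpath the counts of the serial segments simply add. Your general behaviour-table gadget would have to handle partial solutions of up to $k$ vertices inside a protrusion, with counts up to $2^{\bO(k^2)}$, no spurious minimum solutions, and correct size offsets per behaviour; nothing in your sketch shows this is achievable, and you yourself flag it as the main obstacle.

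Second, your route to the decomposition is risky. You propose re-deriving Thomass\'e-style flower/$q$-expansion rules in a count-safe way; the paper explicitly avoids the Thomass\'e and Iwata kernels precisely because several of their rules are \emph{not} safe for counting minimum solutions, and instead uses a single count-safe degree-reduction rule (for each $v$ in a $4$-approximate FVS $X$ and each $u$ in a $4$-approximate FVS $Y_v$ avoiding $v$, mark $k+2$ trees of $G-(Y_v\cup\{v\})$ adjacent to both and delete the remaining $v$-edges) to bound $\sum_{v\in X}\deg(v)=\bO(k^3)$ and hence the number of non-degree-$2$ vertices and chains. Relatedly, your suggestion to shorten degree-$2$ paths ``while recording a bounded bookkeeping correction'' is not admissible in this model: the output must be an instance $(G',k')$ with $\cmfvs(G',k')=\cmfvs(G,k)$ exactly, with no side information, which is precisely why the gadget must increase $k'$ rather than record a correction.
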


\vspace{0.2cm}

For \textsc{Dominating Set} on planar graphs, we give an analogous algorithm that either outputs~$\cmds(G,k)$ or reduces to a \emph{planar} instance~$(G',k')$ with~$|V(G')|, k' = \bO(k^3)$ such that~$\cmds(G,k)=\cmds(G',k')$. Hence if the parameter is small, the task of counting the number of minimum solutions can efficiently be reduced to the \emph{same} counting task on a provably small instance.

The high-level approach is the same for both problems. We use insights from existing kernels for the decision version of the problem to reduce an input instance~$(G,k)$ into one~$(G',k')$ with the same number of minimum solutions, such that~$G'$ can be decomposed into a ``small'' core together with~$\poly(k)$ ``simply structured but potentially large'' parts. For \textsc{Dominating Set}, this takes the form of a protrusion decomposition; for \textsc{Feedback Vertex Set} the decomposition is more elementary. Then we consider two cases. If~$|V(G)| > 2^k$, we employ an FPT algorithm running in time~$2^{\bO(k)} \cdot \poly(n)$ to count the number of minimum solutions and output it. Since~$n > 2^k$, this step runs in polynomial time. If~$|V(G)| \leq 2^k$, then we show that each of the~$\poly(k)$ simply structured parts can be replaced with a gadget of size~$\poly(k)$ without affecting the number of minimum solutions. In this step, we typically increase the size of minimum solutions slightly to allow a small vertex set to encode exponentially many potential solutions. For example, an instance of \textsc{Feedback Vertex Set} consisting of a cycle of length~$2^{10}$ (which has~$2^{10}$ different optimal solutions), can be reduced to the graph consisting of~$10$ pairs~$(a_i, b_i)$, each pair connected by two parallel edges. The latter graph also has~$2^{10}$ minimum solutions, each of size~$10$. To carry out this approach, the most technical part is to show how to decompose the input instance into parts in which it is easy to analyze how many different choices an optimal solution can make.

\subparagraph{Organization.} The remainder of the paper is structured as follows. After presenting preliminaries in~\cref{sec:preliminaries}, we illustrate our approach for \textsc{Feedback Vertex Set} in \cref{sec:cfvs}. The more technical application to \textsc{Dominating Set} on planar graphs is given in \cref{sec:cpds}. We conclude in \cref{sec:conclusion} with a reflection on the potential of this approach to counting kernelization. %We invite readers who are interested in obtaining a 10-page impression of our work to skip \cref{sec:cpds}, which follows a similar path as \cref{sec:cfvs} but is significantly more technical.

% Proofs of statements marked $(\bigstar)$ have been deferred to the appendix to optimize the flow of the argumentation.

\section{Preliminaries} \label{sec:preliminaries}

\subsection{Graphs}
All graphs we consider are undirected; they may have parallel edges but no self-loops. A graph $G$ therefore consists of a set $V(G)$ of vertices and a multiset $E(G)$ of edges of the form $\{u,v\}$ for distinct $u,v \in V(G)$.
For a vertex $v \in V(G)$, we refer to the \emph{open neighborhood} of $v$ in $G$ as $N_G(v)$ and to the \emph{closed neighborhood} of $v$ as $N_G[v]$. For a set of vertices $X \subseteq V(G)$, the open and closed neighborhoods are defined as $N_G(X) = (\bigcup_{v \in X} N_G(v)) \setminus X$ and $N_G[X] = \bigcup_{v \in X} N_G[v]$.
The degree of vertex $v$ in graph $G$, denoted by $\deg_G(v)$, is equal to the number of edges incident to $v$ in $G$. We refer to the subgraph of $G$ induced by a vertex set $X \subseteq V(G)$ as $G[X]$. We use $G-X$ as a way to write $G[V(G) \setminus X]$ and $G - v$ as a shorthand for $G - \{v\}$. A graph $H$ is a \emph{minor} of $G$ if $H$ can be formed by contracting edges of a subgraph of $G$. 

A graph is \emph{planar} if it can be embedded in the plane in such a way that its edges intersect only at their endpoints. Such an embedding is called a \emph{planar embedding} of $G$. We make use of the following two properties of planar graphs. 

\begin{theorem}[Wagner's theorem] \label{thm:wagner}
A graph $G$ is planar if and only if $G$ contains neither $K_5$ nor $K_{3,3}$ as a minor. 
\end{theorem}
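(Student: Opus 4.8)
The plan is to derive Wagner's theorem from \emph{Kuratowski's theorem}, which gives the analogous characterization through forbidden \emph{topological} minors: a graph is planar if and only if it contains no subdivision of $K_5$ and no subdivision of $K_{3,3}$. For the forward implication I would argue that a planar graph $G$ has neither $K_5$ nor $K_{3,3}$ as a minor, using two ingredients. First, planarity is closed under taking minors: deleting an edge or an isolated vertex preserves a planar embedding, and an edge contraction can be realized by sliding one endpoint of the contracted edge along its curve onto the other endpoint, carrying the incident edges along. Second, neither $K_5$ nor $K_{3,3}$ is planar, by Euler's formula $|V| - |E| + |F| = 2$ for connected plane graphs: double counting edge--face incidences gives $|E| \le 3|V| - 6$ for any simple plane graph with $|V| \ge 3$, which $K_5$ violates by having $10$ edges while only $3 \cdot 5 - 6 = 9$ are permitted; and since the bipartite graph $K_{3,3}$ has all face boundaries of length at least $4$, it would have to satisfy the stronger bound $|E| \le 2|V| - 4 = 8$, contradicting its $9$ edges. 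Hence a planar graph has no non-planar minor, in particular no $K_5$- or $K_{3,3}$-minor.

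For the reverse implication, Kuratowski's theorem does almost all of the work: if $G$ is non-planar it contains a subdivision of $K_5$ or of $K_{3,3}$, and a subdivision of $H$ contains $H$ as a minor (contract each subdivided path back to a single edge), so $G$ has a $K_5$- or $K_{3,3}$-minor; contraposition yields the implication we want. If instead Kuratowski's theorem may not be assumed, I would prove this direction directly by induction on $|E(G)|$, taking a minimal counterexample $G$ that is non-planar yet has no $K_5$- or $K_{3,3}$-minor. One first shows $G$ is $3$-connected: a separation of order at most $2$ splits $G$ into strictly smaller pieces which are planar by minimality and whose embeddings can be glued along the at most two separating vertices after rerouting them onto a common face (the connectivity $0$, $1$, $2$ cases are treated in this way), contradicting non-planarity. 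Then one invokes the classical fact that every $3$-connected graph on at least $5$ vertices has an edge $e$ that can be contracted while keeping $3$-connectivity; now $G/e$ is a smaller $3$-connected graph with no forbidden minor — any $K_5$- or $K_{3,3}$-minor of $G/e$ lifts to one of $G$ — hence $G/e$ is planar. Fixing a planar embedding of $G/e$ and examining the cyclic order of the neighbors of the vertex $v_e$ of $G/e$ arising from contracting $e$, around the faces incident to $v_e$, one shows that either the embedding can be split to re-expand $e$, making $G$ planar, or the obstruction to doing so exhibits a $K_5$- or $K_{3,3}$-minor in $G$; either way we reach a contradiction.

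The main obstacle lies entirely in the reverse direction once it is stated self-containedly: establishing $3$-connectivity of a minimal counterexample (together with the gluing arguments for low connectivity), the existence of a contractible edge, and above all the embedding case analysis that converts ``$G/e$ is planar but $e$ cannot be re-expanded'' into an explicit forbidden minor of $G$. Equivalently, all of the difficulty is displaced into proving Kuratowski's theorem, whose topological-minor version rests on essentially the same machinery; by contrast the forward direction and the reduction modulo Kuratowski are routine.
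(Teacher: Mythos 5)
The paper does not prove this statement: Wagner's theorem is quoted in the preliminaries as a classical result, with no argument given, so there is nothing in the paper to compare your proof against. Your outline is the standard textbook treatment and is sound as a plan: the forward direction (minor-closedness of planarity plus the Euler-formula counts $|E|\le 3|V|-6$ and $|E|\le 2|V|-4$ ruling out $K_5$ and $K_{3,3}$) is complete as written, and the reverse direction correctly reduces to Kuratowski's theorem via the observation that a subdivision of $H$ yields an $H$-minor. Be aware, though, that your self-contained version of the reverse direction is only a roadmap, not a proof: the $3$-connectivity reduction, Tutte's contractible-edge lemma, and especially the case analysis showing that a failed re-expansion of $e$ in an embedding of $G/e$ produces a $K_5$- or $K_{3,3}$-minor are each nontrivial and are asserted rather than carried out. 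One further small point relevant to this paper: the graphs here may have parallel edges, so one should note that planarity and the presence of $K_5$- or $K_{3,3}$-minors are both unaffected by passing to the underlying simple graph, which lets the classical (simple-graph) statement apply verbatim.
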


\begin{lemma} [{\cite[Lemma 13.3]{kernel-book-fomin}}] \label{lem:planar-deg-3-bound}
Let $G$ be a planar graph, $C \subseteq V(G)$, and let $N_3$ be a set of vertices from $V(G) \setminus C$ such that every vertex from $N_3$ has at least three neighbors in $C$. Then, $|N_3| \leq \max \{ 0, 2|C|-4 \}$.      
\end{lemma}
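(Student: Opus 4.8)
The plan is to recast the statement as a bound on the number of edges of a bipartite planar graph and then read off the inequality from Euler's formula. First I would dispose of the degenerate cases. If $N_3 = \emptyset$ the bound is immediate, and if $|C| \leq 2$ then no vertex can have three distinct neighbors inside $C$, so again $N_3 = \emptyset$ while $\max\{0, 2|C| - 4\} = 0$; hence I may assume $|C| \geq 3$ and $|N_3| \geq 1$, so that the quantity $\max\{0, 2|C|-4\}$ equals $2|C|-4$.

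Next I would isolate the relevant structure by forming the auxiliary graph $H$ on vertex set $N_3 \cup C$ whose edges are exactly the pairs $\{v, c\}$ with $v \in N_3$ and $c \in C$ that are adjacent in $G$, keeping a single copy of each such pair and discarding all edges inside $C$, all edges inside $N_3$, and all parallel copies permitted by the paper's graph model. By construction $H$ is a \emph{simple, bipartite} graph with parts $N_3$ and $C$, and since it is obtained from $G$ only by deleting vertices and edges and by removing multiplicities, it is planar.

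The core of the argument is then a two-sided edge count. On the one hand, every $v \in N_3$ has at least three neighbors in $C$, so $\deg_H(v) \geq 3$ and therefore $|E(H)| \geq 3 |N_3|$. On the other hand, $H$ is a simple bipartite planar graph, so it contains no triangle and in any planar embedding every bounded face has length at least four; feeding this girth bound into Euler's formula (via $n - e + f = 2$ together with $2e \geq 4f$) yields the standard inequality $|E(H)| \leq 2\,|V(H)| - 4 = 2(|N_3| + |C|) - 4$. Combining the two bounds gives $3|N_3| \leq 2(|N_3| + |C|) - 4$, which simplifies to $|N_3| \leq 2|C| - 4$, exactly the claimed inequality in this case.

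The only step needing genuine care is the upper bound, for two reasons that the reduction above is designed to neutralize. The inequality $e \leq 2n - 4$ requires the graph to be \emph{simple} (hence the explicit removal of parallel edges) and is cleanest for \emph{connected} graphs on at least three vertices. To handle a possibly disconnected $H$, I would first delete its isolated vertices, which can only lie in $C$ and whose removal leaves $|E(H)|$ unchanged while decreasing $|V(H)|$; since this shrinks the right-hand side, it suffices to prove the bound after the cleanup. After it, every remaining component contains some $v \in N_3$ with $\deg_H(v) \geq 3$ and hence has at least four vertices, so the connected bipartite–planar bound applies to each component and the per-component inequalities $|E(H_i)| \leq 2|V(H_i)| - 4$ sum to $|E(H)| \leq 2|V(H)| - 4$. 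This per-component observation is the main point to verify; the rest is a routine double count.
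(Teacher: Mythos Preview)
The paper does not prove this lemma; it merely states it with a citation to the Fomin--Lokshtanov--Saurabh--Zehavi kernelization textbook. So there is no in-paper proof to compare against.

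Your argument is correct and is exactly the standard proof one finds in that reference: pass to the bipartite planar subgraph on $N_3 \cup C$, lower-bound the edge count by $3|N_3|$ via the degree hypothesis, upper-bound it by $2(|N_3|+|C|)-4$ via Euler's formula for triangle-free planar graphs, and subtract. Your handling of the boundary cases (small $|C|$, parallel edges, disconnected $H$, isolated vertices) is careful and sufficient; in particular the observation that after deleting isolates every component contains an $N_3$-vertex and hence has at least four vertices is what makes the per-component application of $e \leq 2n-4$ legitimate.
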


A \emph{feedback vertex set} of a graph $G$ is a set $S \subseteq V(G)$ such that $G-S$ is a forest, i.e., acyclic. The \emph{feedback vertex number} of a graph is the size of a smallest feedback vertex set of that graph. 
A \emph{dominating set} of a graph $G$ is a set $D \subseteq V(G)$ such that $N_G[D] = V(G)$. The \emph{domination number} of a graph is the size of a smallest dominating set of that graph. We say that a set $X \subseteq V(G)$ dominates $U \subseteq V(G)$ if $U \subseteq N_G[X]$.

We define $V_{\neq 2}(G)$ to be the set of vertices of graph $G$ that do not have degree two. 
We refer to a \emph{chain} $C$ of $G$ as a connected component of $G-V_{\neq 2}(G)$. We say that a chain $C$ is a \emph{proper chain} if $N_G(C) \neq \emptyset$ and we then refer to $N_G(C)$ as the \emph{endpoints} of $C$. 
% Such a chain either forms a cycle, in which case it is also a connected component of $G$, or has two edges to $V_{\neq 2}(G)$. We define the endpoints of $C$ as $N_G(C)$ in case it is non-empty or as an arbitrary vertex of $C$ if $N_G(C)$ is empty. For simplicity, we consider each chain to have two endpoints, where in some cases these two endpoints might be the same vertex. 

\subsection{Tree decompositions}

For completeness we give the definition of a tree decomposition, which is needed to formally define the notion of protrusion below.

\begin{definition}[Tree decomposition]
A tree decomposition of a graph $G$ is a pair $(T, \chi)$, where $T$ is a rooted tree and $\chi$ assigns a bag $\chi(w) \subseteq V(G)$ to every $w \in V(T)$ such that the following conditions are satisfied.
\begin{itemize}
    \item $\bigcup_{w \in V(T)} \chi(w) = V(G)$.
    \item For all $\{u,v\} \in E(G)$, there exists a node $w \in V(T)$ such that $\{u,v\} \subseteq \chi(w)$.
    \item For every $v \in V(G)$, the set $\{w \in V(T) \mid  v \in \chi(w) \}$ induces a non-empty connected subtree of $T$.
\end{itemize}
\end{definition}
The \emph{width} of a tree decomposition is $\max_{w\in V(T)} |\chi(w)| - 1$. The \emph{treewidth} of a graph $G$, denoted by $\tw(G)$, is the minimum possible width of a tree decomposition of $G$. For a node $w$ of $T$, we define $T_w$ to be the subtree of $T$ rooted at $w$. For a subtree $T'$ of $T$, we define $\chi(T') := \bigcup_{z \in V(T')} \chi(z)$.

\subsection{Protrusion decompositions}
The definitions concerning protrusion decompositions are based on those from \cite[\S 15]{kernel-book-fomin}.

\begin{definition}[Boundary]
 For a graph $G$ and a vertex subset $X \subseteq V(G)$, we refer to the boundary of $X$ as $\partial(X) = \{v \in X \mid N_G(v) \setminus X \neq \emptyset\}$.
\end{definition}

\begin{definition}[Protrusion]
For an integer $t >0$, a $t$-protrusion in a graph $G$ is a vertex set $X \subseteq V(G)$ such that $\tw(G[X]) \leq t$ and $|\partial(X)| \leq t$.
\end{definition}

\begin{definition}[Protrusion decomposition]
For integers $\alpha$, $\beta$ and $t$, an $(\alpha, \beta, t)$-protrusion decomposition of a graph $G$ is a tree decomposition $(T, \chi)$ of $G$ such that the following conditions are satisfied. 
\begin{itemize}
    \item The tree $T$ is a rooted tree with root $r$ and $|\chi(r)| \leq \alpha$.
    \item For every $v \in V(T)$ except $r$, we have $|\chi(v)| \leq t$.
    \item The root $r$ has degree at most $\beta$ in $T$.
\end{itemize}
\end{definition}
Following this definition, observe that for each child $w$ of $r$ in $T$, we have that $\chi(T_w)$ is a $t$-protrusion in $G$. For such a protrusion, we observe that $\partial(\chi(T_w)) \subseteq \chi(w)$ and we additionally define $\partial^*(\chi(T_w)) = \chi(w)$ to be the \emph{extended boundary} of protrusion $\chi(T_w)$.

\begin{definition}[Treewidth modulator]
    A set $S \subseteq V(G)$ is a treewidth-$\eta$-modulator in a graph $G$ if $\tw(G-S) \leq \eta$.
\end{definition}

\begin{lemma} \label{lem:mod-to-protrusiondecomp}
There is a polynomial-time algorithm that, given a planar graph $G$ and a treewidth-$\eta$-modulator $S$ in $G$, computes an
$(\alpha, \beta, t)$-protrusion decomposition $(T, \chi)$ such that the following conditions are satisfied.
\begin{itemize}
    \item For the root $r$ of $T$ we have $S \subseteq \chi(r)$.
    %\item For each child $w$ of $r$ in $T$, the set $P_w := \chi(T_w)$ is a $t$-protrusion in $G$ and ${|\partial(P_w) \cap S| \leq 2}$.
    \item For each child $w$ of $r$ in $T$, the set $\chi(T_w)$ is a $t$-protrusion in $G$ such that $\chi(T_w) \cap S \subseteq \partial^*(\chi(T_w))$ and ${|\partial^*(\chi(T_w)) \cap S| \leq 2}$.
    %\item $\alpha = (4(\eta + 1) + 1)|S|$, $\beta = (20(\eta + 1) + 5)|S|$ and $t = 3\eta + 2$.
    % There is a small bug in the proof of Lemma 15.14 in the book; at the end they invoke Lemma 13.3 to bound the number of sets X_i which have a neighborhood of size 2. However, Lemma 15.14 only applies to triangle-free planar graphs and the graph involved may have triangles. So you need to use a different bound on the number of edges in planar graphs, which is 3n (which holds for general planar graphs) instead of 2n (which only holds for triangle-free planar graphs). To be on the safe side, I just put big-Oh values for alpha and beta since that is all we need later on.
    \item $\alpha = \bO(\eta |S|)$, $\beta = \bO(\eta |S|)$ and $t = 3\eta + 2$.
\end{itemize}
\end{lemma}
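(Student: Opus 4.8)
The plan is to build the promised $(\alpha,\beta,t)$-protrusion decomposition $(T,\chi)$ in two stages: first use the treewidth-$\eta$-modulator $S$ to obtain a tree decomposition of $G$ of bounded width whose bags interact nicely with $S$, and then reshape it into the required protrusion-decomposition form. Concretely, I would start from a width-$\eta$ tree decomposition $(T_0,\chi_0)$ of $G-S$ (which exists since $\tw(G-S)\le\eta$ and can be computed in polynomial time, e.g.\ via the constant-factor approximation for treewidth, or — since $G$ is planar and $G-S$ is then also planar — by any of the available polynomial-time routines; even an $\bO(\eta)$-width approximation suffices after adjusting constants), add $S$ to every bag to get a width-$(\eta+|S|)$ decomposition of $G$, and then observe this is far from a protrusion decomposition because every bag is large. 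The point of planarity, via \cref{lem:planar-deg-3-bound}, is that we can afford to push $S$ out of most bags: I would root $T_0$, and for each connected piece of $T_0$ whose bags touch $S$ in a ``stable'' way, contract/merge the portions so that the boundary with $S$ becomes small.

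**The cleaner route**, which I expect the authors take, is the standard one from \cite[\S15]{kernel-book-fomin}: take the width-$\eta$ decomposition $(T_0,\chi_0)$ of $G-S$, and consider, for each vertex $s\in S$, the set of nodes $w\in V(T_0)$ with $N_G(s)\cap\chi_0(w)\ne\emptyset$; mark a bounded number of nodes per vertex of $S$ (the "LCA-closure" of these, giving $\bO(|S|)$ marked nodes in total) together with enough extra nodes so that the marked set is closed under taking least common ancestors. The connected components of $T_0$ minus the marked nodes then each attach to the marked part along at most $2$ nodes, hence see at most $2$ "portal" bags of width $\le\eta+1$; crucially — and this is where planarity bites — each such component, together with $S$, induces a subgraph of bounded treewidth only if the component's vertices have few neighbors in $S$, and one argues via \cref{lem:planar-deg-3-bound} that all but $\bO(|S|)$ of these components have their entire $S$-neighborhood contained in the $\le 2$ portal bags' worth of $S$-vertices. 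Build the root bag $\chi(r)$ from $S$ together with all marked bags and the few "bad" components' relevant vertices (giving $|\chi(r)|=\bO(\eta|S|)$ and degree $\bO(\eta|S|)$), and hang each remaining component $H$ off $r$ as a subtree: inside that subtree use the decomposition induced by $(T_0,\chi_0)$ on $H$, but add to every bag the (at most $2$) vertices of $S$ that $H$ sees plus the $\le 2$ portal vertices from the marked bags that separate $H$ from the rest. Each such augmented bag has size $\le(\eta+1)+2+(\eta+\dots)$; bookkeeping the constants gives width $t\le 3\eta+2$, and by construction $\chi(T_w)\cap S\subseteq\partial^*(\chi(T_w))$ with $|\partial^*(\chi(T_w))\cap S|\le 2$, as required.

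**The main obstacle**, and the only place the hypothesis of planarity is genuinely used, is bounding the number of children of $r$ (the parameter $\beta$) and the root-bag size $\alpha$: a priori there could be unboundedly many components of $T_0$ minus the marked nodes whose vertices have $\ge 3$ neighbors in $S$, and each such component would force either a large root bag or a violation of the $|\partial^*\cap S|\le 2$ condition. This is exactly neutralized by \cref{lem:planar-deg-3-bound}: contracting each such component to a single vertex yields a planar graph on $S\cup(\text{components})$ in which every component-vertex has $\ge 3$ neighbors in $S$, so there are at most $\max\{0,2|S|-4\}=\bO(|S|)$ of them; these we simply absorb into the root, keeping $\alpha,\beta=\bO(\eta|S|)$. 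Once this counting argument is in place, the rest is routine verification that $(T,\chi)$ satisfies the three bullets of the protrusion-decomposition definition and that all steps run in polynomial time — computing $(T_0,\chi_0)$, computing neighborhoods, taking LCA-closures, identifying the bad components, and assembling the bags are each clearly polynomial.
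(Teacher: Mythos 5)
Your overall strategy is the one the paper actually relies on: the paper does not prove \cref{lem:mod-to-protrusiondecomp} from scratch but simply cites Lemmas~15.13 and 15.14 of \cite{kernel-book-fomin}, whose proofs proceed along the lines you describe (a width-$\eta$ tree decomposition of $G-S$, a marking step closed under least common ancestors, and \cref{lem:planar-deg-3-bound} to exploit planarity). So in spirit you are reconstructing the cited argument rather than taking a different route.

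Two steps of your sketch would fail as written, though. First, ``mark a bounded number of nodes per vertex of $S$ (the LCA-closure of these)'' gives no bound: for a single $s\in S$ the set of nodes whose bags meet $N_G(s)$ can have size $\Omega(n)$, and the LCA-closure of a set is only linear in that set, so this rule does not yield $\bO(|S|)$ marked nodes. Second, and more seriously, the components of $T_0$ minus the marked nodes that see at least three vertices of $S$ cannot ``simply be absorbed into the root'': \cref{lem:planar-deg-3-bound} bounds their \emph{number} by $\bO(|S|)$, but each such component may contain arbitrarily many vertices, so placing it in $\chi(r)$ destroys $\alpha=\bO(\eta|S|)$, while hanging it off $r$ as a protrusion violates $|\partial^*(\chi(T_w))\cap S|\le 2$. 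The repair (and what the cited proofs do) is to mark, inside each such region, the \emph{lowest} tree nodes whose subtrees see at least three vertices of $S$; by monotonicity of the ``seen'' sets these lowest nodes are pairwise incomparable, planarity bounds their number by $\bO(|S|)$, and after taking the LCA-closure every surviving component of $T_0$ minus the marked set genuinely sees at most two vertices of $S$. Only the $\bO(|S|)$ marked \emph{bags}, each of size at most $\eta+1$, enter the root bag, which is where $\alpha,\beta=\bO(\eta|S|)$ come from. A further technicality you elide: \cref{lem:planar-deg-3-bound} requires contracting \emph{connected} disjoint pieces, and the vertex set associated with a tree component need not induce a connected subgraph of $G-S$, so the counting must be carried out over connected components of $G$ minus $S$ and the marked bags rather than over components of the tree.
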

The lemma above corresponds with Lemma 15.14 of \cite{kernel-book-fomin} with the addition of the second property. This property follows from the proofs of Lemma 15.13 and Lemma 15.14 in \cite{kernel-book-fomin}.

\section{Counting feedback vertex sets} \label{sec:cfvs}
In this section, we explain the technique that allows us to either count the number of minimum feedback vertex sets of a graph $G$ in polynomial time, or reduce $G$ to a provably small instance with the same number of minimum solutions. We start by showing that, by using a few reduction rules, we can already reduce $G$ to an equivalent instance with a specific structure. This reduction is based on the $\bO(k^3)$-vertex kernel for the decision \textsc{Feedback Vertex Set} problem presented by Jansen~\cite{fvs-kernel-ppt-jansen}. We choose to use this kernel over the better-known and smaller-size kernels by Thomassé~\cite{Thomasse10} and Iwata~\cite{Iwata17} because those rely on multiple reduction rules that are not safe for counting minimum solutions. 

As is common for \textsc{Feedback Vertex Set}, we consider the graph we are working with to be undirected and we allow parallel edges. 
In this section, we make use of two reduction rules which are common for kernels of the decision variant of \textsc{Feedback Vertex Set}.
\begin{description}
    \item[(R1)] If there is an edge of multiplicity larger than two, reduce its multiplicity to two.
    \item[(R2)] If there is a vertex $v$ with degree at most one, remove $v$.
\end{description}

It can easily be verified that if an instance $(G,k)$ is reduced to $(G',k')$ by one of the rules above, we have $\cmfvs(G,k) = \cmfvs(G',k')$. Hence, these rules are safe in the context of counting minimum feedback vertex sets. Observe that if (R2) has exhaustively been applied on a graph $G$, then all vertices in $V_{\neq2}(G)$ have degree at least three.
For our purposes, we will need one more method to reduce the graph. We first present a lemma that motivates this third reduction rule.
\begin{lemma} \label{lem:cfvs-degsum}
Let $X$ be a (not necessarily minimum) feedback vertex set of a graph $G$ that is reduced with respect to (R2) and let $\mathcal{C}$ be the set of connected components of $G - V_{\neq2}(G)$. Then:
\begin{alphaenumerate}
    \item \label{enum:degsum-a} $|V_{\neq2}(G)| \leq |X| + \sum_{v \in X} \deg_G(v)$, and
    \item \label{enum:degsum-b} $|\mathcal{C}| \leq |X| + 2 \sum_{v \in X} \deg_G(v)$.
\end{alphaenumerate}
\end{lemma}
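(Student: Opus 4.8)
The plan is to deduce both inequalities from the single structural fact that $H := G - X$ is a forest, combined with the observation that after exhaustive application of (R2) every vertex of $G$ has degree at least two, so that $V_{\neq 2}(G)$ is exactly the set of vertices of degree at least three.

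For part~(\ref{enum:degsum-a}) I would compare, for each $v \in V(H)$, its degree in $G$ with its degree in $H$. Writing $d_X(v)$ for the number of edges (counted with multiplicity) between $v$ and $X$, we have $\deg_H(v) = \deg_G(v) - d_X(v)$, and $\sum_{v \in V(H)} d_X(v) \leq \sum_{v \in X} \deg_G(v)$ since every edge counted on the left is incident to $X$. Because $H$ is a forest, $\sum_{v \in V(H)} \deg_H(v) = 2|E(H)| \leq 2|V(H)|$, and rearranging these relations yields $\sum_{v \in V(H)} (\deg_G(v) - 2) \leq \sum_{v \in X} \deg_G(v)$. Since every $v \in V(H)$ has $\deg_G(v) \geq 2$ while every $v \in V_{\neq 2}(G) \setminus X$ even has $\deg_G(v) \geq 3$, the left-hand side is at least $|V_{\neq 2}(G) \setminus X|$; adding the trivial bound $|V_{\neq 2}(G) \cap X| \leq |X|$ completes this part.

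For part~(\ref{enum:degsum-b}) I would classify the chains in $\mathcal{C}$. The chains meeting $X$ number at most $|X|$, since distinct chains are pairwise vertex-disjoint. Every chain disjoint from $X$ is a \emph{proper} chain: a non-proper chain is a connected component of $G$ all of whose vertices have degree two, i.e.\ a cycle, which $X$ must intersect. Now split the chains disjoint from $X$ by their endpoint set, which is a subset of $V_{\neq 2}(G)$ of size one or two. If some endpoint lies in $X$, charge the chain to an edge joining an endpoint of the path $C$ to its neighbour in $X$; distinct chains receive distinct such edges, so there are at most $\sum_{v \in X} \deg_G(v)$ of these. If both endpoints lie in $F := V_{\neq 2}(G) \setminus X$, then the two endpoints are distinct (a chain disjoint from $X$ with a single endpoint $w \in F$ would, together with $w$, form a cycle disjoint from $X$), and contracting each such chain to an edge between its two endpoints produces a simple \emph{acyclic} graph on $F$: two such chains between the same pair, or a longer alternating ``chain cycle'', would immediately give a cycle inside the forest $H$. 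Hence there are at most $|F|$ of these, and part~(\ref{enum:degsum-a}) bounds $|F|$ by $\sum_{v \in X} \deg_G(v)$. Summing the three estimates gives the claimed bound.

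The numerical work is entirely routine; the part that needs care is the bookkeeping in~(\ref{enum:degsum-b}). One must keep in mind that $X$ may contain degree-two vertices (so a chain really can ``meet'' $X$), and that the endpoints of a chain, though always of degree at least three, may themselves belong to $X$. The crucial point — the only place where acyclicity of $H$ is used in~(\ref{enum:degsum-b}) rather than just the definition of a chain — is that $H$ cannot contain two internally disjoint paths between the same pair of vertices of $F$; this is exactly what prevents the number of chains with both endpoints outside $X$ from blowing up and forces it to be linear in $|F|$.
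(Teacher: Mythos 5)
Your proof is correct and follows essentially the same route as the paper: both parts rest on the fact that $G-X$ is a forest, with part~(b) obtained by splitting the chains according to whether they meet $X$, have an endpoint in $X$, or have both endpoints in $V_{\neq 2}(G)\setminus X$ (the last kind contracting to the edges of a forest, hence numbering at most $|V_{\neq 2}(G)\setminus X|$). Your part~(a) replaces the paper's leaf-versus-branch-vertex count inside the forest by the global handshake estimate $\sum_{v\in V(G)\setminus X}(\deg_G(v)-2)\leq \sum_{v\in X}\deg_G(v)$, which is a slightly cleaner packaging of the same forest-sparsity fact.
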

\begin{proof}
Consider the forest $F := G - X$. Partition the vertices of $V_{\neq2}(G) \cap V(F) = V_{\neq2}(G) \setminus X$ into sets $V'_{\leq 1}$, $V'_2$ and $V'_{\geq 3}$ for vertices that have respectively degree at most one, degree two or degree at least three in $F$. Furthermore, let $V_\ell$ denote the leaf nodes of $F$ that have degree two in $G$.
Since $G$ has no vertices of degree at most one by (R2), the leaves of $F$ are exactly the vertices $V_\ell \cup V'_{\leq 1}$.
In any tree, the number of vertices of degree at least three is less than the number of leaves, thus 
 $|V'_{\geq 3}| \leq |V'_{\leq 1}| + |V_\ell|$. Each vertex in $V'_2$ has at least one edge to $X$ since they have degree at least three in $G$ and degree exactly two in $F$. For a similar reason, each vertex in $V'_{\leq1}$ has at least two edges to $X$. Each vertex in $V_\ell$ has one edge to $X$.
Putting this together gives the following inequality, from which (\ref{enum:degsum-a}) directly follows.
\begin{align} \label{ali:degsum}
    \sum_{v \in X} \deg_G(v) \geq |V'_2| + 2|V'_{\leq1}| + |V_\ell| \geq |V'_2| + |V'_{\leq1}| + |V'_{\geq 3}| = |V_{\neq2}(G) \setminus X|
\end{align} 

To bound the size of the set $\mathcal{C}$ of connected components of $G - V_{\neq 2}(G)$, we instead bound the size of the set $\mathcal{C'}$ of connected components of $G - V_{\neq 2}(G) - X$. Observe that $|\mathcal{C}| \leq |\mathcal{C'}| + |X|$ since removing a vertex from a graph reduces the number of connected components by at most one. (Such a removal can \emph{increase} the number of components by an arbitrary number, which is irrelevant for our argument.) Since $X$ is an FVS, the connected components in $\mathcal{C'}$ can be seen as proper chains. Chains in $\mathcal{C'}$ that have both endpoints in $F$ act as edges between those endpoints when it comes to the connectivity of $F$. This means that, since $F$ is a forest, there can be at most $|V_{\neq2}(G) \cap V(F)| \leq \sum_{v \in X} \deg_G(v)$ of such chains by \autoref{ali:degsum}. All other chains will have at least one endpoint in $X$, which means there can be at most $\sum_{v \in X} \deg_G(v)$ of them, implying (\ref{enum:degsum-b}).
% To bound the size of the set $\mathcal{C}$ of connected components of $G - V_{\neq 2}(G)$, which are chains, we consider the two types of chains that $\mathcal{C}$ can contain. A chain together with its endpoints either does or does not intersect $X$. For those that do not, such a chain must have two endpoints in $F$ and thus act as an edge between those endpoints when it comes to the connectivity of $F$. This means that, since $F$ is a forest, there can be at most $|V_{\neq2}(G) \cap V(F)| \leq \sum_{v \in X} \deg_G(v)$ of such chains by \autoref{ali:degsum}. Now consider the chains that intersect $X$ or have at least one endpoint in $X$. All such chains contribute at least one unique edge that is incident to $X$, and thus there can be at most $\sum_{v \in X} \deg_G(v)$ of such chains. Putting this together gives (\ref{enum:degsum-b}).
\end{proof}

Based on \autoref{lem:cfvs-degsum}, the goal of the third reduction rule is to decrease the degree of the vertices of a feedback vertex set. This idea is captured in \autoref{lem:cfvs-degred}. After presenting this lemma, we combine these results in \autoref{lem:cfvs-kernel} to create an algorithm to reduce a graph $G$ to an, in context of counting minimum feedback vertex sets, equivalent graph with a bounded number of vertices of degree other than two and a bounded number of chains.  
\begin{lemma} \label{lem:cfvs-degred}
There exists a polynomial-time algorithm that, given a graph $G$ reduced with respect to (R1), an integer $k$, a vertex $v \in V(G)$ and a feedback vertex set $Y_v \subseteq V(G) \setminus \{ v \}$ of $G$, outputs a graph $G'$ obtained by removing edges from $G$ such that $\deg_{G'}(v) \leq |Y_v| \cdot (k+4)$ and $\cmfvs(G, k) = \cmfvs(G', k)$.
% Let $G$ be a graph reduced with respect to (R1). Let $k$ be an integer, let $v \in V(G)$ be a vertex and let $Y_v \subseteq V(G) \setminus \{ v \}$ be a feedback vertex set of $G$. In polynomial time, a graph $G'$ can be obtained by deleting edges from $G$, such that $\deg_{G'}(v) \leq |Y_v| \cdot (k+4)$ and $\cmfvs(G, k) = \cmfvs(G', k)$.
\end{lemma}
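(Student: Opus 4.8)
The plan is to reduce the degree of $v$ by bounding how many edges incident to $v$ can be ``genuinely relevant'' to distinguishing minimum feedback vertex sets, and redirecting or deleting the rest. Since $Y_v$ is a feedback vertex set avoiding $v$, the graph $G - Y_v$ is a forest, and $v$ lies in some tree $T$ of this forest. Consider the neighbors of $v$ that lie in $Y_v$ versus those that lie in the forest $F = G - Y_v$. For neighbors in $F$: because $F$ is a forest, the edges from $v$ to $V(T)$ form a star, and only a bounded amount of ``branching structure'' along paths in $T$ back to $Y_v$ can matter. More precisely, I would argue that every neighbor $u$ of $v$ with $u \in V(F)$ either eventually reaches $Y_v$ via a path in $F$ avoiding $v$ (contributing a cycle through $v$ and $Y_v$ that a solution must hit) or lies in a part of $T$ that, after $v$ is removed, simply hangs off as a tree and is irrelevant to cycles through $v$. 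Only the first type can force choices; I would bound their number using that $Y_v$ has $|Y_v|$ vertices and a forest-based counting argument along the lines of \autoref{lem:cfvs-degsum}, yielding roughly $|Y_v| \cdot (k+4)$ as the cap.

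The first concrete step is to classify neighbors of $v$ according to whether removing the edge $\{v,u\}$ changes the family of minimum feedback vertex sets. If $\{v,u\}$ lies on no cycle, or lies only on cycles that are ``dominated'' by other cycles through $v$ in the sense that any solution hitting the latter also hits the former, the edge is redundant and can be deleted safely for counting purposes (deleting an edge never changes which vertex sets are feedback vertex sets, provided we are careful that we do not create a situation where two previously distinct solutions become indistinguishable — but since we only delete edges and never merge vertices, the set $\mathcal{S}$ of size-$\le k$ feedback vertex sets can only grow or stay the same, and I must show it stays the same). The key invariant to maintain is: after deleting an edge, a vertex set $S$ with $|S| \le k$ is an FVS of $G'$ iff it is an FVS of $G$. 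This holds exactly when the deleted edge lies on no cycle that survives in $G - S$ for any candidate $S$, i.e., when every relevant cycle through that edge is already ``covered'' by the structure we keep.

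The main technical device I expect to use is a rooting/grouping argument on the tree $T \subseteq F$ containing $v$: root $T$ at $v$, and for each child subtree, note that a cycle through $v$ using an edge into that subtree must return to $v$ through $Y_v$ (since $T$ is a tree, it cannot return through $T$ alone). Group the children subtrees by which vertices of $Y_v$ their ``exit paths'' reach. For a fixed target $y \in Y_v$ and a fixed small budget, only $\bO(k)$ of the subtrees reaching $y$ can be simultaneously relevant — once we exceed that, a minimum solution of size $\le k$ is forced to take $v$ or $y$ into the solution in a way that makes the surplus edges redundant, so we can delete edges down to $\bO(k)$ per target, giving $|Y_v|\cdot\bO(k)$ total; a careful accounting of additive constants gives the stated bound $|Y_v|\cdot(k+4)$. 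Edges from $v$ directly into $Y_v$ are at most $|Y_v|$ after (R1) bounds multiplicities by two, so they contribute a lower-order term absorbed into the same bound.

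The hardest part will be the safety argument: showing that the edge deletions preserve $\cmfvs$ \emph{exactly}, not just the optimum value. I must rule out both that a deletion decreases the feedback vertex number (creating spurious new minimum solutions of smaller size) and that two distinct minimum solutions of $G$ become ``the same count'' — the latter is automatic since we keep $V(G)$ fixed and only remove edges, so distinct vertex sets stay distinct, but I still need that the \emph{family} of minimum-size FVSs is unchanged, which requires the covering argument above to be tight. I would phrase this as: for every $S \subseteq V(G)$ with $|S| \le k$, $S$ is an FVS of $G$ iff $S$ is an FVS of $G'$, which immediately gives equality of the feedback vertex numbers (as both are $\le k$ or the count is trivially $0$ on both sides) and hence equality of the number of minimum solutions. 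The iterative nature — delete one redundant edge, re-examine — needs a monovariant (the degree of $v$, or $|E(G)|$) to guarantee polynomial running time, which is immediate.
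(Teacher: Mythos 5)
Your proposal matches the paper's proof in all essentials: the paper considers the trees of $G - (Y_v \cup \{v\})$, marks up to $k+2$ trees adjacent to both $v$ and $u$ for each $u \in Y_v$, deletes the $v$-edges to all unmarked trees, and establishes safety by exactly your pigeonhole argument --- a set of size at most $k$ cannot hit all $k+2$ kept trees, so any size-$\leq k$ FVS of $G'$ avoiding $v$ and the relevant $u \in Y_v$ would leave a cycle in $G'$ itself, which means every cycle of $G-X$ through a deleted edge is already witnessed by a cycle of $G'-X$. The one small slip is that after (R1) the number of edges from $v$ into $Y_v$ is at most $2|Y_v|$ rather than $|Y_v|$, which is precisely where the constant $k+4$ (rather than $k+3$) in the degree bound comes from.
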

\begin{proof}
Consider forest $F := G - (Y_v \cup \{v \})$. For each $u \in Y_v$, mark trees of $F$ that have an edge to both $v$ and $u$ until either all such trees are marked or at least $k+2$ of them are marked. Then, we construct a graph $G'$ from $G$ by removing all edges between $v$ and trees of $F$ that were not marked. 

We shall first prove the bound on the degree of $v$ in $G'$. The vertex $v$ can have edges to vertices in $Y_v$ or in $F$. Since (R1) has been exhaustively applied, there can be at most $2|Y_v|$ edges between $v$ and $Y_v$. Each tree in $F$ has at most one edge to $v$, since otherwise $Y_v$ would not be an FVS of $G$. In $G'$, only trees that were marked still have an edge to $v$, and since we mark at most $k+2$ trees per vertex in $Y_v$, we have at most $|Y_v| \cdot (k+2)$ of such trees. Combining this gives $\deg_{G'}(v) \leq |Y_v| \cdot (k+4)$. 

To show that $\cmfvs(G, k) = \cmfvs(G', k)$, we prove that a vertex set $X \subseteq V(G)$ with $|X| \leq k$ is an FVS of $G$ if and only if it is an FVS of $G'$. Clearly any FVS of $G$ is an FVS of $G'$ since $G'$ is constructed from $G$ by deleting edges. For the opposite direction, assume that $X$ is an FVS of $G'$ and assume for a contradiction that $X$ is not an FVS of $G$. Then $G-X$ has a cycle $W$. This cycle must contain an edge $\{v, w\}$ of $E(G) \setminus E(G')$ since $G' - X$ is acyclic. By construction of $G'$, we know that $w$ is a vertex that belongs to an unmarked tree $T$. Since cycle $W$ intersects $T$ and 
since $T$ is a connected component of $G - (Y_v \cup \{v\})$, there must be a vertex $u \in Y_v$ that has an edge to $T$. 
Since the edge between $v$ and $T$ is removed in $G'$, there exist $k+2$ other trees that are marked and have an edge to both $v$ and $u$. Since $|X| \leq k$, at least two of these trees are not hit by $X$. Furthermore, since $v$ and $u$ are part of $W$, they are also not contained in $X$. Therefore, there is a cycle in $G'-X$ through $v$, $u$ and two of the aforementioned trees, contradicting that $X$ is an FVS of $G'$.
% If $v \in X$ then $G' - X = G - X$ so $X$ is also an FVS of $G$. Otherwise, let $T$ be a tree in $F$ that was not marked but did have an edge to $v$ in $G$. In order for $T$ to not have been marked it must be that for each $u \in N_G(T) \setminus \{v\}$ (which is a subset of $Y_v$) there are already $k+2$ other trees marked that have an edge to both $u$ and $v$. Since $|X| \leq k$ and $v \notin X$, the only way for there not to be a cycle in $G' - X$ through $u$, $v$ and two of these trees is if $X$ contains $u$. However, that means that in $G - X$, the only vertex that $T$ has an edge to is $v$. Since $T$ is a tree and has only one edge to the remainder of the graph $G - X$, it can never be involved in a cycle. Since this holds for all unmarked trees it must mean that $X$ is also an FVS of $G$. 
\end{proof}

\begin{lemma} \label{lem:cfvs-kernel}
There is a polynomial-time algorithm that, given a graph $G$ and integer $k$, outputs a graph $G'$ and integer $k'$ such that the following properties are satisfied.
\begin{itemize}
    \item $\cmfvs(G, k) = \cmfvs(G', k')$.
    \item $k' \leq k$.
    \item $|V_{\neq 2}(G')|= \bO(k^3)$.
    \item $G'-V_{\neq 2}(G')$ has $\bO(k^3)$ connected components.
\end{itemize}
\end{lemma}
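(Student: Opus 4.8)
The plan is to replay, in a count-exact fashion, the skeleton of the $\bO(k^3)$-vertex kernel for decision \textsc{Feedback Vertex Set}: compute a $2$-approximate feedback vertex set $Z$, delete the vertices that are forced into every minimum solution, and then invoke \autoref{lem:cfvs-degred} to bound the degrees of the remaining vertices of $Z$ so that \autoref{lem:cfvs-degsum} delivers the desired bounds. We may assume $k \geq 1$ (small $k$ being trivial). The algorithm runs the following loop. First apply (R1) and (R2) exhaustively; these rules only delete edges or vertices, so no violation is ever reintroduced, and they preserve $\cmfvs(\cdot,k)$ as noted above. Then, using a polynomial-time $2$-approximation algorithm for \textsc{Feedback Vertex Set}, compute a feedback vertex set $Z$ with $|Z| \leq 2\,\mathrm{fvs}(G)$. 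If $|Z| > 2k$ then $\mathrm{fvs}(G) > k$, so $\cmfvs(G,k) = 0$, and we output a fixed graph with no feedback vertex set of size at most $0$ (say a triangle) together with $k' = 0$; otherwise $|Z| \leq 2k$ and $F := G - Z$ is a forest.

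The heart of the argument is a dichotomy applied to each $v \in Z$. Let $H_v := G - (Z \setminus \{v\})$, which is the forest $F$ together with the single vertex $v$; since $H_v - v = F$ is acyclic, every cycle of $H_v$ passes through $v$. A feedback vertex set of $H_v$ avoiding $v$ is exactly a set $S \subseteq V(F)$ with $H_v - S$ acyclic, that is, a set whose deletion separates the $F$-neighbours of $v$ into pairwise distinct trees (and removes every vertex joined to $v$ by a double edge). The minimum size $s_v$ of such a set, together with an optimal set $S_v$, can be computed in polynomial time --- this is a vertex multicut instance on the forest $F$, solvable by dynamic programming. If $s_v \leq k$, then $Y_v := S_v \cup (Z \setminus \{v\})$ is a feedback vertex set of $G$ with $v \notin Y_v$ and $|Y_v| \leq 3k$, because $G - Y_v = H_v - S_v$ is a forest; since $G$ is reduced with respect to (R1) throughout the loop, \autoref{lem:cfvs-degred} applied to $v$ and $Y_v$ returns, in polynomial time, a graph in which $\deg(v) \leq |Y_v|\cdot(k+4) = \bO(k^2)$ while $\cmfvs(\cdot,k)$ is unchanged. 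If instead $s_v > k$, then every feedback vertex set of $G$ of size at most $k$ must contain $v$: a feedback vertex set $X$ of $G$ avoiding $v$ restricts to a feedback vertex set $X \cap V(H_v)$ of $H_v$ avoiding $v$, whence $|X| \geq s_v > k$. In this case I would delete $v$ and decrement $k$, then restart the loop. This is count-exact, i.e.\ $\cmfvs(G,k) = \cmfvs(G-v,k-1)$: if $\mathrm{fvs}(G) \leq k$, then $v$ is in every minimum solution, so removing $v$ gives a size-decreasing bijection onto the minimum solutions of $G-v$ (and $\mathrm{fvs}(G-v) = \mathrm{fvs}(G)-1 \leq k-1$); and if $\mathrm{fvs}(G) > k$, then both sides are $0$, since $\mathrm{fvs}(G-v) \geq \mathrm{fvs}(G)-1 > k-1$.

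Assembling this: while some $v \in Z$ has $s_v > k$, delete such a vertex and decrement $k$, which occurs at most $k$ times in total; once $s_v \leq k$ for every $v \in Z$, apply \autoref{lem:cfvs-degred} once to each $v \in Z$. Each such application deletes only edges incident to $v$, hence does not increase the degree of any other vertex of $Z$, and afterwards we exhaustively re-apply (R2), which again only lowers degrees; all of this runs in polynomial time. Let $(G',k')$ be the resulting instance. No step increased the parameter or altered the count, so $k' \leq k$ and $\cmfvs(G,k) = \cmfvs(G',k')$. Writing $Z'$ for the part of $Z$ still present in $G'$, the set $Z'$ is a feedback vertex set of $G'$ with $|Z'| \leq 2k$, every $v \in Z'$ satisfies $\deg_{G'}(v) = \bO(k^2)$, and $G'$ is reduced with respect to (R2); \autoref{lem:cfvs-degsum} applied with $X := Z'$ then gives $|V_{\neq 2}(G')| \leq |Z'| + \sum_{v\in Z'}\deg_{G'}(v) = \bO(k^3)$ and at most $|Z'| + 2\sum_{v\in Z'}\deg_{G'}(v) = \bO(k^3)$ connected components in $G' - V_{\neq 2}(G')$.

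I expect the main obstacle to be the dichotomy on $s_v$: pinning down the correct local measure --- the cost of a feedback vertex set of the forest-plus-$v$ graph $H_v$ that is not allowed to use $v$ --- showing it can be computed exactly in polynomial time, and proving that a large $s_v$ genuinely certifies that $v$ lies in every minimum solution, so that deleting $v$ and decrementing $k$ remains exact even when $G$ admits no feedback vertex set of size at most $k$. The remaining ingredients --- termination of the deletion loop, the fact that the separate degree-reduction steps for different vertices of $Z$ do not interfere, and restoring the (R2)-reducedness required by \autoref{lem:cfvs-degsum} --- are routine.
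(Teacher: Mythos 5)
Your proposal is correct and follows essentially the same route as the paper: compute an approximate feedback vertex set, delete the vertices forced into every size-$\leq k$ solution while decrementing $k$, reduce the degree of each remaining modulator vertex via \autoref{lem:cfvs-degred}, and conclude with \autoref{lem:cfvs-degsum}. The only substantive difference is the subroutine realizing the per-vertex dichotomy: the paper invokes the Bar-Yehuda et al.\ $4$-approximation for the variant ``smallest FVS avoiding a prescribed vertex'' on the whole graph, whereas you solve the forest-plus-apex instance $H_v$ exactly by a tree dynamic program; both yield an $\bO(k)$-size $Y_v$ avoiding $v$ or a certificate that $v$ is forced, so either works.
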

\begin{proof}
In our approach, we make use of the linear-time 4-approximation algorithm by Bar-Yehuda et al.~\cite{fvs-approx-bar-yehuda} that can also approximate the more general problem of: for a given graph, find the smallest FVS that does not contain a given vertex. Our first step is to exhaustively apply (R1) on $G$ and compute a 4-approximate FVS $X$ of the graph. If $|X| > 4k$ then the feedback vertex number of $G$ is larger than $k$, so we can return a trivial, constant size instance $G'$ and $k'$ such that $\cmfvs(G', k') = 0$. Otherwise, let $G'$ and $k'$ be a copy of $G$ and $k$. For each vertex $v \in X$, compute a 4-approximate FVS $Y_v$ of $G'$ that does not contain $v$. If $|Y_v| > 4k$, then there does not exist a solution of size at most $k$ that does not contain $v$, so remove $v$ from $G'$ and reduce $k'$ by one. Otherwise, use \autoref{lem:cfvs-degred} to reduce the degree of $v$ in $G'$. Finally, we exhaustively apply reduction rule (R2) on $G'$.

Computing the 4-approximations and applying the reduction rules can be done in polynomial time. Each rule can only be applied a polynomial number of times, thus the algorithm runs in polynomial time.
The fact that $\cmfvs(G, k) = \cmfvs(G', k')$ follows from safety of the reduction rules used in the algorithm and $k' \leq k$ follows from the fact that $k'$ is never increased. We know that $|V_{\neq2}(G')| = \bO(k^3)$ and that $G'-V_{\neq2}(G')$ has $\bO(k^3)$ connected components due to \autoref{lem:cfvs-degsum} and the following bound: 
\[ \sum_{v \in X} \deg_{G'}(v) \leq \sum_{v \in X} |Y_v| \cdot (k+4) \leq \sum_{v \in X} 4k \cdot (k+4) = |X| \cdot 4k \cdot (k+4) = \bO(k^3). \qedhere \] 
\end{proof}

The result of \autoref{lem:cfvs-kernel} is in and of itself not a proper kernel yet, since the chains of the graph it produces can be of arbitrary length. Our strategy to address this is as follows. If these chains are large in terms of $k$, then we can run an FPT algorithm in $\poly(n)$ time to count the number of minimum solutions. Otherwise, the chains can be replaced by structures of size $\poly(k)$ that do not change the number of minimum feedback vertex sets the instance has. This approach is captured in the following two lemmas and combined in the proof of \autoref{thm:cfvs}.

\begin{lemma} \label{lem:cfvs-chainred}
There exists a polynomial-time algorithm that, given a graph $G$ with a chain $C$ and an integer $k$, outputs a graph $G'$ obtained from $G$ by replacing $C$ with a vertex set $C'$, and an integer $k'$, such that:
\begin{itemize}
    \item $\cmfvs(G, k) = \cmfvs(G', k')$,
    \item $G - C = G' - C'$,
    \item $N_G(C) = N_{G'}(C')$,
    \item $|C'| = \bO(\log (|C|)^2)$, and
    \item $k' = k+ \bO(\log (|C|)^2)$.
\end{itemize}
\end{lemma}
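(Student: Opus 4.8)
The plan is to replace a long chain $C$ with a short gadget that simulates the ``how many edges of $C$ must a solution hit, and in how many ways'' behavior. First I would analyze what a chain contributes to the count. Let $C$ be a proper chain with endpoints $N_G(C) = \{a, b\}$ (the cases $|N_G(C)| \le 1$, where the chain forms a cycle or a pendant path-turned-cycle, are handled separately and are easier). The chain is an induced path, say on vertices $c_1, \dots, c_m$ with $a$ adjacent to $c_1$ and $b$ adjacent to $c_m$; there may be a double edge somewhere. The key observation is that a minimum FVS $X$ of $G$ interacts with $C$ in a very restricted way: the subgraph $G[C \cup \{a,b\}]$ contains exactly one cycle if $a = b$ is connected back, or, when $a \ne b$ and both lie outside $X$, the chain $C$ together with $a, b$ and the rest of $G$ may close cycles through $C$. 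After removing $a$ and $b$ from consideration, the only cycle that lives entirely inside $C \cup \{a,b\}$ is one formed by a double edge on the chain or by a short cycle — but since $C$ is a set of degree-two vertices and $G$ is reduced with respect to (R1), any parallel edge has multiplicity exactly two. So I would first observe that if $C$ itself contains a double edge $\{c_i, c_{i+1}\}$, every FVS must take one of $c_i, c_{i+1}$, and then $C$ decomposes into two shorter chains; so after preprocessing we may assume $C$ is a simple path and contains no cycle on its own.

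With $C$ a simple path, the situation is: an optimum solution $X$ either (i) contains $a$ or $b$, in which case $C$ behaves like a forest attached at one point and need not be hit at all — so $X \cap C = \emptyset$ for a \emph{minimum} solution; or (ii) avoids both $a$ and $b$, in which case whether $C$ must be hit depends on whether $a$ and $b$ are in the same tree of $(G - C) - X$. If they are, then the path through $C$ closes a cycle and exactly one vertex of $C$ must be chosen, giving $|C| = m$ choices; if they are not, $C$ need not be hit. The crucial point for counting is therefore: the \emph{number of ways} to resolve $C$, given the behavior of $X$ on $V(G)\setminus C$, is either $1$ (no vertex of $C$ needed, $X \cap C = \emptyset$) or exactly $m$ (one vertex, any of the $m$). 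So I would build a gadget $C'$ that, for solutions of the appropriate size, offers exactly $1$ way when it need not be hit and exactly $m$ ways when it must be hit, while attaching to $a$ and $b$ in the same ``path-like'' fashion so that $a, b$ being in one tree versus two is mirrored. The natural gadget is a ``binary-counter path'': a path from $a$ to $b$ through $\ell = \lceil \log_2 m \rceil$ double-edge gadgets (each a pair $\{x_i, y_i\}$ joined by two parallel edges, strung together in series), so that a minimum FVS must pick exactly one endpoint from each of the $\ell$ pairs, yielding $2^\ell \ge m$ options; one then trims this down to exactly $m$ options by appending a small ``forbidding'' structure that kills the surplus $2^\ell - m$ combinations — e.g. by adding, for each binary string in the surplus set, a cycle through the corresponding endpoints, which would need an extra vertex and thus be excluded from a minimum solution unless that combination is chosen, correctly subtracting it. Handling the increment in $k$: picking one endpoint from each of the $\ell$ pairs costs $\ell$ vertices where the original chain cost $1$, so $k' = k + (\ell - 1) = k + \bO(\log m)$ per chain, and the size is $\bO(\log m)$ for the counter plus $\bO(\log^2 m)$ for the forbidding structure (each of up to $2^\ell - m < 2^\ell \le 2m$ forbidden strings contributes an $\bO(\log m)$-size cycle, but one can be smarter: a surplus of at most $m$ strings can be forbidden with a structure of total size $\bO(\log^2 m)$ by a prefix-based construction), matching the claimed $|C'| = \bO(\log(|C|)^2)$, $k' = k + \bO(\log(|C|)^2)$.

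The main obstacle I expect is the \emph{forbidding step}: getting the gadget to realize \emph{exactly} $m$ minimum solutions rather than $2^{\lceil \log m\rceil}$, without (a) blowing up the size past $\bO(\log^2 m)$, (b) accidentally creating a cheaper solution that routes through the forbidding structure and collapses the count, or (c) breaking the ``$a$ and $b$ connected iff path survives'' correspondence that makes the reduction safe when $a$ or $b$ is or is not in the outer solution. Concretely, one must check that every minimum FVS of $G'$ restricted to $V(G') \setminus C'$ is a minimum FVS of the corresponding $G$-side behavior and vice versa, and that for each such restriction the number of extensions into $C'$ equals the number into $C$ — this is a case analysis over whether $a \in X$, whether $b \in X$, and whether $a, b$ end up in the same tree. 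I would isolate this as a sequence of claims: first that minimum solutions never put a vertex into the forbidding structure, then that the counter contributes exactly $m$ resolutions when forced and exactly $0$ vertices (i.e.\ $1$ resolution, the empty one) when not forced, then that ``forced'' is equivalent on both sides. Once those are in place the five bullet points follow by construction and the arithmetic above.
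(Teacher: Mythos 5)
Your analysis of how a minimum FVS meets a chain is correct (at most one vertex of $C$ is ever taken, and the count is either $1$ or $|C|$ depending on whether the chain is forced), and your instinct to encode the $|C|$ choices by $\bO(\log|C|)$ double-edge pairs is exactly the mechanism the paper uses. However, the construction as proposed has two genuine gaps. First, a ``binary-counter path'' whose $a$--$b$ path runs \emph{through} the double-edge pairs in series cannot be correct: every minimum FVS must delete one vertex of each pair, so the gadget always disconnects $a$ from $b$, whereas in the unforced case the original chain survives intact and keeps $a$ and $b$ connected (consider $G$ a single long cycle with $k=1$: your gadget would admit solutions that do not separate $a$ from $b$ at all). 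You flag this as obstacle (c) but do not resolve it. The paper's gadget resolves it with an extra hub vertex $w$ placed on the $v$--$u$ path and joined by a double edge to each $a_i$ (with $a_i,b_i$ also doubly joined): omitting $w$ forces all $a_i$ into the solution (exactly one resolution, path intact), while taking $w$ costs exactly one more vertex and frees every pair (exactly $2^p$ resolutions, path broken), mirroring the cost difference between not hitting and hitting the chain.

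Second, the ``forbidding structure'' that trims $2^{\lceil\log_2 m\rceil}$ down to exactly $m$ is only conjectured. Even granting a dyadic/prefix covering of the surplus by $\bO(\log m)$ forbidding cycles, each such cycle is itself a cycle of $G'$ that a minimum FVS could elect to break using a vertex \emph{of the forbidding structure}, and one must also verify that no allowed combination is accidentally penalized; none of this is established, and it is where the real difficulty lies. The paper avoids the problem entirely: it writes $|C|$ in binary, splits the chain into consecutive subchains whose lengths are the distinct powers of two in that expansion, and replaces each subchain exactly by the power-of-two gadget. Since a minimum solution hits at most one subchain, the counts add up to exactly $\sum_i 2^{p_i}=|C|$ ways when the chain is hit and $1$ way otherwise, with no trimming step, and the total gadget size and parameter increase are $\sum_i p_i=\bO(\log^2|C|)$ as required. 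Without either the hub vertex (or an equivalent device) and a replacement for the forbidding step, the proposal does not yet yield $\cmfvs(G,k)=\cmfvs(G',k')$.
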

\begin{proof}
In case $C$ is a proper chain, the endpoints are defined as $N_G(C)$. If $C$ is not a proper chain, which happens if $C$ is a cycle in $G$, we choose an arbitrary vertex of $C$ to act as its endpoint. For simplicity, we consider $C$ to have two endpoints, where in some cases these two endpoints might be the same vertex. 

First, we assume that the number of vertices of the chain, not including its endpoints, is a power of two, i.e. $|C| = 2^p$ for some integer $p$. Let $v,u \in V(G)$ be the endpoints of $C$ (possibly $v = u$) and let $C = \{c_0, c_1, \cdots, c_{2^p -1}\}$. Then we construct our graph $G'$ by replacing $C$ by a gadget $C'$, of which an example can be seen in \autoref{fig:fvs-gadget}. It consists of the following elements.
\begin{itemize}
    \item A vertex $w$ with edges to both $v$ and $u$.
    \item Pairs of vertices $a_i, b_i$ for $0 \leq i < p$ such that there is an edge of multiplicity two between $w$ and $a_i$ and between $a_i$ and $b_i$.
\end{itemize}
Additionally, we set $k' = k + p$. 

\begin{figure}
            \centering%
            \subcaptionbox{\label{fig:fvs-chain}}
                {\includegraphics[page=1]{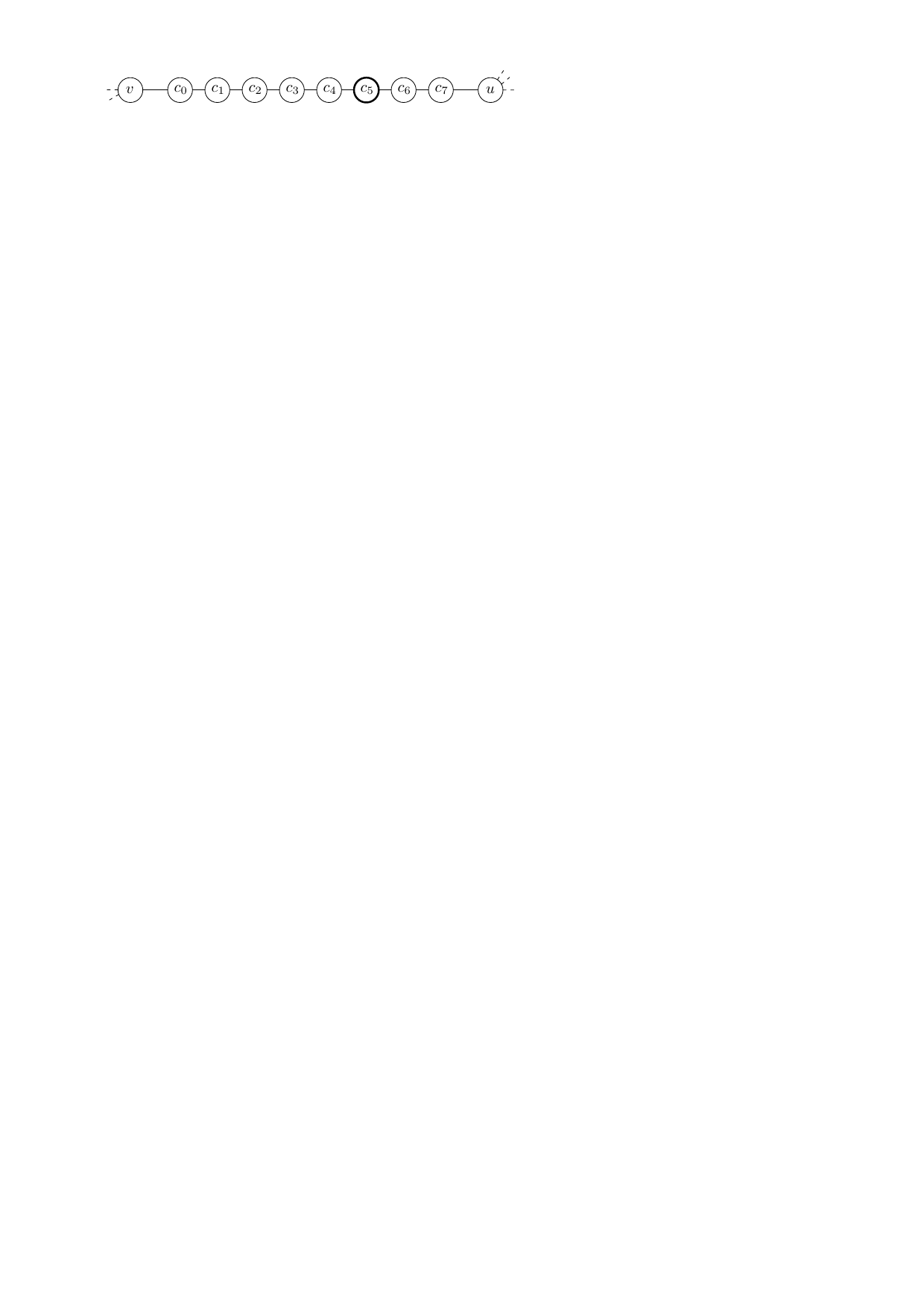}} \\ \vspace{0.5cm}
            \subcaptionbox{\label{fig:fvs-replacement}}
                {\includegraphics[page=2]{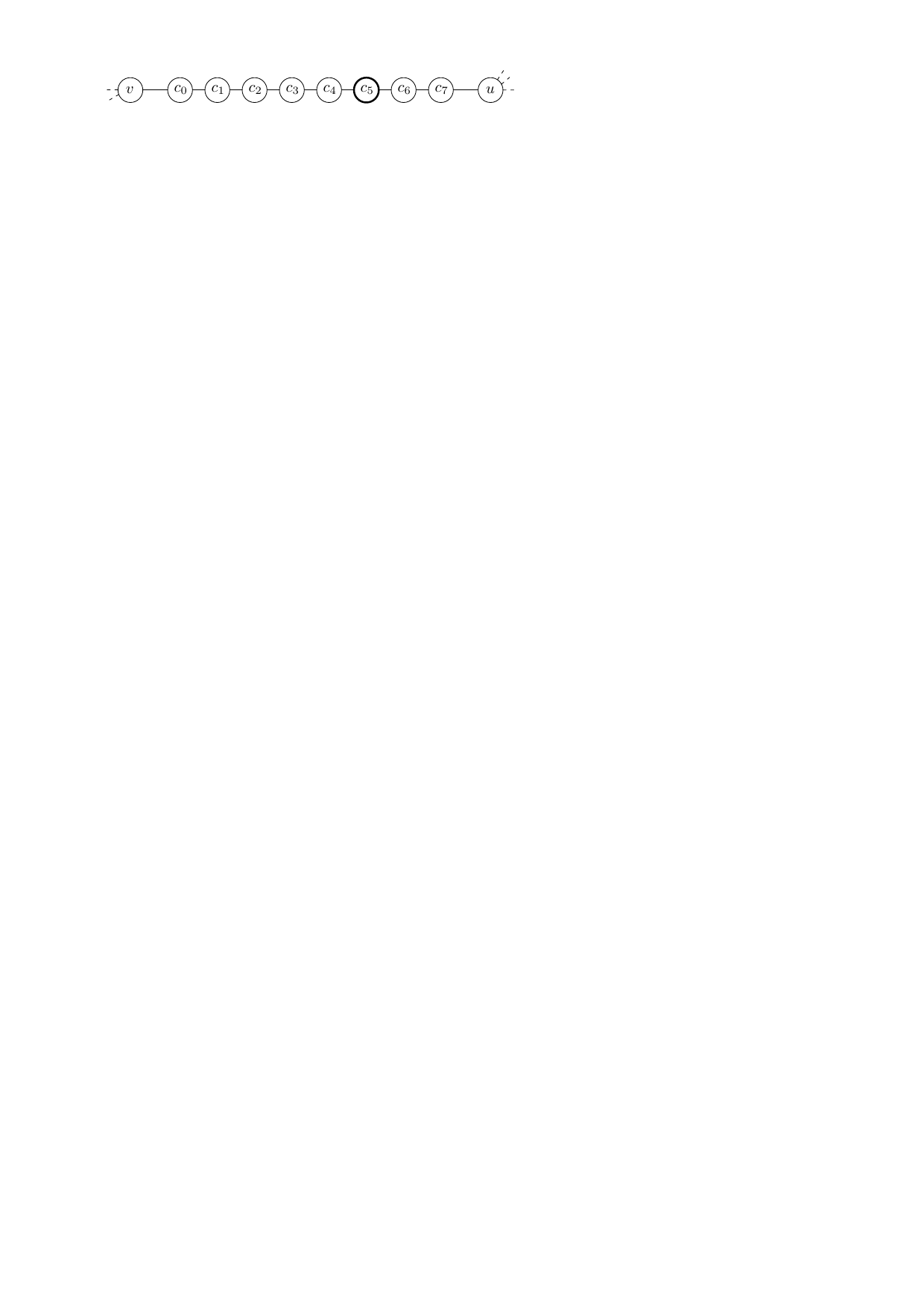}}
            \caption{(\subref{fig:fvs-chain}) A chain structure of size eight with two endpoints. (\subref{fig:fvs-replacement}) The replacement of the structure. An example of the mapping $f$ is also illustrated through the vertices with a thicker border.}
            \label{fig:fvs-gadget}
\end{figure}

We shall now prove that $\cmfvs(G, k) = \cmfvs(G', k')$. To this end, we define a mapping $f$ from the set of minimum feedback vertex sets of $G$ to those of $G'$ and show that this is a bijection, which immediately implies that the two sets have the same cardinality. 
For a natural number $m$, define $\bin(m)$ to be the binary representation of $m$ on $p$ bits and define $\bin(m)_i$ to be the $i$'th least significant bit of $\bin(m)$.
\[ f(X) = \begin{cases}
    X \cup \{a_i \mid 0 \leq i < p\} & \mbox{if } X \cap C = \emptyset \\ \\
    (X \setminus C) \cup \{w\} \cup \{a_i \mid 0 \leq i < p \wedge \bin(m)_i = 0\} \\ \cup  \{b_i \mid 0 \leq i < p \wedge \bin(m)_i = 1\}  & \mbox{if } X \cap C = \{c_m\} \\
\end{cases}\]

As a first observation, note that for any minimum feedback vertex set $X$ of $G$, we have $|X \cap C| \leq 1$ as picking any one vertex from $C$ will already break all cycles that go through the chain. 
\begin{claim} \label{claim:fvs_fx_is}
    If a set $X$ is a minimum FVS of $G$, then $f(X)$ is a minimum FVS of $G'$. 
\end{claim}
\begin{claimproof}
We prove this in two parts. First we prove that $f(X)$ is an FVS of $G'$ and then we prove that $f(X)$ is indeed an FVS of $G'$ of minimum size. 

To prove that $f(X)$ is an FVS of $G'$, we use a proof by contradiction. Assume $f(X)$ is not an FVS of $G'$, in which case a (simple) cycle $W$ exists in $G'-f(X)$. This cycle must intersect $C'$ as $X \setminus C = f(X) \setminus C'$ and $G-C = G'-C'$ thus $G-C-X = G'-C'-f(X)$, which means $W$ would otherwise also exist in $G-X$, contradicting that $X$ is an FVS of $G$. The cycle $W$ cannot contain a $b_j$ vertex since by definition of $f$, for $0 \leq i < p$, either $a_i$ or $b_i$ is in $f(X)$, which would either mean $b_j$ is isolated in $G'-f(X)$ or is removed. Also, $W$ can not contain any $a_j$ vertex as, by definition of $f$, if $a_j$ is not in $f(X)$, then both its neighbors $w$ and $b_j$ are in $f(X)$. This leaves only the option that $W$ intersects $C'$ through only vertex $w$. This means that $W-\{w\}$ contains a path from $v$ to $u$ in $G'-f(X) - \{w\}$. However, since as mentioned before $G-C-X = G'-C'-f(X)$, this same path also exists in $G-X$. Furthermore, since $w \notin f(X)$, that must mean that $X \cap C = \emptyset$ so $(W \setminus \{w\})\cup C$ would form a cycle in $G-X$ contradicting that $X$ is an FVS of $G$. 
%As we reach a contradiction in all cases, we must have that $f(X)$ is an FVS of $G'$. 

%For proving that $f(X)$ is a minimum FVS of $G'$ we also use a proof by contradiction.
Next we prove that $f(X)$ is a minimum FVS of $G'$. 
Assume for sake of a contradiction that $f(X)$ is not a minimum FVS of $G'$ due to the existence of a $Y \subseteq V(G')$ with $|Y| < |f(X)|$ such that $Y$ is an FVS of $G'$. We first observe that any FVS of $G'$ contains at least $p$ vertices from $C'$ since all $p$ of the pairs $(a_i, b_i)$ form vertex disjoint cycles. Similarly so, if an FVS of $G'$ contains $w$, then it contains at least $p+1$ vertices from $C'$. We distinguish two cases.
%and show that both lead to a contradiction.
\begin{description}
    \item[Case $w \notin Y$:] Then $Y \setminus C'$ is an FVS of $G$. If $G- (Y\setminus C')$ would contain a cycle, then there would also exist a cycle in $G'-Y$ since $G-C = G'-C'$ and both graphs $G- (Y\setminus C')$ and $G'-Y$ have a $vu$ path, the former through $C$ and the latter through $w$. Furthermore, $|Y \setminus C'| \leq |Y| - p < |f(X)| - p = |X| + p - p = |X|$, contradicting that $X$ is a minimum FVS of $G$. 
    \item[Case $w \in Y$:] Then $X' := (Y \setminus C') \cup \{c_j\}$ is an FVS of $G$ for any $0 \leq j < 2^p$. If this were not the case, then, since $X'$ contains a chain vertex, a cycle would need to exist completely in $G - C - X'$ which is the same graph as $G' - C' - Y$. This would contradict $Y$ being an FVS of $G'$. Furthermore, $|X'| \leq |Y| - (p + 1) + 1 < |f(X)| - p = |X| + p - p = |X|$, contradicting that $X$ is a minimum FVS of $G$. 
\end{description} 
As both cases lead to a contradiction, we conclude that $f(X)$ is a minimum FVS of $G'$.
\end{claimproof}
% This allows us to conclude that $f$ is indeed a well-defined mapping from the set of minimum feedback vertex sets of $G$ to those of $G'$. 

\begin{claim}
    The function $f$ is bijective. 
\end{claim}
\begin{claimproof}
We first argue that $f$ is injective. Let $X$ and $X'$ be two minimum feedback vertex sets of $G$ such that $X \neq X'$. That means $X \setminus C \neq X' \setminus C$ or $X \cap C \neq X' \cap C$. The former immediately allows us to conclude that $f(X) \neq f(X')$ since $X \setminus C = f(X) \setminus C'$ and $X' \setminus C = f(X') \setminus C'$. In the second case, we have two options. The first is that $X \cap C = \{c_j\}$ and $ X' \cap C = \{c_m\} $ for some $j \neq m$, and since binary representations are unique, they lead to different sets $f(X)$ and $f(X')$. 
The second option is that either $X$ or $X'$ contains no vertex from $C$ while the other one does. Then only one of $f(X)$ or $f(X')$ contains $w$ and the other one does not, so $f(X) \neq f(X')$.

It remains to show that $f$ is surjective. To this end, we take an arbitrary minimum FVS $Y$ of $G'$ and show that there exists a minimum FVS $X$ of $G$ such that $Y = f(X)$. We distinguish two cases:

\begin{description}
    \item[Case $w \notin Y$:] For this case, first observe that $\{ a_i \mid 0 \leq i < p \} \subseteq Y$ since otherwise $w$ would form a cycle with one of these $a_i$ vertices in $G-Y$. Furthermore, $b_i \notin Y$ for $0 \leq i < p$ since $Y$ already contains $a_i$, leaving $b_i$ isolated in $G-Y$ and thus a redundant choice for a minimum FVS. From this we can conclude that $Y \cap C' = \{ a_i \mid 0 \leq i < p \}$. Therefore, taking $X := Y \setminus C'$ would give $f(X) = Y$. We have already argued in case $w \notin Y$, that $Y \setminus C'$ is an FVS of $G$. To show that it is a minimum FVS of $G$, note that if there was an $X' \subseteq V(G)$, $|X'| < |X|$ such that $X'$ is an FVS of $G$, then $f(X')$ is an FVS of $G'$ and $|f(X')| = |X'| + p < |X| + p = |Y \setminus C'| + p = |Y| - p + p = |Y|$ which would contradict $Y$ being a minimum FVS of $G'$.   
    \item[Case $w \in Y$:] For this case, we instead observe that for each pair $\{a_i, b_i\}$, exactly one of $\{ a_i, b_i\}$ is in $Y$, as otherwise $Y$ would not be an FVS of minimum size. Since there are $p$ of such pairs, there is a unique value $0 \leq j < 2^p$ such that the binary representation of $j$ corresponds to the choice of $a$ and $b$ vertices in $Y$. We can then choose $X := (Y \setminus C') \cup \{c_j \}$. We clearly have that $f(X) = Y$. Also, we have already shown before that $X$ is an FVS of $G$ and the argument that $X$ is a minimum FVS of $G$ is analogous to that in case $w \notin Y$. 
\end{description}
From this reasoning we can conclude that $f$ is a bijective function from the set of minimum feedback vertex sets of $G$ to the set of minimum feedback vertex sets of $G$.
\end{claimproof} 

In the argument above, we assumed that the length of the chain is a power of two. We can address this by noting that any natural number can be written as a sum of unique powers of two. Similarly, we can decompose the chain $C$ into a number of subpaths each having a number of vertices that is a unique power of two. We can then apply the replacement described above on each subpath individually to get multiple replacement structures in a chain between the endpoints of $C$.  
As seen before, the size of the replacement is linear in the exponent of the length of the chain. In the worst case, when expressing $|C|$ in binary as a sum of distinct powers of two, the exponents of these powers sum up to $\bO(\log(|C|)^2)$, which is also the bound on the number of vertices used in our replacement and on the increase in the parameter value. 
\end{proof}

We remark for \autoref{lem:cfvs-chainred} that a similar chain replacement gadget without parallel edges can be constructed, at the expense of a linear increase in the size of the gadget.

By adapting the iterative compression algorithm by Cao et al.~\cite{fvs-new-measure-cao} for the decision \textsc{Feedback Vertex Set} problem, we can derive the following lemma. Its proof is given in \autoref{app:fpt-cfvs-algo}. 
\begin{restatable}{lemma}{cfvsFPTAlgo} \label{lem:cfvs-algo}
There is an algorithm that, given a graph $G$ and integer $k$, computes $\cmfvs(G, k)$ in $2^{\bO(k)} \cdot \poly (n)$ time.
\end{restatable}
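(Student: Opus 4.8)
The plan is to reduce the problem to two subtasks---deciding the feedback vertex number of $G$, and then counting feedback vertex sets of that exact size---and to solve the counting with a count-aware variant of iterative compression, which in turn rests on a count-aware variant of the \emph{disjoint feedback vertex set} subroutine of Cao et al. First I would run a standard FPT algorithm for the decision version of \textsc{Feedback Vertex Set} in $2^{\bO(k)}\cdot\poly(n)$ time to compute $s :=$ the feedback vertex number of $G$, or to detect $s>k$; in the latter case $\cmfvs(G,k)=0$ and we are done. Otherwise $s\le k$, and since $G$ has no feedback vertex set of size smaller than $s$, the value $\cmfvs(G,k)$ equals the number of feedback vertex sets of $G$ of size \emph{at most} $s$. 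So it suffices to count these in $2^{\bO(k)}\cdot\poly(n)$ time.

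For that, order $V(G)=\{v_1,\dots,v_n\}$ and put $G_i := G[\{v_1,\dots,v_i\}]$. Processing $i$ from $1$ to $n$, I maintain a feedback vertex set $Z_i$ of $G_i$ with $|Z_i|\le s$---which exists because the feedback vertex number of $G_i$ is at most that of $G$---together with the number $C_i$ of feedback vertex sets of $G_i$ of size at most $s$; initially $Z_0=\emptyset$, $C_0=1$. To pass from $G_{i-1}$ to $G_i$, let $Z := Z_{i-1}\cup\{v_i\}$, a feedback vertex set of $G_i$ of size at most $s+1$. Each feedback vertex set $S$ of $G_i$ is determined by, and so may be partitioned according to, its intersection $Z':=S\cap Z$ with $Z$: the set $S\setminus Z'$ is a feedback vertex set of $H := G_i-Z'$ of size at most $s-|Z'|$ that avoids $W := Z\setminus Z'$, and conversely every such set arises in this way. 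Hence
\[
  C_i \;=\; \sum_{Z'\subseteq Z}\;\#\bigl\{\,S'\subseteq V(H)\setminus W \;:\; S'\text{ is an FVS of }H,\ |S'|\le s-|Z'|\,\bigr\},
\]
a sum of at most $2^{s+1}$ terms with no double counting; a witnessing $Z_i$ of size at most $s$ is read off from any nonzero summand (or found with the decision version of the subroutine, which must succeed). The final value $C_n$ is the desired count.

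It remains to count disjoint feedback vertex sets: given a graph $H$, a feedback vertex set $W$ of $H$, and $r\ge 0$, count the feedback vertex sets of $H$ of size at most $r$ that avoid $W$, in $\alpha^{\bO(r+|W|)}\cdot\poly(n)=2^{\bO(k)}\cdot\poly(n)$ time for a constant $\alpha$ (here $r\le s\le k$ and $|W|\le s+1$). I would adapt the branching algorithm of Cao et al.\ as follows. If $H[W]$ contains a cycle the count is $0$; otherwise $F := V(H)\setminus W$ induces a forest from which we must choose the solution $S$. Reduction steps that merely simplify the instance without a case distinction remain correct for counting---deleting a vertex of $F$ of degree at most one in $H$ (it lies on no cycle) and forcing into $S$ a vertex of $F$ that together with a path through $W$ already closes a cycle. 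Every genuine branching is taken as the exhaustive, disjoint case distinction ``$v\in S$'' versus ``$v\notin S$'', so the corresponding counts simply add; for example, on a leaf or degree-two vertex $v$ of the forest $H-W$ that has at most one neighbour in $W$, the branch $v\notin S$ absorbs $v$ into $W$ while the branch $v\in S$ decrements $r$, and in both the measure used in the algorithm of Cao et al.\ (a combination of $r$, the size of $F$, and the number of connected components of $H[W]$) decreases by a constant. The search tree therefore has $2^{\bO(r+|W|)}$ leaves, each handled in polynomial time. In total, the $n$ outer iterations each branch over at most $2^{s+1}$ subsets $Z'$ and spend $2^{\bO(k)}$ per subset on the disjoint count, which together with the initial decision algorithm yields running time $2^{\bO(k)}\cdot\poly(n)$.

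The step I expect to be the main obstacle is exactly this count-aware disjoint-FVS subroutine. Several rules in the original decision algorithm are justified by exchange arguments of the form ``there is a minimum solution that does (not) contain $v$'', which preserve the existence of a solution but not their number; each such rule must be replaced by a genuine branching enumerating both possibilities, after which one has to re-verify that the complexity measure still strictly decreases in each branch so that the single-exponential running time is retained. The delicate cases are precisely the leaves and degree-two vertices of $H-W$, and the ``generalized leaf'' vertices that the decision algorithm silently moves into $W$.
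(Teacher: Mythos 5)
Your overall architecture (reduce to a disjoint-FVS counting subroutine via compression over the subsets of a small feedback vertex set) is the same as the paper's; the paper performs a single compression round from one FVS of size at most $k$ found by a decision algorithm rather than iterating over $v_1,\dots,v_n$, but that difference is harmless. The problem lies in the subroutine you reduce to and in the rules you claim remain valid for it.

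You ask the subroutine to count \emph{all} feedback vertex sets of $H$ of size at most $r$ that avoid $W$. For that problem the simplification rules you keep are unsound, because any superset of an FVS is again an FVS: a set of size at most $r$ may well contain a vertex lying on no cycle. Concretely, take $H$ consisting of an edge $\{b,c\}$ and an isolated vertex $p$, with $W=\{p\}$ and $r=1$; the correct count is $3$ (namely $\emptyset$, $\{b\}$, $\{c\}$), but after deleting the degree-one vertices $b$ and $c$ the algorithm returns $1$. Such instances genuinely arise in your recursion (e.g.\ for $G_i$ a triangle $abc$ with a pendant $p$ at $a$, $Z'=\{a\}$, $s=2$), so your intermediate values $C_i$ are wrong, and the claim that these rules ``remain correct for counting'' is false for the problem as you specified it. One could try to argue that the top-level call at $i=n$ is saved because there the budget equals the feedback vertex number of $G$, so every counted set extends to a \emph{minimum} FVS of $G$ and hence avoids cycle-free vertices; but you never make that argument, and it would still have to be threaded through the degree-two chains. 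The paper's resolution is different and cleaner: the disjoint subroutine counts only \emph{minimum} disjoint feedback vertex sets, returning a pair (minimum size, count) combined by a min-sum operation $\oplus$, and it contracts degree-two chains using vertex \emph{weights} (the weight of a contracted vertex is the sum of the weights of its parts), which preserves the weighted count of minimum solutions precisely because a minimum FVS never contains two adjacent degree-two vertices. This weighting device is the key idea your proposal is missing.

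There is a second gap in the running-time analysis. The measure of Cao et al.\ is $r+\ell$, where $\ell$ is the number of components of $H[W]$; it cannot include $|F|$, since then it would be $\Omega(n)$ and the search tree would not be $2^{\bO(k)}$. Branching ``$v\in S$ versus $v\notin S$'' on a vertex $v$ with at most one neighbour in $W$ does \emph{not} decrease this measure in the ``$v\notin S$'' branch: moving $v$ into $W$ merges no components (and creates a new one if $v$ has no neighbour in $W$). This is exactly why the algorithm contracts degree-two chains and only branches on vertices with at least two neighbours in $W$, or performs the tailored three-way and five-way branches on near-leaf configurations of the forest $H-W$. Replacing every exchange-argument rule by naive exhaustive two-way branching, as you propose, destroys the measure decrease and with it the $2^{\bO(k)}$ bound.
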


% I moved the statement to the introduction and repeat it here.
\fvsKernelTheorem*
\begin{proof}
First we use the algorithm from \autoref{lem:cfvs-kernel} to find a graph $G^*$ and integer $k^* \leq k$ such that $\cmfvs(G, k) = \cmfvs(G^*, k^*)$, $|V_{\neq 2}(G^*)| = \bO(k^3)$ and $G^* - V_{\neq 2}(G^*)$ has $\bO(k^3)$ connected components (chains). Then, if there is a chain of size larger than $2^k$, we can run the algorithm from \autoref{lem:cfvs-algo} to compute $\cmfvs(G^*, k^*)$ in $\poly (n)$ time since $n > 2^k$. Otherwise, all chains of $G^*$ have size at most $2^k$ and we can use \autoref{lem:cfvs-chainred} on $G^*$ to find a graph $G'$ and integer $k'$ such that:
\begin{itemize}
    \item $\cmfvs(G', k') = \cmfvs(G^*, k^*) = \cmfvs(G, k)$,
    \item $|V(G')| = \bO(k^3) + \bO( k^3 \cdot \log (2^k)^2) = \bO(k^5)$, and
    \item $k' = \bO(k + k^3 \cdot \log (2^k)^2) = \bO(k^5)$. \qedhere
\end{itemize}
\end{proof}

\section{Counting dominating sets in planar graphs} \label{sec:cpds}

In this section, we show how our approach can also be used with a different technique. In \autoref{sec:cfvs} we adapted a decision kernel and replaced remaining, potentially large structures by smaller, similarly behaving structures. In this section we will instead focus on protrusion decompositions and the reduction of protrusions. 

We consider the problem of counting minimum dominating sets of a planar graph. 
In the context of dominating sets, we assume all graphs to be undirected and simple, so they do not have parallel edges. We start by showing that if a planar graph $G$ has a dominating set of size at most $k$, we can use this to find a protrusion decomposition. 

\begin{lemma} \label{lem:cpds-protrusiondecomp}
Let $G$ be a planar graph and $k$ an integer. If $G$ has a dominating set of size at most $k$, then $G$ has an $(\bO(k), \bO(k), 8)$-protrusion decomposition $(T, \chi)$ with root $r$ of $T$ such that, for each child $w$ of $r$, the set of vertices $\chi(T_w) \setminus \partial(\chi(T_w))$ can be dominated by at most two vertices in $\partial^*(\chi(T_w)) = \chi(w)$.
Moreover, there is a polynomial-time algorithm that, given $G$ and $k$, either finds such a protrusion decomposition or determines that the domination number of $G$ is greater than $k$. 
\end{lemma}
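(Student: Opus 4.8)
The plan is to combine the standard kernelization machinery for planar \textsc{Dominating Set} with \cref{lem:mod-to-protrusiondecomp}. The first step is to obtain a small treewidth modulator. Assume $G$ has a dominating set $D$ of size at most $k$; we would start by computing a dominating set of size $\bO(k)$ in polynomial time (using, e.g., the classical constant-factor approximation for planar \textsc{Dominating Set}, or by noting that if no such small dominating set is found, we can report that the domination number exceeds $k$). Call the resulting set $S$ with $|S| = \bO(k)$. The key observation is that $S$ is a treewidth-$2$-modulator: since $S$ is a dominating set, every connected component of $G - S$ is dominated "from outside", but more to the point, a standard argument shows $\tw(G - S) = \bO(1)$ — in fact one can take $S$ itself together with the fact that $G - S$ has a dominating set consisting of vertices of $S$ to bound its treewidth, or invoke that $G - S$ has bounded treewidth because $G$ restricted to non-dominator vertices has a very restricted structure. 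Concretely, I would use that a planar graph with a dominating set of size $d$ in which every vertex is dominated has treewidth $\bO(\sqrt d)$ only in general, so instead the cleanest route is: $S$ dominates $G$, hence $G - S$ is a planar graph such that $N_G[S] = V(G)$; this does not immediately bound $\tw(G-S)$, so the honest approach is to take $S = N_G[D]$ is not bounded either. Let me reconsider: the intended modulator is simply $D$ itself (or its $\bO(k)$-approximation), and the relevant fact is that $G - D$ need not have small treewidth — but \emph{we do not need that}; what we need is a \emph{treewidth-$\eta$-modulator} for constant $\eta$, and here one uses that after taking $S := D$, the graph $G - S$ is \emph{not} bounded-treewidth in general, so the correct modulator is obtained differently.

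Let me restart the plan more carefully. The right approach: take an approximate dominating set $S$ of size $\bO(k)$; then $G - S$ has the property that it is planar and every vertex of $G-S$ has a neighbor in $S$ in $G$ (since $S$ dominates). This still does not bound $\tw(G-S)$. So instead I would invoke the known result (used in all linear kernels for planar \textsc{Dominating Set}, e.g.\ \cite{FominLST20,BodlaenderFLPST16}) that a dominating set of size $k$ in a planar graph yields a treewidth-modulator of size $\bO(k)$ for \emph{constant} treewidth: specifically, $D$ together with the set of vertices with $\geq 3$ neighbors in $D$ — which by \cref{lem:planar-deg-3-bound} has size $\bO(|D|)$ — forms such a set whose deletion leaves a graph where each component attaches to $D$ in a very controlled way, and one shows this residual graph has treewidth at most $2$ (each remaining vertex has $\leq 2$ neighbors among the "important" vertices, and the planar structure forces low treewidth). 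So the modulator is $S := D \cup N_3$ where $N_3 = \{v \in V(G)\setminus D : |N_G(v) \cap D| \geq 3\}$, of size $\bO(k)$, and $\tw(G - S) \leq \eta$ for some constant $\eta$.

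With the treewidth-$\eta$-modulator $S$ of size $\bO(k)$ in hand, I would feed $(G, S)$ into \cref{lem:mod-to-protrusiondecomp} to get an $(\alpha,\beta,t)$-protrusion decomposition $(T,\chi)$ with $\alpha = \bO(\eta|S|) = \bO(k)$, $\beta = \bO(\eta|S|) = \bO(k)$, and $t = 3\eta + 2$, which is a constant — and by choosing $\eta = 2$ we get $t = 8$, matching the statement. The lemma also guarantees $|\partial^*(\chi(T_w)) \cap S| \leq 2$ for each child $w$ of $r$, and $S \subseteq \chi(r)$. The final step is to verify the domination property: for each child $w$ of $r$, the vertices in $\chi(T_w) \setminus \partial(\chi(T_w))$ lie in $V(G) \setminus S$ (since all of $S$ is in the root bag or in the extended boundary, and more precisely $\chi(T_w) \cap S \subseteq \partial^*(\chi(T_w))$; one checks the interior vertices avoid $S$). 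Each such interior vertex $v$ is dominated by $D \subseteq S$ in $G$; by construction of the protrusion decomposition, any neighbor of $v$ outside $\chi(T_w)$ lies in $\partial(\chi(T_w)) \subseteq \partial^*(\chi(T_w))$, so the dominators of $v$ that lie in $S$ are either in $\chi(T_w) \cap S \subseteq \partial^*(\chi(T_w))$ or in $\partial^*(\chi(T_w))$ — hence all $D$-dominators of interior vertices of this protrusion lie in $\partial^*(\chi(T_w)) \cap S$, which has size at most $2$. Therefore $\chi(T_w)\setminus\partial(\chi(T_w))$ is dominated by at most two vertices of $\partial^*(\chi(T_w))$. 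The algorithmic statement follows since every step — computing the approximate dominating set (or failing and reporting the domination number exceeds $k$), forming $N_3$, and running the algorithm of \cref{lem:mod-to-protrusiondecomp} — runs in polynomial time.

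The main obstacle I anticipate is establishing that $S = D \cup N_3$ is genuinely a treewidth-$\eta$-modulator with $\eta$ small enough to yield $t = 8$ (i.e.\ $\eta \leq 2$); this requires a careful planarity argument that after removing $S$, each remaining vertex has at most two neighbors in $D$, the components of $G - S$ are "thin" around $D$, and invoking that planar graphs with such restricted attachment structure have treewidth at most $2$. A secondary subtlety is the bookkeeping in \cref{lem:mod-to-protrusiondecomp} to confirm that the interior vertices of each protrusion really avoid $S$ and that all their $D$-dominators end up in the (size-$\leq 2$) intersection of the extended boundary with $S$ — this is where the second bullet of \cref{lem:mod-to-protrusiondecomp} does the heavy lifting, and I would lean on the proofs of Lemmas 15.13--15.14 of \cite{kernel-book-fomin} that it cites.
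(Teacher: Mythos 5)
Your overall architecture is the right one --- constant-factor approximate dominating set, then a treewidth-$2$-modulator of size $\bO(k)$, then \cref{lem:mod-to-protrusiondecomp} with $\eta=2$ to get $t=3\eta+2=8$, and finally the two-dominator property read off from the guarantees $\chi(T_w)\cap S\subseteq\partial^*(\chi(T_w))$ and $|\partial^*(\chi(T_w))\cap S|\leq 2$. That last part of your argument (interior vertices have all neighbors inside the protrusion, so their dominators from $S$ land in $\partial^*(\chi(T_w))\cap S$) is essentially the paper's, and is fine.

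The genuine gap is exactly the step you flag as ``the main obstacle'': producing the treewidth-$2$-modulator. Your candidate $S := D \cup N_3$, where $N_3$ is the set of vertices with at least three neighbors in $D$, is asserted but not proved to satisfy $\tw(G-S)\leq 2$, and this is not a standard fact --- \cref{lem:planar-deg-3-bound} is used in the Alber--Fellows--Niedermeier kernel only to bound $|N_3|$, not to turn $D\cup N_3$ into a bounded-treewidth modulator, and it is far from clear that a planar graph minus a (disconnected) dominating set plus its high-degree attachments has treewidth $2$ (you would at minimum need $\eta\leq 2$ exactly, since $t=8$ forces it). The paper takes a different and much cleaner route here: it converts the $2$-approximate dominating set $S$ into a \emph{connected} dominating set $S'$ with $|S'|\leq 3|S|\leq 6k$ (per component, via \cite[Claim 15.10]{kernel-book-fomin}), and then uses that deleting a connected dominating set from a planar graph leaves an \emph{outerplanar} graph --- contract $S'$ to a single universal vertex and remove it --- which has treewidth at most $2$ \cite[Lemma 78]{arboretum-bodlaender}. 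That single observation discharges the modulator step with the correct constant and in polynomial time. Replacing your $D\cup N_3$ construction with the connected-dominating-set argument would close the gap; as written, the proof is incomplete at its load-bearing step.
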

\begin{proof}
The decision version of \textsc{Dominating Set} on planar graphs admits a polynomial-time approximation scheme~\cite{approx-algos-planar-baker}. As a first step, we can use this to find a 2-approximate dominating set $S$ of $G$ in polynomial time. If $|S| > 2k$ then the domination number of $G$ is greater than $k$, so we are done. Otherwise, in case $G$ is connected, we can turn $S$ into a connected dominating set $S'$ of $G$ such that $|S'| \leq 3|S| \leq 6k$ \cite[Claim 15.10]{kernel-book-fomin}. If $G$ is not connected, we can do the same for each connected component of the graph. Since $G$ is a planar graph, $G-S'$ is an outerplanar graph and thus has treewidth at most two \cite[Lemma 78]{arboretum-bodlaender}. This means that $S'$ is a treewidth-2-modulator for $G$.
%and we can compute a tree decomposition of $G-S'$ of width at most two in linear time using the algorithm by Bodlaender~\cite{lin-time-td-bodlaender}. 
Therefore, we can use \autoref{lem:mod-to-protrusiondecomp} to find in polynomial time an $(\bO(k), \bO(k), 8)$-protrusion decomposition that satisfies the described requirements.
\end{proof}

We call a vertex set $C \subseteq V(G)$ a \emph{wide diamond} if there are distinct vertices $v,u \in V(G) \setminus C$ such that for all $c \in C$, we have that $N_G(c) = \{v,u\}$. We consider $v$ and $u$ to be the endpoints of the wide diamond $C$. We make the following observation concerning minimum dominating sets on wide diamonds.
\begin{observation} \label{obs:wide-diamond}
    If $C$ is a wide diamond with distinct endpoints $v,u$ in a graph $G$ and $|C| \geq 3$, then any minimum dominating set $X$ of $G$ satisfies the following properties.
    \begin{enumerate}
    \item \label{obsit:wide-diamond-1} $X \cap \{v,u\} \neq \emptyset$, since if $X$ contains neither $v$ nor $u$, we must have that $C \subseteq X$, while $(X \setminus C) \cup \{v,u\}$ would then be a dominating set of smaller size.
    \item \label{obsit:wide-diamond-2} If $\{v, u \} \subseteq X$, then $X \cap C = \emptyset$, since $N_G[c] \subseteq N_G[\{v,u\}]$ for all $c \in C$.
    \item \label{obsit:wide-diamond-3} $|X \cap C| \leq 1$, since by Property~\ref{obsit:wide-diamond-1}, we can assume w.l.o.g. that $v \in X$, and $N_G[c] \setminus N_G[v] = \{u\}$ for all $c \in C$.
    \end{enumerate}
\end{observation}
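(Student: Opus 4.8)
The plan is to prove all three properties by short exchange/redundancy arguments that exploit the crucial feature of a wide diamond: every $c \in C$ satisfies $N_G[c] = \{c,v,u\}$, so a diamond vertex can only ever help dominate itself together with the two endpoints. Consequently, whenever we delete a set of diamond vertices from a dominating set, the only vertices that can possibly become undominated lie in $C \cup \{v,u\}$, and these are cheap to re-dominate.

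For Property~\ref{obsit:wide-diamond-1} I would argue by contradiction. If $X \cap \{v,u\} = \emptyset$, then each $c \in C$ must be dominated by some member of $N_G[c] = \{c,v,u\}$, which forces $c \in X$; hence $C \subseteq X$. Now set $X' = (X \setminus C) \cup \{v,u\}$. Since $v,u \notin X$ we get $|X'| = |X| - |C| + 2 \le |X| - 1$ using $|C| \ge 3$. To see $X'$ still dominates $G$: a vertex that loses all its dominators when passing from $X$ to $X \setminus C$ must lie in $C \cup \{v,u\}$, and all such vertices are in $N_G[v] \cup N_G[u] \subseteq N_G[X']$ (each $c \in C$ is adjacent to $v$), while every other vertex is still dominated by $X \setminus C \subseteq X'$. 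This contradicts minimality of $X$.

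Properties~\ref{obsit:wide-diamond-2} and~\ref{obsit:wide-diamond-3} then follow by showing a relevant diamond vertex is redundant. If $\{v,u\} \subseteq X$ and some $c \in X \cap C$, consider $X \setminus \{c\}$: the only vertices that could become undominated are in $N_G[c] = \{c,v,u\}$, but $v,u \in X \setminus \{c\}$ and $c \in N_G[v]$, so $X \setminus \{c\}$ is a strictly smaller dominating set — contradiction, giving Property~\ref{obsit:wide-diamond-2}. For Property~\ref{obsit:wide-diamond-3}, by Property~\ref{obsit:wide-diamond-1} assume w.l.o.g.\ $v \in X$; if $c_1, c_2 \in X \cap C$ are distinct, then in $X \setminus \{c_2\}$ the only candidate undominated vertices are $\{c_2, v, u\}$, and $c_2, v \in N_G[v]$ while $u \in N_G[c_1]$ with $c_1 \in X \setminus \{c_2\}$, so again $X \setminus \{c_2\}$ dominates $G$ and is smaller, a contradiction.

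I do not anticipate a genuine obstacle: the only things needing care are the bookkeeping of which vertices lose their dominators after a deletion (always a subset of $C \cup \{v,u\}$) and the harmless edge case that $v$ and $u$ may themselves be adjacent, which only strengthens the redundancy arguments. Since the graphs here are simple, no edge multiplicities complicate the neighborhoods.
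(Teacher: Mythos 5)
Your proof is correct and follows essentially the same exchange/redundancy arguments that the paper embeds directly in the statement of the observation (each property's ``since \dots'' clause); you merely spell out the bookkeeping of which vertices can lose their dominators, which is exactly the intended justification. No gaps.
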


In \autoref{lem:cpds-netred} we show how we can efficiently replace a wide diamond by a smaller, similarly behaving structure. We show how to use this replacement to reduce the protrusions of a protrusion decomposition in \autoref{lem:cpds-protrusionred}.

\begin{lemma} \label{lem:cpds-netred}
There exists a polynomial-time algorithm that, given a planar graph $G$ with a wide diamond $C$ and an integer $k$, outputs a planar graph $G'$ obtained from $G$ by replacing $C$ with a vertex set $C'$, and an integer $k'$, such that:
\begin{itemize}
    \item $\cmds(G, k) = \cmds(G', k')$,
    \item $G - C = G' - C'$,
    \item $N_G(C) = N_{G'}(C')$,
    \item $|C'| = \bO(\log (|C|)^2)$, and
    \item $k' = k+ \bO(\log (|C|)^2)$.
\end{itemize}
\end{lemma}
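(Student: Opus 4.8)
The plan is to mimic the chain-replacement construction from \autoref{lem:cfvs-chainred}, adapting it to the domination setting using the structure of a wide diamond recorded in \autoref{obs:wide-diamond}. First I would reduce to the case where $|C|$ is a power of two, say $|C| = 2^p$; as in the proof of \autoref{lem:cfvs-chainred}, a general $|C|$ is handled by writing it as a sum of distinct powers of two, splitting $C$ into groups of the corresponding sizes, and replacing each group independently, paying the $\bO(\log(|C|)^2)$ overhead in both the gadget size and the parameter increase. So assume $C = \{c_0,\dots,c_{2^p-1}\}$ with endpoints $v,u$. By \autoref{obs:wide-diamond}, in any minimum dominating set $X$ of $G$ either $\{v,u\}\subseteq X$ and $X\cap C=\emptyset$, or exactly one of $v,u$ lies in $X$ and $|X\cap C|\le 1$; so the only ``choice'' the chain contributes, beyond what happens outside $C$, is which single vertex $c_m$ of $C$ (if any) is selected — an option among $2^p$ possibilities that we want to encode with $\bO(p)$ vertices.

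The gadget $C'$ would consist of: a central vertex $w$ adjacent to both $v$ and $u$ (playing the role of ``a vertex of $C$ is available to be put in the solution''), plus $p$ bit-gadgets, one per bit position $0\le i<p$. Each bit-gadget must let a minimum dominating set make a free binary choice whose combined $2^p$ outcomes correspond bijectively to the index $m$, while not interacting with the rest of the graph and while keeping the total solution size increase equal to a fixed constant $p' := \Theta(p)$ independent of the choice. A natural candidate for bit-gadget $i$ is a small planar domination ``either/or'' widget — for instance two vertices $a_i,b_i$ each attached to a private pair of leaves (or a private triangle) so that any minimum dominating set of $G'$ must contain at least one of $a_i,b_i$, and such that choosing $a_i$ versus $b_i$ costs the same and both are consistent with $w$ being dominated; I would design it so that $w$ is dominated exactly when, say, all the ``$0$-choices'' $a_i$ are taken, or alternatively wire $w$ into the gadgets so that $w\in X$ forces one extra vertex just as in the FVS case. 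Set $k' = k + p$ (up to the logarithmic correction in the non-power-of-two case). Planarity is maintained because the gadget is itself planar and attaches to $G-C$ only through the two vertices $v,u$, exactly where $C$ attached; since $G$ was planar with $C$ a wide diamond on $\{v,u\}$, replacing $C$ by $C'$ preserves a planar embedding.

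Then I would prove $\cmds(G,k)=\cmds(G',k')$ by exhibiting a bijection $f$ between minimum dominating sets of $G$ and of $G'$, in the same spirit as \autoref{claim:fvs_fx_is} and the subsequent bijectivity claim. On $X$ with $\{v,u\}\subseteq X$ (so $X\cap C=\emptyset$), $f$ adds the ``default'' selection in each bit-gadget and whatever is needed to dominate $w$ (e.g. $w$ itself or the default $a_i$'s); on $X$ with exactly one endpoint in $X$ and $X\cap C=\{c_m\}$, $f$ removes $c_m$, adds $w$, and in each bit-gadget picks $a_i$ or $b_i$ according to bit $i$ of $m$; on $X$ with one endpoint and $X\cap C=\emptyset$, $f$ removes nothing from $C$ and picks the defaults. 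One checks that $f(X)$ is a dominating set of $G'$ (the endpoints dominate everything outside $C'$ that $C$ used to, and the gadget vertices dominate $C'$), that $|f(X)| = |X| + p$ so minimality is preserved in both directions, and that $f$ is injective (different $X\setminus C$, or different indices $m$ via uniqueness of binary representations, or the presence/absence of $w$) and surjective (any minimum dominating set of $G'$ contains exactly one of $a_i,b_i$ per bit-gadget and is forced into one of the described shapes, which is the preimage). The $\bO(\log(|C|)^2)$ bounds on $|C'|$ and on $k'-k$ then follow exactly as in \autoref{lem:cfvs-chainred}.

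The main obstacle I anticipate is engineering the bit-gadget so that all three conditions hold simultaneously in the \emph{simple, planar} setting: (i) every minimum dominating set of $G'$ is forced to spend exactly one vertex per bit-gadget with both choices $a_i$ and $b_i$ genuinely available, (ii) the gadget does not create cheaper ways to dominate $v$, $u$, or $w$ that would let a solution skip the intended structure or shave off a vertex, and (iii) the gadget's internal vertices are all dominated regardless of the binary choice, so that the choice is truly free and the $2^p$ outcomes are all realized. In the FVS proof the parallel-edge pairs $(a_i,b_i)$ gave this for free; here, since parallel edges are disallowed and domination (not acyclicity) is the constraint, I would likely use a pendant-edge or pendant-triangle construction and will need to verify carefully — probably via a short case analysis on which of $v,u,w$ are in a purported minimum dominating set — that no unintended savings are possible. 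Getting the interaction between $w$ and the bit-gadgets right, so that the size increase is a constant independent of the encoded choice, is the delicate point.
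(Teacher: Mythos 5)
Your high-level plan coincides with the paper's: split $C$ into groups whose sizes are distinct powers of two, replace each group by a planar gadget of $\bO(p)$ vertices that encodes a choice among $2^p$ options via per-bit ``either/or'' widgets, set $k'=k+p$, and prove $\cmds(G,k)=\cmds(G',k')$ by an explicit size-preserving bijection. However, there are two genuine gaps. The first is the one you yourself flag: the bit-gadget is never actually constructed, and the candidates you float do not work. If each of $a_i$ and $b_i$ carries its own private pair of leaves (or private triangle), then \emph{both} $a_i$ and $b_i$ are forced into every minimum dominating set, so there is no free binary choice at all; if instead only one of them carries the pendant structure, you must argue that the other is still sometimes selectable at no extra cost, which requires an external dominator. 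The paper resolves this with a specific asymmetric gadget: two vertices $x_v,x_u$ (adjacent to $v$ and $u$ respectively), and for each bit a triple $a_i,b_i,e_i$ where $a_i,b_i$ are adjacent to each other and to both $x_v,x_u$, and $e_i$ is a pendant on $b_i$. The free choice is then between $b_i$ and $e_i$ (both dominate the pendant $e_i$; the vertex $a_i$ is dominated by whichever of $x_v,x_u$ the solution contains), while in solutions containing neither $x_v$ nor $x_u$ the choice collapses to $b_i$ being forced. Since this is exactly the ``delicate point'' on which the whole equality $\cmds(G,k)=\cmds(G',k')$ rests, leaving it unresolved leaves the proof incomplete.

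The second gap is unacknowledged: you replace \emph{all} of $C$. In $G$, the fact that $C$ is a wide diamond with $|C|\geq 3$ forces every minimum dominating set to contain $v$ or $u$ (Property~1 of \autoref{obs:wide-diamond}), and the correctness of the case analysis for $f$ and for surjectivity depends on this forcing holding in $G'$ as well. If $C$ is removed entirely and the gadget attaches to $v$ and $u$ only through $x_v,x_u$ (or your single $w$), then $G'$ may admit minimum dominating sets containing neither $v$ nor $u$, which correspond to no minimum dominating set of $G$ and would corrupt the count. The paper avoids this by leaving three vertices $c_0,c_1,c_2$ of $C$ untouched (one group of size three in the power-of-two decomposition is never replaced), so that a size-$3$ wide diamond between $v$ and $u$ survives in $G'$ and \autoref{obs:wide-diamond} continues to apply there. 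Your decomposition into powers of two needs the analogous provision.
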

\begin{proof}
Let vertices $v$ and $u$ be the endpoints of the wide diamond $C$. Assume without loss of generality that $C$ consists of at least five vertices, since the statement is trivial otherwise. For simplicity, we first assume that $|C| = 2^p + 3$ for some integer $p$, so let $C = \{c_0, c_1, \cdots, c_{2^p+2}\}$. We then construct $G'$ from $G$ by replacing $C$ by $C'$, which has the following structure. 
In $C'$, we leave three vertices $c_0, c_1, c_2$ from $C$ untouched and replace all others. 
\begin{itemize}
    \item Add vertex $x_v$ with an edge to $v$ and vertex $x_u$ with an edge to $u$. 
    \item For each $0 \leq i < p$, add two adjacent vertices $a_i$ and $b_i$ and make these vertices adjacent to both $x_v$ and $x_u$.
    \item Add a degree-one vertex $e_i$ adjacent to $b_i$.
\end{itemize}
It is easy to see that $G'$ is planar since the constructed gadget has a planar embedding with its only boundary vertices $x_v, x_u$ on a single face (see \autoref{fig:pds-gadget}). The gadget can be drawn into any face of an embedding of $G$ that contains both $u$ and $v$, whose existence is guaranteed because $N_G(c_0) = \{v, u\}$. 
For the parameter, we set $k' = k + p$. 

\begin{figure}
            \centering%
            \subcaptionbox{\label{fig:pds-net}}
                {\includegraphics[page=1]{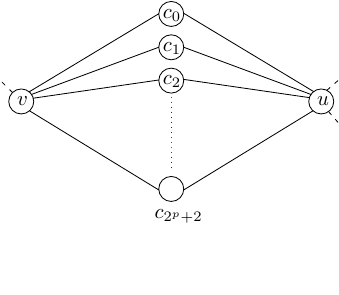}}\hfill
            \subcaptionbox{\label{fig:pds-replacement}}
                {\includegraphics[page=2]{figures/minimumpds-gadget-alt.pdf}}
            \caption{(\subref{fig:pds-net}) A wide diamond structure with endpoints. (\subref{fig:pds-replacement}) The replacement of the structure.}
            \label{fig:pds-gadget}
\end{figure}

We define a function $f$ from the set of minimum dominating sets of $G$ to the set of minimum dominating sets of $G'$ and we prove that $\cmds(G,k) = \cmds(G',k')$ by showing that $f$ is a well-defined, bijective function. 
\[ f(X) = \begin{cases}
    X \cup \{b_i \mid 0 \leq i < p\} & \mbox{if } X \cap C \subseteq \{c_0, c_1, c_2\}\\ \\
    (X \setminus C) \cup  \{x_u\} \cup \{b_i \mid 0 \leq i < p \wedge \bin(m-3)_i = 0\} \\ \cup \{e_i \mid 0 \leq i < p \wedge \bin(m-3)_i = 1\}  & \mbox{if } X \cap C = \{c_m\} \wedge m \geq 3 \wedge v \in X  \\ \\
    (X \setminus C) \cup  \{x_v\} \cup \{b_i \mid 0 \leq i < p \wedge \bin(m-3)_i = 0\} \\ \cup \{e_i \mid 0 \leq i < p \wedge \bin(m-3)_i = 1\}  & \mbox{if } X \cap C = \{c_m\} \wedge m \geq 3 \wedge u \in X 
\end{cases}\]

Note that since we assumed $|C| \geq 5$, by \autoref{obs:wide-diamond} the function $f$ covers all cases for a minimum DS $X$ of $G$. Also note that $\{c_0, c_1, c_2\}$ still forms a wide diamond of size at least three between $v$ and $u$, so the properties of the observation still apply in $G'$.

\begin{claim} \label{claim:ds-fx-is-min}
If a set $X$ is a minimum DS of $G$, then $f(X)$ is a 
minimum DS of $G'$.
\end{claim}
\begin{claimproof}
We prove this claim in two parts. We first prove that $f(X)$ is a DS of $G'$, and then we prove that it is a DS of minimum size. 

To prove that $f(X)$ is a DS of $G'$, we use a proof by contradiction, so assume $f(X)$ is not a DS of $G'$. This means there is a vertex $w \in V(G')$ that is not dominated by $f(X)$. 
If $w$ is a vertex of $G' - C' - \{v,u\}$, then since $G' - C' = G - C$, we have that $w$ is also a vertex of $G - C - \{v,u\}$. Furthermore, since $X \setminus C = f(X) \setminus C'$ and $N_G(C) = \{v, u\}$, if $w$ is not dominated by $f(X)$ it is also not dominated by $X$, which contradicts $X$ being a dominating set of $G$.
Thus $w \in C'$ or $w \in \{v,u\}$. The former can never be true, as in all cases the function $f$ takes the union over a dominating set of $C' \setminus \{c_0, c_1, c_2\}$ and by Property~\ref{obsit:wide-diamond-1} of \autoref{obs:wide-diamond}, either $v$ or $u$ is in $X$, and thus also in $f(X)$, ensuring that $c_0, c_1$ and $c_2$ are dominated. So, assume without loss of generality that $w = u$. Then, by Property~\ref{obsit:wide-diamond-1} of \autoref{obs:wide-diamond}, that means $v \in X$ and since $u$ is dominated by $X$ in $G$ that $X \cap C = \{c_m\}$ for some $c_m$. However, by definition of $f$, that would mean that either $c_m \in f(X)$ or $x_u \in f(X)$ and both $c_m$ and $x_u$ have an edge to $u$ which contradicts that $u$ is not dominated by $f(X)$. Thus $f(X)$ is a dominating set of $G'$. 

We also use a proof by contradiction to show that $f(X)$ is a minimum DS of $G'$. Assume that $f(X)$ is not a minimum DS of $G'$ due to the existence of a set $Y \subseteq V(G')$ with $|Y| < |f(X)|$ such that $Y$ is a dominating set of $G'$. Assume that $Y$ is a minimum such dominating set. 
We distinguish three cases:
\begin{description}
    \item[Case $x_v, x_u \notin Y$:] We can then show that $\{b_i \mid 0 \leq i < p\} \subseteq Y$. If there would be a $b_j \notin Y$, then since also $x_v, x_u \notin Y$, we would need to have that $a_j, e_j \in Y$ since these are the only remaining options to dominate these vertices. However, since $N_{G'}[\{a_j, e_j\}] = N_{G'}[b_j]$, taking $(Y \setminus \{a_j, e_j\}) \cup \{b_j\}$ would result a dominating set of smaller cardinality, contradicting that $Y$ is a minimum dominating set of $G'$. Thus $\{b_i \mid 0 \leq i < p\} \subseteq Y$. From this it follows that $Y' := Y \setminus \{b_i \mid 0 \leq i < p\}$ is a dominating set of $G$. We already have $v\in Y$ or $u \in Y$ by Property~\ref{obsit:wide-diamond-1} of \autoref{obs:wide-diamond}, ensuring that all vertices in $C$ are dominated. 
    Also, since $x_v, x_u \notin Y$, the set $Y$ dominates $u$ and $v$ from $V(G) \setminus (C\setminus\{c_0,c_1,c_2\})$. 
    Furthermore, $|Y'| = |Y| - p < |f(X)| - p = |X| + p - p = |X|$, contradicting that $X$ is a minimum DS of $G$. 
    \item[Case $x_v \in Y$:] As a first step, we can realize that $v \notin Y$. If $v$ would be in $Y$, then we could construct a DS of $G'$ of smaller size than $Y$ by taking $(Y \setminus C') \cup \{b_i \mid 0 \leq i < p\}$ (potentially also including $x_u$ if $x_u \in Y$). Thus, $v \notin Y$ which by Property~\ref{obsit:wide-diamond-1} of \autoref{obs:wide-diamond} implies that $u \in Y$.
    Knowing this, we consider $Y' := (Y \setminus C') \cup \{c_j\}$ for some $0 \leq j < 2^p+3$. This set $Y'$ is a dominating set of $G$ since all vertices $G-C-\{v,u\}$ are unaffected by the change, $u \in Y'$ which ensures all vertices $c_i$ are dominated, and $c_j \in Y'$ ensuring that $v$ is dominated. Notice that $|Y \cap C'| \geq p + 1$ since $x_v \in Y$ and $Y$ must contain either $e_i$ or $b_i$ for each $0 \leq i < 2^p$ in order to dominate $e_i$.
    Thus, $|Y'| \leq |Y| - (p+1) + 1 = |Y| - p < |f(X)| - p = |X| + p - p = |X|$, contradicting that $X$ is a minimum DS of $G$.
    \item[Case $x_u \in Y$:] The argument for this case is symmetric to that of the previous case. 
\end{description}
In all cases we reach a contradiction from which we can conclude that $f(X)$ is a minimum DS of $G'$.
\end{claimproof}
From this we can conclude that $f$ is a well-defined mapping from the set of minimum dominating sets of $G$ to those of $G'$. 

\begin{claim}
    The function $f$ is bijective. 
\end{claim}
\begin{claimproof}
The function $f$ is clearly injective as when $X \neq X'$, either $X \setminus C \neq X' \setminus C$ or $X \cap C \neq X' \cap C$ which both imply that $f(X) \neq f(X')$ by definition of $f$. 
Following the same type of argumentation as in \autoref{claim:ds-fx-is-min}, it is easy to verify the following. For any minimum dominating set $Y'$ in $G'$, if we define $Y$ as follows:
\[ Y = \begin{cases}
     Y' \setminus \{b_i \mid 0 \leq i < p\} & \mbox{if } Y' \cap \{x_v, x_u\} = \emptyset\\ 
     (Y' \setminus C') \cup \{c_j\}  & \mbox{if } Y' \cap \{x_v, x_u\} \neq \emptyset
\end{cases}\]
then $Y$ is a minimum DS of $G$ and $f(Y) = Y'$ for the correct choice of $j$, which is determined by which $b$ and $e$ vertices are in $Y'$. Thus the function is surjective, and therefore also a bijection.
\end{claimproof}

Similar as for counting feedback vertex sets, we can address the issue that we assumed $|C| = 2^p+3$ for some $p$ by decomposing the wide diamond $C$ into multiple smaller wide diamonds such that one consists of three vertices and all others consist of a number of vertices that is a unique power of two.
In order for the complete structure to get the behavior we desire, it suffices to not replace the one wide diamond consisting of three vertices to let this act as $c_0, c_1, c_2$ in the replacement procedure for all other replacements. We then replace all other wide diamonds entirely in the way described above. 
This allows us to replace any wide diamond $C$ by a structure with $\bO(\log (|C|)^2)$ vertices, and $k' = k +  \bO (\log (|C|)^2)$. 
\end{proof}

\begin{lemma} \label{lem:cpds-protrusionred}
There is a polynomial-time algorithm that, given a planar graph $G$, an integer $k$ and a protrusion $P \subseteq V(G)$ 
such that $P = \chi(T_w)$ for some $(\alpha, \beta, 8)$-protrusion decomposition $(T, \chi)$ with root $r$ and child of root $w$ and such that $P \setminus \partial(P)$ can be dominated by a set of at most two vertices in $\partial^*(P)$, 
%such that $P \setminus \partial(P)$ can be dominated by a set of at most two vertices in $\partial(P)$, 
outputs a planar graph $G'$ obtained from $G$ by replacing $P$ with a vertex set $P'$, and integer $k'$, such that:
\begin{itemize}
    \item $\cmds(G, k) = \cmds(G',k')$,
    \item $G - (P \setminus \partial^*(P)) = G' - (P' \setminus \partial^*(P))$,
    \item $N_G(P) = N_{G'}(P')$,
    \item $|P'| = \bO(\log (|P|)^2)$, and
    \item $k' = k + \bO( \log(|P|)^2)$.
\end{itemize}
\end{lemma}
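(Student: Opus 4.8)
The plan is to reduce $P$, while preserving the number of minimum dominating sets of the whole graph, to a bounded-size ``core'' containing $\partial^*(P)$ together with $\bO(1)$ wide diamonds on at most $|P|$ vertices each, and then to apply \autoref{lem:cpds-netred} separately to each wide diamond. Write $\partial^* := \partial^*(P) = \chi(w)$; since non-root bags of an $(\alpha,\beta,8)$-protrusion decomposition have size at most $8$ we have $|\partial^*| \le 8$, and $\partial(P) \subseteq \partial^*$. Fix a pair $\{p_1,p_2\} \subseteq \partial^*$ dominating the interior $I := P \setminus \partial(P)$. Two elementary facts are used throughout: every vertex of $P \setminus \partial^*$ has all of its neighbours inside $P$ and is adjacent to $p_1$ or $p_2$; and for any dominating set $X$ of $G$ the sets $X \cup \{p_1\}$ and $X \cup \{p_2\}$ are again dominating, so if some part of $P$ can only be dominated ``cheaply'' with the help of $p_1$ or $p_2$, every \emph{minimum} dominating set of $G$ uses that help.

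First I would extract a core $Z$ with $\partial^* \subseteq Z \subseteq P$ and $|Z| = \bO(1)$: put into $Z$ the set $\partial^*$, the at most $2\cdot 8 - 4$ vertices of $P \setminus \partial^*$ with three or more neighbours in $\partial^*$ (\autoref{lem:planar-deg-3-bound} applied with $C := \partial^*$), and the ``topologically heavy'' vertices of $G[P]$ --- branch vertices of a tree decomposition of $G[P]$ of width at most $7$ (the restriction of $(T,\chi)$ to $T_w$) together with cut vertices separating large pieces. Using planarity together with $\tw(G[P]) \le 7$, this can be arranged so that every connected component of $G[P] - Z$ attaches to at most two vertices of $Z$, at least one of which lies in $\{p_1,p_2\}$, and has a restricted internal shape --- essentially a caterpillar whose spine vertices are attached to $p_1$ or $p_2$. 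Grouping the components by the pair of $Z$-vertices they attach to, and using that one member of each such pair lies in $\{p_1,p_2\}$ while $|Z| = \bO(1)$, there are only $\bO(1)$ groups.

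Next I would handle each group $\Gamma$. If $\Gamma$ is small it is folded into $Z$. Otherwise, using the second fact above and the restricted shape, I would show that every minimum dominating set of $G$ is forced into one of a constant number of ``cheap'' configurations on $\Gamma$: it either uses one (or both) of $\Gamma$'s two attachment vertices $z,z'$ and then the interior of $\Gamma$ is dominated at no extra cost, or --- when both $z$ and $z'$ may legitimately be left out --- it picks exactly one vertex of $\Gamma$ to dominate the omitted attachment vertex; the ``expensive'' configurations cost $\Omega(|\Gamma|)$ and are therefore never attained by a minimum solution. This is precisely the behaviour of a wide diamond with endpoints $\{z,z'\}$ from \autoref{obs:wide-diamond}, so $\Gamma$ can be replaced by a single wide diamond $C_\Gamma$ with endpoints $\{z,z'\}$, its number of vertices (at most $|P|$) calibrated so that the number of minimum extensions matches exactly in every cheap configuration; a group whose only cheap behaviour is to force one or both of $z,z'$ into the dominating set is instead replaced by two pendant vertices hung on each forced vertex, which imposes the same forcing. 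These replacements change the size of a minimum dominating set by a bounded amount, which we accumulate, so after performing them for all groups we obtain $(G'',k'')$ with $\cmds(G,k) = \cmds(G'',k'')$, $|k''-k| = \bO(1)$, and $G''$ equal to the core plus $\bO(1)$ wide diamonds (planarity is maintained since each gadget is drawn in the face of the embedding that formerly contained the replaced piece). Applying \autoref{lem:cpds-netred} once to each of the $\bO(1)$ wide diamonds turns $G''$ into the desired $G'$, with $|P'| = \bO(1) + \bO(1)\cdot\bO(\log(|P|)^2) = \bO(\log(|P|)^2)$ and $k' = k + \bO(\log(|P|)^2)$; the conditions $G - (P \setminus \partial^*) = G' - (P' \setminus \partial^*)$ and $N_G(P) = N_{G'}(P')$ hold by construction, since only vertices of $P \setminus \partial^*$ were ever touched.

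The hard part will be the two middle steps: choosing $Z$ so that every large leftover piece has a caterpillar-like shape attaching to at most two core vertices (this is where planarity, bounded treewidth, and domination by two vertices all enter), and then proving that each such piece is, \emph{on the nose}, equivalent to a suitably sized wide diamond (or to a pendant-forcing gadget) for the purpose of counting minimum dominating sets. The delicate point in the latter is a careful case analysis showing that minimum dominating sets never enter a large piece's expensive interface configurations, and that in the cheap configurations the \emph{counts}, not merely the minimum sizes, agree exactly --- so that the bijection underlying $\cmds(G,k)=\cmds(G'',k'')$ goes through.
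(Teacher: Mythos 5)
Your high-level plan (shrink $P$ to a constant-size core plus $\bO(1)$ wide diamonds, then invoke \autoref{lem:cpds-netred}) matches the paper's, but the two steps you yourself flag as ``the hard part'' are exactly where the lemma's content lies, and your proposed route for them has a genuine gap. First, the claim that a \emph{constant-size} set $Z$ can be chosen so that every component of $G[P]-Z$ attaches to at most two $Z$-vertices and is a caterpillar is not substantiated: ``branch vertices of a width-$7$ tree decomposition together with cut vertices separating large pieces'' can number $\Omega(|P|)$, and nothing in planarity, $\tw(G[P])\le 7$, or domination by $\{p_1,p_2\}$ forces this set to be bounded. The paper gets the constant bound on the non-degree-two part by a different mechanism that your proposal omits entirely: it first deletes, in polynomial time, every \emph{redundant} edge inside $P\setminus\partial^*(P)$ (an edge whose removal preserves the family of partial dominating sets of $P$ of size at most $8$, checkable by enumeration) and trims excess degree-one neighbours. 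Only after this cleaning step do the $K_{3,3}$/$K_5$-minor arguments (\autoref{claim:deg3-nb-bound} and its consequences for $Q_1,Q_2,Q_{\ge 3}$) yield that $G[P]$ has $\bO(1)$ vertices of degree other than two outside $\partial^*(P)$, so that the remaining bulk consists \emph{literally} of degree-two vertices forming $\bO(1)$ wide diamonds. Without the redundancy removal, e.g.\ a fan in which $p_1$ is adjacent to a long path $v_1,\dots,v_n$ leaves arbitrarily many degree-three vertices, and no constant $Z$ turns it into wide diamonds.

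Second, because your components are only ``wide-diamond-like,'' you need the assertion that each large caterpillar component is \emph{exactly} equivalent, for counting minimum dominating sets, to a wide diamond (or a pendant-forcing gadget) of calibrated size. That is a nontrivial exact-count statement: a component can admit several interior configurations of the same minimum cost whose multiplicities do not match $|C_\Gamma|$ choices of a single vertex, and your sketch gives no bijection. The paper sidesteps this entirely by reducing until the only unbounded structures satisfy the literal wide-diamond definition (every vertex has exactly the same two neighbours), for which \autoref{obs:wide-diamond} and \autoref{lem:cpds-netred} apply verbatim. To repair your argument you would need to either import the redundant-edge/degree-one cleaning step and the ensuing degree analysis, or supply independent proofs of both the constant-core decomposition and the exact counting equivalence of caterpillar components to wide diamonds; as written, neither is established.
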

\begin{proof}
The proof is structured as follows. We will first remove certain edges and vertices in the protrusion that are not relevant in the context of dominating sets. We then prove that the remaining protrusion can only have $\bO(1)$ vertices that are not part of a wide diamond. Finally, we use \autoref{lem:cpds-netred} to replace the wide diamond structures that appear in this remaining protrusion.

Consider a minimum dominating set $S$ of $G$. The first thing we note is that $S$ contains at most $8$ vertices from $P$. This is because $\partial^*(P)$ is a dominating set of $P$ with $|\partial^*(P)| \leq 8$ and for all $D \subseteq P$, we have that $N_G[D] \subseteq N_G[\partial^*(P)]$. Thus, if $S$ would contain more than $8$ vertices of $P$, then $(S \setminus P) \cup \partial^*(P)$ would be a dominating set of $G$ of smaller size, contradicting that $S$ is a minimum dominating set. Additionally, we note that the vertex set $S \cap P$ dominates $P \setminus \partial(P)$, which follows directly from the definition of a boundary. We shall refer to a set $D \subseteq P$ as a \emph{partial dominating set} (PDS) of $P$ if $D$ dominates $P \setminus \partial(P)$. Based on these observations, the goal of our reduction is to preserve these partial dominating sets of size at most 8. 

The first step will be to remove redundant edges between vertices of $P \setminus \partial^*(P)$. More formally, an edge $\{v,u\}$ with $v,u \in P \setminus \partial^*(P)$ is called \emph{redundant} if for each partial dominating set $D$ of $P$ of size at most 8, the set $D$ is still a partial dominating set of $P$ in graph $G$ when we remove edge $\{v,u\}$. We can check if an edge is redundant by enumerating the set of partial dominating sets of $P$ of size at most 8 in $\poly (n)$ time and checking if removing the edge changes this set. Next, we remove certain vertices from $P \setminus \partial^*(P)$ using the following observation. If there is a vertex $v \in V(G)$ with more than two neighbors of degree one, then all minimum dominating sets of $G$ will contain $v$ and none of those degree-one neighbors. Therefore, removing all but two of the degree-one neighbors does not affect the set of minimum dominating sets of $G$. After we have removed redundant edges from $P$, we use this observation to remove unnecessary degree-one vertices from $P \setminus \partial^*(P)$. 

Assume now that we used the techniques described above to remove edges and vertices from $P$. We can analyze the structure that the remaining graph $G[P]$ has by counting the types of vertices that appear in it. We first consider one specific case.

If $\partial^*(P) = \{s\}$ for some vertex $s$, then we can always find a replacement $P'$ for $P$ that has only a constant number of vertices. For this, we first realize that in this specific case, any minimum dominating set $S$ of $G$ contains at most one vertex from $P$. Therefore, $S$ can only contain a vertex from $P$ that dominates all vertices in $P \setminus \partial(P)$. If $P \setminus \partial(P) = \emptyset$ then $P$ already has constant size, so assume that this is not the case.
Let $U$ be the set of vertices of $P \setminus \partial^*(P)$ that individually form a partial dominating set of $P$.
If $U = \emptyset$, then, since $P \setminus \partial(P) \neq \emptyset$, any minimum dominating set of $G$ must contain $s$. Thus we can let $P'$ consist of only $s$ and two new vertices, both having only an edge to $s$.
If $U \neq \emptyset$, then we construct $P'$ from $P$ by removing all vertices other than $s$ and those in $U$. There can be at most three vertices in $U$, since otherwise $G[U \cup \{s\}]$ would form $K_5$, which would contradict $G$ being planar by \autoref{thm:wagner}. To see why this reduction is safe, note that for any minimum dominating set $S$ of $G$, we must have that $S \cap P = \{v\}$ for some vertex $v \in U \cup \{s\}$. If $|S \cap P| > 1$, then $S$ is not of minimum size, and if $v \notin U \cup \{s\}$, then there is a vertex in $P \setminus \partial(P)$ not dominated by $S$.

From this point on, assume $|\partial^*(P)| > 1$ and let $s_1, s_2$ be two distinct vertices in $\partial^*(P)$ that together dominate $P \setminus \partial(P)$.
We partition the vertices of $P \setminus \partial^*(P)$ into sets $P_{\leq 1}$, $P_2$ and $P_{\geq 3}$ based on whether their degree in $G[P]$ is at most one, is two, or is at least three, respectively. We first look at $P_{\leq 1}$. 
Note that there cannot exist vertices of degree zero since $\{s_1, s_2\}$ is a partial dominating set of $P$, thus all vertices of $P \setminus \partial^*(P)$ must have an edge to at least one of these. Similarly, for vertices of degree one, the one edge incident to them must have as other endpoint either $s_1$ or $s_2$. There cannot be more than two degree-one neighbors per vertex, since then we would have removed them previously. From this we can conclude that $|P_{\leq 1}| \leq 4$. 

Before we look at $P_{\geq 3}$, we will first prove the following claim to be correct. This claim will in turn be useful for bounding the number of vertices contained in $P_{\geq 3}$.

\begin{claim} \label{claim:deg3-nb-bound}
If $P$ has no redundant edges, then for each distinct pair of vertices $z_1,z_2 \in P$, there can be at most $96$ vertices in $(N_{G[P]}(z_1) \cap N_{G[P]}(z_2)) \setminus \partial(P)$ that are adjacent to at least one vertex in $P \setminus (\partial(P) \cup \{z_1,z_2\})$. 
\end{claim}
\begin{claimproof}
Let $R \subseteq (N_{G[P]}(z_1) \cap N_{G[P]}(z_2) )\setminus \partial(P)$ be the set of vertices that are adjacent to both $z_1$ and $z_2$ and at least one vertex in $P \setminus (\partial(P) \cup \{z_1,z_2\})$. Assume for a contradiction that $|R| > 96$. We first prove that all partial dominating sets of size at most 8 of $P$ must contain at least one of $\{z_1,z_2\}$. If this were not the case, then there exists a partial dominating set $S$ with $|S| \leq 8$ such that $z_1, z_2 \notin S$. All vertices in $R$ must be dominated by $S$, so there is a vertex $v \in S$ that dominates more than $96/8 = 12$ vertices of $R$. This means that $v$ has an edge to more than $11$ vertices in $R$. However, then $v$, $z_1$, $z_2$ and three of those vertices of $R$ adjacent to $v$ would together form $K_{3,3}$, which contradicts $G$ being planar by \autoref{thm:wagner}. Thus, each PDS of size at most $8$ must contain $z_1$ or $z_2$.

We associate a vertex from $P \setminus (\partial(P) \cup \{z_1,z_2\})$ to each vertex of $R$ through a function $f \colon R \to P \setminus (\partial(P) \cup \{z_1,z_2\})$, where for each vertex $v \in R$, we let $f(v)$ be an arbitrary vertex in $N_{G[P]}(v) \setminus (\partial(P) \cup \{z_1, z_2\})$. Next, let $f(R) := \{f(v) \mid v \in R\}$ be the set of vertices of $P \setminus (\partial(P) \cup \{z_1,z_2\})$ that are associated to some vertex of $R$. See \autoref{fig:Q2-example} for an example of the vertices involved in the structure under consideration. 
No vertex in $f(R)$ has an edge to both $z_1$ and $z_2$, since then all edges from $R$ to that vertex would have been redundant due to $z_1$ or $z_2$ being in every partial dominating set of size at most 8. 
% Now, let $R'$ be the set of vertices in $P \setminus (\partial(P) \cup \{z_1,z_2\})$ that the vertices of $R$ have an edge to, so $R' := N_{G[P]}(R) \setminus (\partial(P) \cup \{z_1,z_2\})$. See \autoref{fig:Q2-example} for an example of the vertices involved in the structure under consideration. 
% No vertex in $R'$ has an edge to both $z_1$ and $z_2$, since then all edges from $R$ to that vertex would have been redundant due to $z_1$ or $z_2$ being in every PDS of size at most 8. 
% We associate a vertex from $R'$ to each vertex of $R$ through a function $f \colon R \to R'$, where for each vertex $v \in R$, we let $f(v)$ be an arbitrary vertex in $N_G(v) \setminus (\partial(P) \cup \{z_1, z_2\})$. Next, let $f(R) := \{f(v) \mid v \in R\}$ be the set of vertices of $R'$ that are associated to some vertex of $R$. 
We then claim that no vertex $v \in V(G) \setminus \{z_1, z_2\}$ is adjacent to three vertices of $f(R)$. For a proof by contradiction, assume that $v$ is adjacent to three distinct vertices $f(a), f(b), f(c) \in f(R)$. Then this would contradict $G$ being planar due to it containing $K_{3,3}$ as a minor. One side of the minor consists of $\{z_1, z_2, v\}$, and the other side consists of the vertices obtained by contracting the edges $\{a, f(a)\}$, $\{b, f(b)\}$ and $\{c, f(c)\}$. 

Furthermore, each vertex in $f(R)$ can be the image of at most two vertices of $R$. If there were distinct $a,b,c \in R$ such that $f(a) = f(b) = f(c)$, then $\{z_1, z_2, f(a)\}$ and $\{a,b,c\}$ would together form $K_{3,3}$. Thus, since $|R| > 96$, we must have $|f(R)| > 48$.
Each vertex in $f(R)$ is non-adjacent to at least one of $\{z_1, z_2\}$, so at least half of them are non-adjacent to the same one. Assume without loss of generality that at least half of them, so more than $24$, are non-adjacent to $z_1$. Let $U$ refer to this set of vertices.
Then $z_2$ must be in all partial dominating sets of $P$ of size at most 8. If this were not the case, then there would need to be a vertex $v \in V(G) \setminus \{z_1, z_2\}$ that dominates more than $24/8 = 3$ of these vertices in $U \subseteq f(R)$, which we already argued to be impossible. Thus $z_2$ is contained in all partial dominating sets of $P$ of size at most 8. This leaves two options. Either $z_2$ has no edge to any vertex of $U$, in which case no partial dominating set of size at most $8$ exists, which is a contradiction as is witnessed by  the partial dominating set $\partial^*(P)$. As a second option, $z_2$ has an edge to at least one vertex $f(u) \in U$, which would imply that the edge between $u$ and $f(u)$ is redundant, which is also a contradiction. As all cases lead to a contradiction, we must have that $|R| \leq 96$.
\end{claimproof} 

Having proven this claim, we shall now consider $P_{\geq 3}$, which we partition further into sets $Q_1$, $Q_2$ and $Q_{\geq 3}$ based on whether they have one, two, or at least three edges to $\partial^*(P)$, respectively. We will prove that $P_{\geq 3}$ has constant size by proving that all three of these sets have constant size. We can immediately conclude that $|Q_{\geq 3}| = \bO(1)$ using \autoref{lem:planar-deg-3-bound} since $G[P]$ is planar, each vertex in $Q_{\geq 3}$ has at least three neighbors in $\partial^*(P)$ by definition, and $|\partial^*(P)| \leq 8$. 

\begin{figure}
            \centering%
            \subcaptionbox{\hfill \label{fig:Q2-example}}
                {\includegraphics{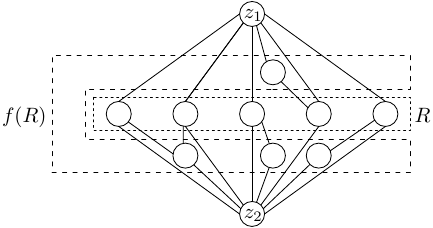}}
                \hfill
            \subcaptionbox{\hfill \label{fig:Q1-example}}
                {\includegraphics{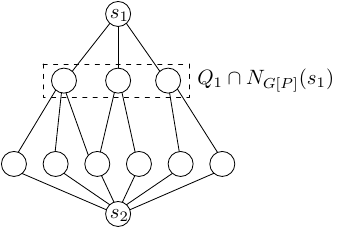}}
            \caption{(\subref{fig:Q2-example}) An example of the types of vertices involved in the proof of \autoref{claim:deg3-nb-bound}. (\subref{fig:Q1-example}) An example of the types of vertices involved in the proof of \autoref{claim:q1size}. }
            \label{fig:prot-replacement-examples}
\end{figure}

\begin{claim} \label{claim:q2size}
    The set $Q_2$ consists of at most $28 \cdot 96$ vertices.
\end{claim}
\begin{claimproof}
The set $Q_2$ consists of the vertices of $P \setminus \partial^*(P)$ that have degree at least three in $G[P]$ and have exactly two edges to $\partial^*(P)$. Thus, all vertices of $Q_2$ exist between a pair of $\partial^*(P)$ vertices. Consider an arbitrary pair of distinct vertices $z_1, z_2 \in \partial^*(P)$. Then, since by definition each vertex in $Q_2$ has an edge to at least one vertex not in $\partial^*(P)$, we can use \autoref{claim:deg3-nb-bound} to conclude that $|Q_2 \cap N_{G[P]}(z_1) \cap N_{G[P]}(z_2)| \leq 96$. Since there are at most $\binom{8}{2} = 28$ pairs of $\partial^*(P)$ vertices, we arrive at the bound stated in the claim.
\end{claimproof}

\begin{claim} \label{claim:q1size}
    The set $Q_1$ consists of at most $2 \cdot 8 \cdot 96$ vertices. 
\end{claim}
\begin{claimproof}    
The set $Q_1$ consists of the vertices of $P \setminus \partial^* (P)$ that have degree at least three in $G[P]$ and have exactly one edge to $\partial^*(P)$. Note that since $\{s_1, s_2\}$ dominates the vertices in $P \setminus \partial(P)$ (and thus also $P \setminus \partial^*(P)$), all vertices in $Q_1$ have an edge to either $s_1$ or $s_2$ and no other vertices in $\partial^*(P)$. To bound the number of vertices in $Q_1$, we will bound the number of such vertices that can be adjacent to $s_1$, and realize that that same bound applies for $s_2$. First observe that if all partial dominating sets of size at most 8 contain both $s_1$ and $s_2$, then all edges between vertices of $P \setminus \partial^*(P)$ would have been redundant and we would immediately have that $|Q_1| = 0$, so assume that this is not the case. 
We shall prove that $|Q_1 \cap N_{G[P]}(s_1)| \leq 8 \cdot 96$ using a proof by contradiction, so assume $|Q_1 \cap N_{G[P]}(s_1)| > 8 \cdot 96$. We distinguish two cases.
\begin{description}
    \item[Case:] There exists a partial dominating set of size at most 8 that does not contain $s_1$. That means that all vertices in $N_{G[P]}(s_1) \setminus \partial(P)$ can be dominated with at most 8 vertices other than $s_1$. Since $|Q_1 \cap N_{G[P]}(s_1)| > 8 \cdot 96$ there must be a vertex $v \in P \setminus \partial^*(P)$ that has an edge to more than $96$ vertices in $Q_1 \cap N_{G[P]}(s_1)$. Since the vertices in $Q_1$ have degree at least three, they must all have an edge to at least one vertex in $P \setminus (\partial^*(P) \cup \{v\})$. This means we can use \autoref{claim:deg3-nb-bound} to conclude that $|Q_1 \cap N_{G[P]}(s_1) \cap N_{G[P]}(v)| \leq 96$, which contradicts our assumption.
    \item[Case:] All partial dominating sets of $P$ of size at most 8 contain $s_1$. Note that this means that there is a partial dominating set $S$ with $|S| \leq 8$ that does not contain $s_2$, since we assumed that not all partial dominating sets of size at most 8 contain both $s_1$ and $s_2$. In this case, the vertices in $Q_1 \cap N_{G[P]}(s_1)$ all have exclusively edges to vertices that do not have an edge to $s_1$ but do have an edge to $s_2$, since otherwise the edge would be redundant or $\{s_1, s_2\}$ would not dominate the vertex. See \autoref{fig:Q1-example} for an example of the vertices involved in such a structure.
    We can find that $|N_{G[P]}(s_2)| \geq |Q_1 \cap N_{G[P]}(s_1)| > 8 \cdot 96$. This follows from the fact that each vertex in $Q_1 \cap N_{G[P]}(s_1)$ has at least two edges to vertices in $N_{G[P]}(s_2)$ and that each vertex $v \in N_{G[P]}(s_2)$ can have an edge to at most two vertices in $Q_1 \cap N_{G[P]}(s_1)$. If $v$ has edges to three vertices in $Q_1 \cap N_{G[P]}(s_1)$, then $v, s_1, s_2$ and those three vertices would form $K_{3,3}$ when we contract the edges between $s_2$ and $N_{G[P]}(s_2) \setminus \{v\}$. 
    In a similar vein, any vertex in $S$ can have an edge to at most two of these vertices in $N_{G[P]}(Q_1 \cap N_{G[P]}(s_1)) \cap N_{G[P]}(s_2)$ due to planarity and thus can dominate at most three of them. However, there are more than $8\cdot 3 = 24$ of such vertices which contradicts $S$ being a partial dominating set of $P$. 
\end{description}
In both cases we reach a contradiction, thus we must have that $|Q_1 \cap N_{G[P]}(s_1)| \leq 8 \cdot 96$. The same bound holds for the vertices of $Q_1$ that are adjacent to $s_2$ through analogous reasoning, and since $Q_1 \subseteq N_{G[P]}(s_1) \cup N_{G[P]}(s_2)$, we must have that ${|Q_1| \leq 2 \cdot 8 \cdot 96 = \bO(1)}$.
\end{claimproof}

Summarizing so far, we have shown that $|P_1| = \bO(1)$ and $|P_{\geq3}| = |Q_1| + |Q_2| + |Q_{\geq 3}| = \bO(1)$. This only leaves $P_2$, the vertices of degree two. All such vertices must have at least one edge to $\partial^*(P)$ since it is a dominating set for $P$. There are a few options for the other edge. It could have an edge to another vertex in $P_2$ which in turn has an edge to some vertex in $\partial^*(P)$. Let $P_2^*$ be the set of $P_2$ vertices involved in such a structure. If there were three or more of such structures between a pair of vertices in $\partial^*(P)$, then the edge between the degree two vertices would have been redundant. Thus, since $|\partial^*(P)| = \bO(1)$, also $|P_2^*| = \bO(1)$. Finally, consider the vertices of $P_2 \setminus P_2^*$. These are vertices of $P$ of degree two with both neighbors in the set $\partial^*(P) \cup P_{\geq3}$. The vertices of $P_2 \setminus P_2^*$ can be partitioned into $\bO(1)$ wide diamonds, namely the wide diamonds of the form $N_{G[P]}(x) \cap N_{G[P]}(y) \cap P_2$ for $x,y \in \partial^*(P) \cup P_{\geq3}$. Since $|\partial^*(P) \cup P_{\geq3}| =\bO(1)$, we also end up with $\bO(1)$ wide diamonds.  We can use \autoref{lem:cpds-netred} for each wide diamond to get a replacement $P'$ and $k'$ that satisfy the requirements stated in the lemma. 
\end{proof}

Having proven that we can reduce the size of these protrusions, we proceed similarly to as for counting feedback vertex sets. We argue that there is an algorithm that solves $\cmds$ in \autoref{lem:cpds-algo} and combine all our results in \autoref{thm:cpds}.

\begin{lemma} \label{lem:cpds-algo}
    There is an algorithm that, given a planar graph $G$ and integer $k$, computes $\cmds(G,k)$ in $2^{\bO(\sqrt{k})} \cdot \poly(n)$ time.
\end{lemma}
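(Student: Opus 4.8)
The plan is to mirror the strategy used for \textsc{Feedback Vertex Set} in \autoref{lem:cfvs-algo}, namely to adapt a known subexponential parameterized algorithm for the decision version of \textsc{Dominating Set} on planar graphs so that it counts minimum solutions rather than merely detecting one. The classical route goes through bidimensionality: if $G$ has a dominating set of size at most $k$, then $\tw(G) = \bO(\sqrt{k})$ (this follows from the grid-minor / bidimensionality machinery for planar \textsc{Dominating Set}, or from \autoref{lem:cpds-protrusiondecomp} together with standard treewidth bounds for protrusion decompositions). So first I would compute, in polynomial time, either a tree decomposition of $G$ of width $w = \bO(\sqrt{k})$, or a certificate that the domination number of $G$ exceeds $k$ (in which case we output $\cmds(G,k) = 0$); one can obtain this via a constant-factor treewidth approximation combined with the fact that a planar graph of treewidth $\omega(\sqrt k)$ cannot have a dominating set of size $\le k$.

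Given the tree decomposition, the second step is a standard dynamic programming over the (nice) tree decomposition, but tracking \emph{two} quantities per state: the minimum size of a partial dominating set consistent with that state, and the \emph{number} of such partial solutions attaining that minimum. The state per bag is the usual three-way colouring of bag vertices — ``in the solution'', ``dominated (from outside or by the partial solution so far)'', and ``not yet dominated'' — giving $3^{w+1} = 2^{\bO(\sqrt k)}$ states per node. At leaf, introduce, forget, and join nodes I would compute the pair (minimum value, count) exactly as in the decision DP, except that whenever two partial solutions would be merged I add their counts if their sizes are equal, keep the one with the smaller size (resetting the count) if they differ, and at join nodes convolve the counts while adding sizes and being careful to subtract the double-counted contribution of the shared bag vertices. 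The answer $\cmds(G,k)$ is then read off at the root: if the optimum over root states is $> k$ output $0$, otherwise output the accumulated count for the optimal value. Each node takes $2^{\bO(\sqrt k)}$ time (or $2^{\bO(\sqrt k)} \cdot w^{\bO(1)}$ at join nodes, which is absorbed), and there are $\bO(w \cdot n)$ nodes in a nice tree decomposition, for a total of $2^{\bO(\sqrt k)} \cdot \poly(n)$.

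The main obstacle — and the only genuinely non-routine point — is arguing that the count-carrying DP is \emph{correct}, i.e., that it does not over- or under-count minimum dominating sets. Two subtleties need care. First, at a join node the two children's partial solutions may overlap exactly on the bag; the ``in-solution'' vertices of the bag are shared, so when convolving counts and summing sizes one must subtract $|\{$bag vertices marked ``in''$\}|$ from the combined size and must not multiply in the choice for those shared vertices twice; this is the standard inclusion fix and I would state it explicitly. Second, the ``dominated'' colour must be handled so that a bag vertex coloured ``dominated'' is genuinely dominated by the union of the two partial solutions — the usual convention is that at least one child witnesses the domination — and one must make sure that this convention does not create two distinct DP derivations of the same global solution (it does not, because the colour records only a boolean, and the global solution determines, together with the tree structure, a unique sequence of colourings). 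Once these bookkeeping points are pinned down, correctness follows by the standard induction on the tree decomposition: for each node $w$ and each state, (value, count) equals (minimum size, number of size-minimizers) of partial dominating sets of $G[\chi(T_w)]$ respecting the state on $\chi(w)$.

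Finally I would note that, exactly as in the \textsc{Feedback Vertex Set} case, the alternative of simply appealing to a known FPT counting algorithm for planar \textsc{Dominating Set} of this running time would also suffice; the point of \autoref{lem:cpds-algo} is only that such an algorithm exists, and the treewidth-DP sketched above gives one directly. The full details are routine and I would defer them to an appendix, analogous to \autoref{app:fpt-cfvs-algo}.
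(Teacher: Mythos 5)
Your proposal is correct and follows essentially the same route as the paper: bound the treewidth by $\bO(\sqrt{k})$ using planarity and the domination number, obtain a suitable tree decomposition in polynomial time (the paper does this via the exact planar branchwidth algorithm rather than a treewidth approximation, but either works), and run the standard dominating-set DP augmented to carry (minimum size, count) pairs. The paper's proof is only a sketch that asserts the decision DP "can straightforwardly be adapted" to count; your discussion of the join-node and domination-witness bookkeeping fills in exactly the details it omits.
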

\begin{proof}[Proof sketch]
There exists an algorithm that solves the decision version of \textsc{Dominating Set} in $4^{\tw(G)} \cdot \tw(G)^{\bO(1)} \cdot n$ time \cite{course-book}. This algorithm can straightforwardly be adapted to also count the number of minimum dominating sets without increasing the running time. Furthermore, note that if $G$ has a dominating set of size at most $k$, then $\tw(G) = \bO(\sqrt{k})$~\cite{kernel-book-fomin}. As a final ingredient, we use that there exists a polynomial-time exact algorithm for branchwidth on planar graphs \cite{ratcatcher-branchwidth-seymour}, and that any branch decomposition of width $k$ can be turned into a tree decomposition of width $\bO(k)$ in polynomial time \cite{branch-tree-decomp-hicks}. Putting this together results in an algorithm that can compute $\cmds(G,k)$ in the mentioned time. 
\end{proof}

\begin{theorem} \label{thm:cpds}
There is a polynomial-time algorithm that, given a planar graph $G$ and integer $k$, either
\begin{itemize}
    \item outputs $\cmds(G, k)$, or
    \item outputs a planar graph $G'$ and integer $k'$ such that $\cmds(G,k) = \cmds(G',k')$ and $|V(G')| = \bO(k^3)$ and $k' = \bO(k^3)$.
\end{itemize}
\end{theorem}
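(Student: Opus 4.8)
The plan is to reproduce the structure of the proof of \autoref{thm:cfvs}, with \autoref{lem:cpds-protrusiondecomp}, \autoref{lem:cpds-protrusionred} and \autoref{lem:cpds-algo} playing the roles of \autoref{lem:cfvs-kernel}, \autoref{lem:cfvs-chainred} and \autoref{lem:cfvs-algo}. First I would run the algorithm of \autoref{lem:cpds-protrusiondecomp} on $(G,k)$. If it reports that the domination number of $G$ exceeds $k$, then $\cmds(G,k)=0$ and I output this value. Otherwise it returns in polynomial time an $(\bO(k),\bO(k),8)$-protrusion decomposition $(T,\chi)$ with root $r$, whose $m=\bO(k)$ children $w_1,\dots,w_m$ of $r$ yield protrusions $P_j:=\chi(T_{w_j})$ with the property that $P_j\setminus\partial(P_j)$ is dominated by at most two vertices of $\partial^*(P_j)=\chi(w_j)$. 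I would also record the standard facts that $V(G)=\chi(r)\cup\bigcup_j P_j$, that distinct $P_j$'s meet only inside $\chi(r)$ with $\chi(r)\cap P_j\subseteq\chi(w_j)$, and that every vertex of $P_j\setminus\chi(w_j)$ is private to $P_j$ (occurring only in bags of $T_{w_j}$).

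Next comes the size dichotomy. If some protrusion $P_j$ has more than $2^k$ vertices, then $n\ge|P_j|>2^k$, so the algorithm of \autoref{lem:cpds-algo} computes $\cmds(G,k)$ in time $2^{\bO(\sqrt k)}\cdot\poly(n)\le 2^{\bO(k)}\cdot\poly(n)=\poly(n)$ (using $2^k<n$), and I output the result. Otherwise every protrusion has at most $2^k$ vertices, and I process $P_1,\dots,P_m$ in turn: at step $j$ I apply \autoref{lem:cpds-protrusionred} to the current graph and $P_j$ (whose hypotheses still hold, as discussed below), replacing $P_j$ by a set $P_j'$ with $|P_j'|=\bO(\log(|P_j|)^2)=\bO(\log(2^k)^2)=\bO(k^2)$ while increasing the parameter by $\bO(\log(|P_j|)^2)=\bO(k^2)$ and preserving both $\cmds$ and planarity. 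Because the replacement only modifies vertices private to $P_j$ and leaves $N(P_j)$ intact, the portions of the graph belonging to the other protrusions and to $\chi(r)$ are untouched.

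After all $m=\bO(k)$ replacements I obtain a planar graph $G'$ and integer $k'$ with $\cmds(G,k)=\cmds(G',k')$ by transitivity. Its vertex set is $\chi(r)$, of size $\bO(k)$, together with the $\bO(k)$ replaced protrusions, each of size $\bO(k^2)$, so $|V(G')|=\bO(k)+\bO(k)\cdot\bO(k^2)=\bO(k^3)$; the parameter started at $k$ and grew by $\bO(k^2)$ at each of $\bO(k)$ steps, so $k'=\bO(k^3)$. The whole procedure is polynomial-time, since the decomposition step, the conditionally invoked \autoref{lem:cpds-algo}, and each of the $\bO(k)$ applications of \autoref{lem:cpds-protrusionred} are.

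The step I expect to demand the most care is checking that the hypotheses of \autoref{lem:cpds-protrusionred} survive every replacement, i.e., that after reducing $P_1,\dots,P_{j-1}$ the protrusion $P_j$ is still exhibited as a child-subtree protrusion of some $(\alpha,\beta,8)$-protrusion decomposition whose inner part is dominated by two of its boundary vertices. Two observations should suffice: (i) since only private vertices of $P_1,\dots,P_{j-1}$ changed, the induced subgraph on $P_j$ and the set $\partial(P_j)\subseteq\chi(w_j)$ are unaffected, so the two-vertex domination of $P_j\setminus\partial(P_j)$ persists; and (ii) one can update $(T,\chi)$ by deleting the subtrees $T_{w_1},\dots,T_{w_{j-1}}$ and absorbing the new vertices of $P_1',\dots,P_{j-1}'$ into the root bag, which remains a valid protrusion decomposition whose non-root bags are unchanged (hence of width $\le 8$) and whose root bag has grown only to size $\bO(k^3)$---harmless, since the value of $\alpha$ is unconstrained. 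With this bookkeeping settled, the rest is a routine recombination of the earlier lemmas, mirroring the proof of \autoref{thm:cfvs}.
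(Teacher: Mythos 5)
Your proposal is correct and follows essentially the same route as the paper's proof: protrusion decomposition via \autoref{lem:cpds-protrusiondecomp}, the $2^k$-size dichotomy invoking \autoref{lem:cpds-algo}, and iterated application of \autoref{lem:cpds-protrusionred} with the same $\bO(k)\cdot\bO(k^2)$ size and parameter accounting. Your extra bookkeeping verifying that the hypotheses of \autoref{lem:cpds-protrusionred} persist across successive replacements is a point the paper glosses over, and is a welcome addition rather than a deviation.
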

\begin{proof}
First, we can use the algorithm from \autoref{lem:cpds-protrusiondecomp} on $G$ and $k$. If it reports that the domination number of $G$ is greater than k, then $\cmds(G,k) = 0$. Else, we find an $(\bO(k), \bO(k), 8)$-protrusion decomposition of $G$. If there is a protrusion that consists of more than $2^k$ vertices, we know that $n > 2^k$ so we can run the algorithm from \autoref{lem:cpds-algo} to compute $\cmds(G,k)$ in $\poly(n)$ time. Otherwise, we know that each protrusion consists of at most $2^k$ vertices. Applying \autoref{lem:cpds-protrusionred} to all protrusions gives us a graph $G'$ with $\bO(k) + \bO(k) \cdot \bO(k^2) = \bO(k^3)$ vertices and an integer $k' = k + \bO(k) \cdot \bO(k^2) = \bO(k^3)$ such that $\cmds(G,k) = \cmds(G',k')$.
\end{proof}

\section{Conclusion} \label{sec:conclusion}
We introduced a new model of kernelization for counting problems: a polynomial-time preprocessing algorithm that either outputs the desired count, or reduces to a provably small instance with the same answer. We showed that for counting the number of minimum solutions of size at most~$k$, a reduction to a graph of size~$\poly(k)$ exists for two classic problems.

We believe that the new viewpoint on counting kernelization facilitates a general theory that can be explored for many problems beyond the ones considered here. By following the textbook proof~\cite[Lemma 2.2]{course-book} that a decidable parameterized decision problem is fixed-parameter tractable if and only if it admits a kernel (of potentially exponential size), it is easy to show the following equivalence between fixed-parameter tractability of counting problems and our notion of counting kernelization.

\begin{lemma}
Let~$\mathcal{P} \colon \Sigma^* \times \mathbb{N} \to \mathbb{N}$ be a computable function for some finite alphabet~$\Sigma$. Then the following two statements are equivalent:
\begin{enumerate}
    \item There is a computable function~$f \colon \mathbb{N} \to \mathbb{N}$ and an algorithm that, given an input~$(x,k) \in \Sigma^* \times \mathbb{N}$, outputs~$\mathcal{P}(x,k)$ in time~$f(k) \cdot |x|^{\bO(1)}$.
    \item There is a computable function~$f \colon \mathbb{N} \to \mathbb{N}$ and a polynomial-time algorithm that, given~$(x,k) \in \Sigma^* \times \mathbb{N}$, either:
    \begin{enumerate}
        \item outputs~$\mathcal{P}(x,k)$, or
        \item outputs~$(x',k') \in \Sigma^* \times \mathbb{N}$ satisfying~$|x'|, k' \leq f(k)$ and~$\mathcal{P}(x,k) = \mathcal{P}(x',k')$.
    \end{enumerate}
\end{enumerate}
\end{lemma}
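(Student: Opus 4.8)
The statement is the counting analogue of the textbook equivalence ``FPT $\iff$ admits a kernel'' (\cite[Lemma 2.2]{course-book}), so the plan is to run the same two-way argument, replacing the yes/no bit by the integer $\mathcal{P}(x,k)$. The only genuine content beyond bookkeeping is to observe that a total computable function has a \emph{computable} running-time bound, which is what lets a brute-force evaluation on a small instance be charged to a function of $k$ alone.

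\textbf{Direction $(1) \Rightarrow (2)$.} Suppose $A$ computes $\mathcal{P}(x,k)$ in time $f(k)\cdot|x|^{c}$ for a computable $f$ and a fixed constant $c$; replacing $f$ by $k\mapsto\max\{f(k),k\}$ (still computable) we may assume $f(k)\ge k$. Given $(x,k)$, simulate $A$ for $|x|^{c+1}$ steps. If $A$ halts within this budget, output its result, which equals $\mathcal{P}(x,k)$; this runs in polynomial time. If $A$ has not halted, then since $A$ runs in at most $f(k)\cdot|x|^{c}$ steps we get $f(k)\cdot|x|^{c} > |x|^{c+1}$, hence $|x| < f(k)$; in this case output the instance $(x,k)$ itself. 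Then $|x|,k \le f(k)$ and $\mathcal{P}(x,k)=\mathcal{P}(x,k)$ trivially, so the algorithm is a valid counting kernelization with size bound $f$.

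\textbf{Direction $(2) \Rightarrow (1)$.} Suppose $B$ is a polynomial-time algorithm with the stated behaviour and size bound $f$. Since $\mathcal{P}$ is computable, fix a Turing machine $M$ computing it; $M$ halts on every input. Define $T_M(s)$ to be the maximum number of steps $M$ takes over all inputs of encoding length at most $s$. This maximum is finite (finitely many inputs, each causing $M$ to halt) and $T_M$ is computable: to evaluate it at $s$, enumerate all inputs of length at most $s$, simulate $M$ on each, and take the largest step count. Now, on input $(x,k)$, run $B$. If $B$ outputs $\mathcal{P}(x,k)$, we are done in polynomial time. Otherwise $B$ outputs $(x',k')$ with $|x'|,k'\le f(k)$ and $\mathcal{P}(x',k')=\mathcal{P}(x,k)$; the pair $(x',k')$ has encoding length at most $f(k)+\bO(\log f(k))$, so we can evaluate $\mathcal{P}(x',k')$ by simulating $M$ in at most $T_M\!\big(f(k)+\bO(\log f(k))\big)=:g(k)$ steps, where $g$ is a computable function of $k$ alone. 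Since running $B$ takes $|x|^{\bO(1)}$ time, the total running time is $|x|^{\bO(1)} + g(k) \le (g(k)+1)\cdot|x|^{\bO(1)}$, which is of the form $f'(k)\cdot|x|^{\bO(1)}$ for the computable function $f'(k)=g(k)+1$. Hence $(1)$ holds.

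\textbf{Main obstacle.} The step that deserves care is the claim that $T_M$ is computable rather than merely finite-valued: this is precisely the ingredient that converts ``decide $\mathcal{P}$ on a $\poly(k)$-sized instance by brute force'' into a running-time bound depending only on $k$. Everything else is routine: fixing the polynomial exponent $c$ in the FPT running time, normalising $f(k)\ge k$ so that the unreduced instance $(x,k)$ meets the size bound in $(2b)$, and absorbing the polynomial factor from running $B$ into the shape $f'(k)\cdot|x|^{\bO(1)}$.
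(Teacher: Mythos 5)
Your proof is correct and follows exactly the route the paper intends: the paper gives no explicit proof but states that the lemma follows by adapting the textbook argument for ``FPT $\iff$ kernel'' (\cite[Lemma 2.2]{course-book}), which is precisely your two-way simulation argument with the yes/no answer replaced by the integer $\mathcal{P}(x,k)$. Your explicit justification that the worst-case running time $T_M$ of the machine computing $\mathcal{P}$ is itself computable (so that brute force on the reduced instance yields a \emph{computable} parameter dependence) is the right point to flag and is handled correctly.
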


Hence our view of counting kernelization is generic enough to capture all fixed-parameter tractable counting problems. Determining which counting problems have a \emph{polynomial-size} kernel remains an interesting challenge.

In this work, we focused on counting \emph{minimum-size} solutions (of size at most~$k$). Apart from being of practical interest in several applications, this facilitates several steps in the design and analysis of our preprocessing algorithms. At present, we do not know whether the two considered problems have polynomial-size kernels when counting the number of inclusion-minimal solutions of size exactly~$k$, or the number of (not necessarily minimal or minimum) solutions of size exactly~$k$; we leave this investigation to future work. To see the importance of the distinction, observe that the number of \emph{minimum} vertex covers of size at most~$k$ in a graph is bounded by~$2^k$ (since the standard 2-way branching discovers all of them) and the Buss kernel preserves their count. But the total number of vertex covers of size at most~$k$ cannot be bounded in terms of~$k$ in general.

For both problems we investigated, our preprocessing step effectively consists of reducing to an equivalent instance composed of a small core along with~$\poly(k)$ simply structured parts, followed by replacing each such part by a small problem-specific gadget. In the world of decision problems, the theory of protrusion replacement~\cite{BodlaenderFLPST16,Fomin10} gives a generic way of replacing such simply-structured parts by gadgets. Similarly, the condenser-extractor framework~\cite{KimST18,Thilikos21} can be applied to generic problems as long as they can be captured in a certain type of logic. This leads to the question of whether, in our model of counting kernelization, the design of the gadgets can be automated. Can a notion of meta-kernelization be developed for counting problems?

\bibliography{references}

\begin{thebibliography}{10}

\bibitem{AlberFN04}
Jochen Alber, Michael~R. Fellows, and Rolf Niedermeier.
\newblock Polynomial-time data reduction for dominating set.
\newblock {\em J. {ACM}}, 51(3):363--384, 2004.
\newblock \href {https://doi.org/10.1145/990308.990309}
  {\path{doi:10.1145/990308.990309}}.

\bibitem{approx-algos-planar-baker}
Brenda~S. Baker.
\newblock Approximation algorithms for np-complete problems on planar graphs.
\newblock {\em J. {ACM}}, 41(1):153--180, 1994.

\bibitem{fvs-approx-bar-yehuda}
Reuven Bar{-}Yehuda, Dan Geiger, Joseph Naor, and Ron~M. Roth.
\newblock Approximation algorithms for the feedback vertex set problem with
  applications to constraint satisfaction and bayesian inference.
\newblock {\em {SIAM} J. Comput.}, 27(4):942--959, 1998.

\bibitem{fvs-kernel-ppt-jansen}
{Bart M. P. Jansen}.
\newblock The power of preprocessing: Gems in kernelization.
\newblock
  \url{https://www.win.tue.nl/~bjansen/talks/BonnGemsInKernelization.pptx},
  2016.

\bibitem{arboretum-bodlaender}
Hans~L. Bodlaender.
\newblock A partial \emph{k}-arboretum of graphs with bounded treewidth.
\newblock {\em Theor. Comput. Sci.}, 209(1-2):1--45, 1998.

\bibitem{BodlaenderFLPST16}
Hans~L. Bodlaender, Fedor~V. Fomin, Daniel Lokshtanov, Eelko Penninkx, Saket
  Saurabh, and Dimitrios~M. Thilikos.
\newblock (meta) kernelization.
\newblock {\em J. {ACM}}, 63(5):44:1--44:69, 2016.
\newblock \href {https://doi.org/10.1145/2973749} {\path{doi:10.1145/2973749}}.

\bibitem{BussG93}
Jonathan~F. Buss and Judy Goldsmith.
\newblock Nondeterminism within {P}.
\newblock {\em {SIAM} J. Comput.}, 22(3):560--572, 1993.
\newblock \href {https://doi.org/10.1137/0222038} {\path{doi:10.1137/0222038}}.

\bibitem{fvs-new-measure-cao}
Yixin Cao, Jianer Chen, and Yang Liu.
\newblock On feedback vertex set: New measure and new structures.
\newblock {\em Algorithmica}, 73(1):63--86, 2015.

\bibitem{Curticapean16}
Radu Curticapean.
\newblock The simple, little and slow things count: on parameterized counting
  complexity.
\newblock {\em Bull. {EATCS}}, 120, 2016.
\newblock URL: \url{http://eatcs.org/beatcs/index.php/beatcs/article/view/445}.

\bibitem{CurticapeanDM17}
Radu Curticapean, Holger Dell, and D{\'{a}}niel Marx.
\newblock Homomorphisms are a good basis for counting small subgraphs.
\newblock In Hamed Hatami, Pierre McKenzie, and Valerie King, editors, {\em
  Proceedings of the 49th Annual {ACM} {SIGACT} Symposium on Theory of
  Computing, {STOC} 2017, Montreal, QC, Canada, June 19-23, 2017}, pages
  210--223. {ACM}, 2017.
\newblock \href {https://doi.org/10.1145/3055399.3055502}
  {\path{doi:10.1145/3055399.3055502}}.

\bibitem{course-book}
Marek Cygan, Fedor~V. Fomin, Lukasz Kowalik, Daniel Lokshtanov, D{\'{a}}niel
  Marx, Marcin Pilipczuk, Michal Pilipczuk, and Saket Saurabh.
\newblock {\em Parameterized Algorithms}.
\newblock Springer, 2015.

\bibitem{Fernau05}
Henning Fernau.
\newblock {\em Parameterized algorithmics: A graph-theoretic approach}.
\newblock PhD thesis, Habilitationsschrift, Universit{\"a}t T{\"u}bingen,
  Germany, 2005.

\bibitem{FlumG04}
J{\"{o}}rg Flum and Martin Grohe.
\newblock The parameterized complexity of counting problems.
\newblock {\em {SIAM} J. Comput.}, 33(4):892--922, 2004.
\newblock \href {https://doi.org/10.1137/S0097539703427203}
  {\path{doi:10.1137/S0097539703427203}}.

\bibitem{FominLSZ19}
Fedor Fomin, Daniel Lokshtanov, Saket Saurabh, and Meirav Zehavi.
\newblock {\em Kernelization: theory of parameterized preprocessing}.
\newblock Cambridge University Press, 2019.
\newblock \href {https://doi.org/10.1017/9781107415157}
  {\path{doi:10.1017/9781107415157}}.

\bibitem{Fomin10}
Fedor~V. Fomin.
\newblock Protrusions in graphs and their applications.
\newblock In Venkatesh Raman and Saket Saurabh, editors, {\em Parameterized and
  Exact Computation - 5th International Symposium, {IPEC} 2010, Chennai, India,
  December 13-15, 2010. Proceedings}, volume 6478 of {\em Lecture Notes in
  Computer Science}, page~3. Springer, 2010.
\newblock \href {https://doi.org/10.1007/978-3-642-17493-3\_2}
  {\path{doi:10.1007/978-3-642-17493-3\_2}}.

\bibitem{FominLST20}
Fedor~V. Fomin, Daniel Lokshtanov, Saket Saurabh, and Dimitrios~M. Thilikos.
\newblock Bidimensionality and kernels.
\newblock {\em {SIAM} J. Comput.}, 49(6):1397--1422, 2020.
\newblock \href {https://doi.org/10.1137/16M1080264}
  {\path{doi:10.1137/16M1080264}}.

\bibitem{kernel-book-fomin}
Fedor~V. Fomin, Daniel Lokshtanov, Saket Saurabh, and Meirav Zehavi.
\newblock {\em Kernelization: Theory of Parameterized Preprocessing}.
\newblock Cambridge University Press, 2019.
\newblock \href {https://doi.org/10.1017/9781107415157}
  {\path{doi:10.1017/9781107415157}}.

\bibitem{branch-tree-decomp-hicks}
Illya~V Hicks, Arie~MCA Koster, and Elif Koloto{\u{g}}lu.
\newblock Branch and tree decomposition techniques for discrete optimization.
\newblock In {\em Emerging Theory, Methods, and Applications}, pages 1--29.
  INFORMS, 2005.

\bibitem{Iwata17}
Yoichi Iwata.
\newblock Linear-time kernelization for feedback vertex set.
\newblock In Ioannis Chatzigiannakis, Piotr Indyk, Fabian Kuhn, and Anca
  Muscholl, editors, {\em 44th International Colloquium on Automata, Languages,
  and Programming, {ICALP} 2017, July 10-14, 2017, Warsaw, Poland}, volume~80
  of {\em LIPIcs}, pages 68:1--68:14. Schloss Dagstuhl - Leibniz-Zentrum
  f{\"{u}}r Informatik, 2017.
\newblock \href {https://doi.org/10.4230/LIPIcs.ICALP.2017.68}
  {\path{doi:10.4230/LIPIcs.ICALP.2017.68}}.

\bibitem{ising-model-jerrum}
Mark Jerrum and Alistair Sinclair.
\newblock Polynomial-time approximation algorithms for the {I}sing model.
\newblock {\em {SIAM} J. Comput.}, 22(5):1087--1116, 1993.

\bibitem{KimST18}
Eun~Jung Kim, Maria~J. Serna, and Dimitrios~M. Thilikos.
\newblock Data-compression for parametrized counting problems on sparse graphs.
\newblock In Wen{-}Lian Hsu, Der{-}Tsai Lee, and Chung{-}Shou Liao, editors,
  {\em 29th International Symposium on Algorithms and Computation, {ISAAC}
  2018, December 16-19, 2018, Jiaoxi, Yilan, Taiwan}, volume 123 of {\em
  LIPIcs}, pages 20:1--20:13. Schloss Dagstuhl - Leibniz-Zentrum f{\"{u}}r
  Informatik, 2018.
\newblock \href {https://doi.org/10.4230/LIPIcs.ISAAC.2018.20}
  {\path{doi:10.4230/LIPIcs.ISAAC.2018.20}}.

\bibitem{LokshtanovMSZ23}
Daniel Lokshtanov, Pranabendu Misra, Saket Saurabh, and Meirav Zehavi.
\newblock Kernelization of counting problems.
\newblock {\em CoRR}, abs/2308.02188, 2023.
\newblock \href {http://arxiv.org/abs/2308.02188} {\path{arXiv:2308.02188}},
  \href {https://doi.org/10.48550/arXiv.2308.02188}
  {\path{doi:10.48550/arXiv.2308.02188}}.

\bibitem{glauber-dynamics-luby}
Michael Luby and Eric Vigoda.
\newblock Fast convergence of the {G}lauber dynamics for sampling independent
  sets.
\newblock {\em Random Struct. Algorithms}, 15(3-4):229--241, 1999.

\bibitem{McCarthin06}
Catherine McCartin.
\newblock Parameterized counting problems.
\newblock {\em Annals of Pure and Applied Logic}, 138(1):147--182, 2006.
\newblock \href {https://doi.org/10.1016/j.apal.2005.06.010}
  {\path{doi:10.1016/j.apal.2005.06.010}}.

\bibitem{network-motifs-ron}
Ron Milo, Shai Shen-Orr, Shalev Itzkovitz, Nadav Kashtan, Dmitri Chklovskii,
  and Uri Alon.
\newblock Network motifs: Simple building blocks of complex networks.
\newblock {\em Science (New York, N.Y.)}, 298:824--7, 11 2002.
\newblock \href {https://doi.org/10.1126/science.298.5594.824}
  {\path{doi:10.1126/science.298.5594.824}}.

\bibitem{NishimuraRT05}
Naomi Nishimura, Prabhakar Ragde, and Dimitrios~M. Thilikos.
\newblock Parameterized counting algorithms for general graph covering
  problems.
\newblock In Frank K. H.~A. Dehne, Alejandro L{\'{o}}pez{-}Ortiz, and
  J{\"{o}}rg{-}R{\"{u}}diger Sack, editors, {\em Algorithms and Data
  Structures, 9th International Workshop, {WADS} 2005, Waterloo, Canada, August
  15-17, 2005, Proceedings}, volume 3608 of {\em Lecture Notes in Computer
  Science}, pages 99--109. Springer, 2005.
\newblock \href {https://doi.org/10.1007/11534273\_10}
  {\path{doi:10.1007/11534273\_10}}.

\bibitem{dempster-rule-orponen}
Pekka Orponen.
\newblock Dempster's rule of combination is {\#}p-complete.
\newblock {\em Artif. Intell.}, 44(1-2):245--253, 1990.

\bibitem{approximate-reasoning-roth}
Dan Roth.
\newblock On the hardness of approximate reasoning.
\newblock {\em Artif. Intell.}, 82(1-2):273--302, 1996.

\bibitem{ratcatcher-branchwidth-seymour}
Paul~D. Seymour and Robin Thomas.
\newblock Call routing and the ratcatcher.
\newblock {\em Comb.}, 14(2):217--241, 1994.

\bibitem{Thilikos21}
Dimitrios~M. Thilikos.
\newblock Compactors for parameterized counting problems.
\newblock {\em Comput. Sci. Rev.}, 39:100344, 2021.
\newblock \href {https://doi.org/10.1016/j.cosrev.2020.100344}
  {\path{doi:10.1016/j.cosrev.2020.100344}}.

\bibitem{Thomasse10}
St{\'{e}}phan Thomass{\'{e}}.
\newblock A $4k^2$ kernel for feedback vertex set.
\newblock {\em {ACM} Trans. Algorithms}, 6(2):32:1--32:8, 2010.
\newblock \href {https://doi.org/10.1145/1721837.1721848}
  {\path{doi:10.1145/1721837.1721848}}.

\bibitem{Thurley07}
Marc Thurley.
\newblock Kernelizations for parameterized counting problems.
\newblock In Jin{-}yi Cai, S.~Barry Cooper, and Hong Zhu, editors, {\em Theory
  and Applications of Models of Computation, 4th International Conference,
  {TAMC} 2007, Shanghai, China, May 22-25, 2007, Proceedings}, volume 4484 of
  {\em Lecture Notes in Computer Science}, pages 703--714. Springer, 2007.
\newblock \href {https://doi.org/10.1007/978-3-540-72504-6\_64}
  {\path{doi:10.1007/978-3-540-72504-6\_64}}.

\bibitem{Toda91}
Seinosuke Toda.
\newblock {PP} is as hard as the polynomial-time hierarchy.
\newblock {\em {SIAM} J. Comput.}, 20(5):865--877, 1991.
\newblock \href {https://doi.org/10.1137/0220053} {\path{doi:10.1137/0220053}}.

\bibitem{Valiant79}
Leslie~G. Valiant.
\newblock The complexity of computing the permanent.
\newblock {\em Theor. Comput. Sci.}, 8:189--201, 1979.
\newblock \href {https://doi.org/10.1016/0304-3975(79)90044-6}
  {\path{doi:10.1016/0304-3975(79)90044-6}}.

\bibitem{theo-physics-harary}
D.~J.~A. Welsh.
\newblock Graph theory and theoretical physics.
\newblock {\em The Mathematical Gazette}, 54(390):432–433, 1970.
\newblock \href {https://doi.org/10.2307/3613919} {\path{doi:10.2307/3613919}}.

\end{thebibliography}

\appendix 
\clearpage
\section{An FPT algorithm for \#minFVS} \label{app:fpt-cfvs-algo}
\cfvsFPTAlgo*
\begin{proof}
The pseudocode of an algorithm to solve the disjoint minimum feedback vertex set counting problem is given in \autoref{alg:dj-cfvs} and is based on the algorithm by Cao et al~\cite{fvs-new-measure-cao}. In fact, our algorithm solves a slightly more general version of the problem, where the vertices of the graph $G$ are weighted by a function $w \colon V(G) \to \mathbb{N}$. In case $G$ has no FVS of size at most $k$ that is disjoint from $W$, then \textsc{\#DJ-FVS}($G$, $w$, $W$, $k$) will output a pair $(a, b)$ with $a=\infty$ and $b=0$. Otherwise, $a$ will be the size of a minimum FVS disjoint of $W$ and  
\[b = \sum_{\substack{|S| = a,\\S \in \mathrm{FVS}(G),\\ S \cap W = \emptyset }} \prod_{v \in S} w(v).\]
We refer to this value as the \emph{weighted disjoint minimum FVS sum} of $G$. The weight of a vertex $v$ essentially models the number of distinct alternatives there are for a vertex $v$ that, in the context of choosing them for a minimum FVS of $G$, achieve the same result. When we assign a weight of one to each vertex of a graph $G$, then the weighted minimum FVS sum of $G$ is equal to the number of minimum feedback vertex sets of $G$.

In the pseudocode we make use of binary operator $\oplus$, which is defined to work on pairs as follows:
\[ (a_1, b_1) \oplus (a_2, b_2) = \begin{cases}
    (a_1, b_1) & \mbox{if } a_1 < a_2\\
    (a_2, b_2) & \mbox{if } a_1 > a_2\\ 
    (a_1, b_1 + b_2) & \mbox{if } a_1 = a_2\\ 
\end{cases}\]
For the operators $+$ and $\cdot$ we assume element-wise functionality when applied to pairs, i.e. $(a_1, b_1) + (a_2, b_2) = (a_1 + a_2, b_1 + b_2)$. Furthermore, for a weight function $w$ of $G$ and $X \subseteq V(G)$, we use $w\vert_X$ to denote the restriction of $w$ to $X$. 

We shall now explain how the \textsc{\#DJ-FVS} algorithm works by going over the pseudocode in \Cref{alg:dj-cfvs}. It makes use of a branching strategy with measure function $\ell+k$, where $\ell$ is the number of connected components of $G[W]$. 

Lines 1-3 are the base cases of the algorithm. In lines 4-5 we remove vertices of degree at most one, which corresponds to (R2). In lines 6-7 we contract the chains of $G$ existing in $G-W$ one edge at a time. Note for line 8 that $H$ is a forest since $W$ is an FVS of $G$. For lines 9-10, if a vertex $v$ would form a cycle with $W$ it should be contained in all solutions so we recurse on this choice. In lines 11-14, vertex $v$ has at least two neighbors in $W$, but since $G[W \cup \{v\}]$ does not form a cycle these neighbors must be in different connected components of $G[W]$. Thus we branch on $v$ not being in a solution, which corresponds with adding $v$ to $W$, and on $v$ being in a solution by removing it from the graph. In the former branch $\ell$ decreases, while in the latter $k$ decreases. 

In line 15, we choose a vertex $v \in V(H)$ that is not a leaf of tree $H$ such that at most one of its neighbors in $H$ is not a leaf of $H$.
Note that such a vertex always exists for a tree that does not consist of only leaves. Furthermore, at this point of the algorithm no tree of $H$ can consist of only leaves. If a tree of $H$ consists of a single leaf, then either it has degree one in $G$, but then lines 4-5 would have gotten rid of it, or it has degree at least two, in which case the if condition on line 9 or the if condition on line 11 would have been satisfied. If a tree of $H$ consists of two leaves, then either both have degree two in which case the edge between them would have been contracted by lines 6-7, or one of the if statements of line 9 or line 11 would have been applicable to one of the two leaves.

First consider the case where $v$ has one neighbor in $W$ (lines 16-22). In that case we pick $c$ in line 16 to be a child of $v$. For these two vertices, we branch over all possible combinations of whether or not they should be in a solution, while realizing that a minimum FVS that contains $v$ can never contain $c$. Note that if $G[W \cup \{v, c\}]$ does not form a cycle, both $v$ and $c$ must have a neighbor in a different connected component of $W$ so the branch in line 19 decreases $\ell$. The branches on line 20 and 21 decrease $k$ while not increasing $\ell$. 

Finally, we have the case that $v$ has no neighbors in $W$ (lines 23-31). That must mean that $v$ has at least two children, which are all leaves by how we chose $v$. If $v$ would have had only one child, then the edge between $v$ and the child would have been contracted in lines 6-7. Thus we let $c_1$ and $c_2$ be two distinct children of $v$. Again, we branch over all viable combinations of how these three vertices can be part of solutions. The logic here is similar to that of the previous case. Here as well, in all branches, the measure $k + \ell$ decreases. 

Since the algorithm branches in at most five directions and the time per iteration is polynomial in $n$, the runtime of the disjoint algorithm becomes $5^{k+\ell} \cdot n^{O(1)}$.
We can use \autoref{alg:dj-cfvs} to compute $\cmfvs(G,k)$ by taking an FVS $Z$ of $G$ and running the disjoint algorithm for all subsets of $Z$, simulating the ways an FVS can intersects $Z$. The pseudocode for this compression algorithm can be seen in \autoref{alg:compr-cfvs}. 
To get an FVS of $G$ of size at most $k$, we can simply run one of the existing algorithms designed for this, for example the one by Cao et al.~\cite{fvs-new-measure-cao} which runs in time $\bO(3.83^k) \cdot \poly (n)$. If this reports that no such FVS exists, we output $\cmfvs(G,k) = 0$. Otherwise, we use the found FVS for the compression algorithm.
Using the fact that the FVS used by the compression algorithm is of size at most $k$, we can bound the runtime of the complete algorithm to compute $\cmfvs(G,k)$ at $26^k \cdot n^{O(1)}$.
\end{proof}
\begin{algorithm}[h]
\caption{\textsc{\#FVS-compression}($G$, $k$, $Z$)}
\hspace*{\algorithmicindent}\textbf{Input:} Graph $G$, integer $k$, FVS $Z$ of $G$ of size at most $k$ \\
\hspace*{\algorithmicindent}\textbf{Output:} A pair $(a,b)$ with $a$ being the feedback vertex number of $G$ and \\
\hspace*{\algorithmicindent}$b = \cmfvs(G,k)$
\begin{algorithmic}[1]  
\State $s = (\infty, 0)$
\For{$X_Z \subseteq Z$}
\State $s' = (|X_Z|, 0) + $\Call{\#DJ-FVS}{$G - X_Z$, $w$, $Z \setminus X_Z$, $k-|X_Z|$} with $\forall v \in V(G - X_Z): w(v) = 1$
\State $s = s \oplus s'$
\EndFor
\State \Return s
\end{algorithmic}
\label{alg:compr-cfvs}
\end{algorithm}

%\[\sum^k_{i=0} {k \choose i} 5^{2(k-i)} \cdot n^{O(1)} = (1 + 25)^k  \cdot n^{O(1)} = 26^k \cdot n^{O(1)}\]
% \renewcommand{\algorithmicrequire}{\textbf{Input:}}
% \renewcommand{\algorithmicensure}{\textbf{Output:}}
\begin{algorithm}[h]
\caption{\textsc{\#DJ-FVS}($G$, $w$, $W$, $k$)}
    \hspace*{\algorithmicindent}\textbf{Input:} Graph $G$, weight function $w \colon V(G) \to \mathbb{N}$, FVS $W$ of $G$, integer $k$  \\
    \hspace*{\algorithmicindent}\textbf{Parameter:} $k + \ell$, with $\ell = \#$ components of $G[W]$ \\
    \hspace*{\algorithmicindent}\textbf{Output:} A pair $(a, b)$ where $a$ is the size of a minimum FVS $S$ of $G$ such that\\
    \hspace*{\algorithmicindent}$S \cap W = \emptyset$ and $b$ is the weighted disjoint minimum FVS sum of $G$. If no such FVS\\
    \hspace*{\algorithmicindent}of size at most $k$ exists, then $(a, b) = (\infty, 0)$.
\begin{algorithmic}[1]  
\If{$k < 0$} \Return $(\infty,0)$ \EndIf
\If{$G[W]$ has a cycle} \Return $(\infty,0)$  \EndIf
\If{$G-W$ is empty} \Return $(0,1)$  \EndIf
\If{$\exists v \in V(G-W)$: $\mathit{deg}_G(v) \leq 1$} 
    \State \Return \Call{\#DJ-FVS}{$G-v$, $w\vert_{V(G) \setminus \{v\}}$ ,$W$,$k$} 
\EndIf
\If{$\exists \{v,u\} \in E(G):$ $\deg_G(v) = \deg_G(u) = 2$ and $v,u \notin W$}
\State \Return \Call{\#DJ-FVS}{$G'$, $w'$ $W$, $k$}, where $G'$ is $G$ with edge $\{v,u\}$ contracted to a single vertex $s$ and $w'$ is the weight function $w\vert_{V(G')}$ with $w'(s) = w(v) + w(u)$.
\EndIf
\State Let $H$ be forest $G-W$.
\If{$\exists v \in V(H)$: $G[W \cup \{v\}]$ has a cycle} 
    \State \Return $(1,0) + (1,w(v)) \cdot $ \Call{\#DJ-FVS}{$G-v$, $w\vert_{V(G)\setminus \{v\}}$, $W$, $k-1$} 
\EndIf
\If{$\exists v \in V(H)$: $|N_{G}(v) \cap W| \geq 2$}
    \State $X_0 = $\Call{\#DJ-FVS}{$G$, $w$, $W \cup \{v\}$, $k$}
    \State $X_1 = (1,0) + (1,w(v)) \cdot$ \Call{\#DJ-FVS}{$G-v$, $w\vert_{V(G)\setminus \{v\}}$, $W$, $k-1$}
    \State \Return $X_0 \oplus X_1$
\EndIf
\State Let $v \in V(H)$ be a vertex that is not a leaf of $H$ such that at most one vertex in $N_H(v)$ is not a leaf of $H$.
\If{$|N_G(v) \cap W| = 1$}
    \State Pick $c \in V(H)$ such that $N_{G}(c) = \{v, x\}$ for some $x \in W$
    \If{$G[W \cup \{v, c\}]$ forms a cycle} $X_{00} = (\infty,0)$ 
    \Else\space $X_{00} = $ \Call{\#DJ-FVS}{$G$, $w$, $W \cup \{v, c\}$, $k$}
    \EndIf
    \State $X_{10} = (1,0) + (1, w(v)) \cdot$ \Call{\#DJ-FVS}{$G-v$, $w\vert_{V(G)\setminus \{v\}}$, $W$, $k-1$}
    \State $X_{01} = (1,0) +  (1, w(c)) \cdot$ \Call{\#DJ-FVS}{$G-c$, $w\vert_{V(G)\setminus \{c\}}$,  $W \cup \{v\}$, $k-1$}
    \State \Return $X_{00} \oplus X_{10} \oplus X_{01}$
\EndIf
\If{$|N_{G}(v) \cap W| = 0$}
    \State Pick $c_1, c_2 \in V(H), c_1 \neq c_2$ such that $N_{G}(c_1) = \{v, x\}$ and $N_{G}(c_2) = \{v, y\}$ for some $x,y \in W$
    \If{$G[W \cup \{v, c_1, c_2\}]$ forms a cycle} $X_{000} = (\infty,0)$ 
    \Else\space $X_{000} = $ \Call{\#DJ-FVS}{$G$, $w$, $W \cup \{v, c_1, c_2\}$, $k$}
    \EndIf
    \State $X_{100} = (1,0) + (1,w(v)) \cdot$ \Call{\#DJ-FVS}{$G-v$, $w\vert_{V(G)\setminus \{v\}}$, $W$, $k-1$}
    \State $X_{010} = (1,0) + (1,w(c_1)) \cdot$ \Call{\#DJ-FVS}{$G-c_1$, $w\vert_{V(G)\setminus \{c_1\}}$, $W \cup \{v, c_2\}$, $k-1$}
    \State $X_{001} = (1,0) + (1, w(c_2)) \cdot$ \Call{\#DJ-FVS}{$G-c_2$, $w\vert_{V(G)\setminus \{c_2\}}$,  $W \cup \{v, c_1\}$, $k-1$}
    \State $X_{011} = (2,0) + (1,w(c_1) \cdot w(c_2)) \cdot$ \Call{\#DJ-FVS}{$G-\{c_1,c_2\}$, $w\vert_{V(G)\setminus \{c_1,c_2\}}$, $W \cup \{v\}$, $k-2$}
    \State \Return $X_{000} \oplus X_{100} \oplus X_{010} \oplus X_{001} \oplus X_{011}$
\EndIf
\end{algorithmic}
\label{alg:dj-cfvs}
\end{algorithm}

\end{document}